\newcommand{\mycaption}[2]{\caption[#1]{\textbar\, \textbf{#1.} #2}}
\newlength\myindent
\tikzstyle{Arrow} = [
\theoremstyle{plain}
\newtheorem{theorem}{Theorem}
\newtheorem{proposition}{Proposition}
\newtheorem{algorithm}{Algorithm}
\newtheorem{definition}{Definition}
\theoremstyle{remark}
\newtheorem{remark}{Remark}
\newtheorem{corollary}{Corollary}
\newtheorem{assumption}{Assumption}
\newcommand*{\addFileDependency}[1]{
  \typeout{(#1)}
  \@addtofilelist{#1}
  \IfFileExists{#1}{}{\typeout{No file #1.}}
}
\begin{document}

\begin{frontmatter}
\title{Heterogeneous interventional indirect effects with multiple mediators: non-parametric and semi-parametric approaches}
\runtitle{Heterogeneous mediated effects}

\begin{aug}
\author[A]{\fnms{Max} \snm{Rubinstein}\ead[label = e1]{mrubinstein@rand.org}},
\author[B]{\fnms{Zach} \snm{Branson}\ead[label = e2,mark]{zach@stat.cmu.edu}}, \and
\author[B]{\fnms{Edward H.} \snm{Kennedy}\ead[label = e3,mark]{edward@stat.cmu.edu}}
\address[A]{RAND Corporation 
\printead{e1}}
\address[B]{Department of Statistics \& Data Science, Carnegie Mellon University,
\printead{e2,e3}}
\end{aug}

\begin{flushleft}
We propose semi- and non-parametric methods to estimate conditional interventional indirect effects in the setting of two discrete mediators whose causal ordering is unknown. Average interventional indirect effects have been shown to decompose an average treatment effect into a direct effect and interventional indirect effects that quantify effects of hypothetical interventions on mediator distributions. Yet these effects may be heterogeneous across the covariate distribution. We therefore consider the problem of estimating these effects at particular points. We first propose an influence-function based estimator of the \textit{projection} of the conditional effects onto a working model, and show that under some conditions we can achieve root-n consistent and asymptotically normal estimates of this parameter. Second, we propose a fully non-parametric approach to estimation and show the conditions where this approach can achieve oracle rates of convergence. Finally, we propose a sensitivity analysis for the conditional effects in the presence of mediator-outcome confounding given a bounded outcome. We propose estimating bounds on the conditional effects using these same methods, and show that these results easily extend to allow for influence-function based estimates of the bounds on the average effects. We conclude by demonstrating our methods to examine heterogeneous mediated effects with respect to the effect of COVID-19 vaccinations on depression via social isolation and worries about health during February 2021.
\end{flushleft}

\end{frontmatter}


\section{Introduction}

A goal of causal mediation analysis is to understand the mechanisms through which interventions work. ``Natural effects'' most directly pertain to the idea of mechanism \citep{miles2022causal} and decompose the individual-level treatment effect into pathways that work directly or via changes in mediator values. However, the identifying assumptions required to estimate these effects are unenforceable even in randomized experiments. These effects are also not generally identified in common applied settings that involve multiple mediators unless the mediators are considered jointly. ``Interventional effects'' were proposed as alternative causal estimands that are identifiable under weaker assumptions and in settings with multiple mediators (see, e.g., \cite{didelez2006proceedings}, \cite{vansteelandt2017interventional}). These effects conceptualize hypothetical interventions on the mediator distributions defined at specific covariate values. Unlike natural effects, these effects are identifiable in a sequentially randomized experiment. While the relationship between interventional effects and mechanisms acting at an individual-level is unclear \citep{miles2022causal}, interventional effects have gained popularity in applied research over the past several years. 

This popularity is also in part because the same statistical functionals yield alternative causal interpretations that are often of substantive interest. Specifically, under weaker assumptions these same methods can quantify disparity reductions achieved via interventions on some possibly mediating factor(s). For example, Vansteelandt and Daniel (2017) analyze disparities in breast-cancer survival among high and low socioeconomic status (SES) women. They consider a model where SES causes breast-cancer survival via a direct pathway and via cancer screening and treatment choices, so that SES takes the role of an exposure. They estimate that if low SES women had the same (conditional) distribution of cancer screening and treatment choices as high SES women, the observed disparity in breast-cancer survival between high and low SES women would be reduced by half. This effect requires conceiving of an unspecified hypothetical intervention that could shift the covariate-stratrum specific distributions of cancer screening and treatment choices among low SES women to match that of high SES women. However, this intervention does not require conceiving of potential outcomes with respect to SES, or more generally of the types of controversial counterfactual quantities required to conceive of natural effects \citep{vansteelandt2017interventional}.

To date the literature on interventional effects has primarily focused on estimating \textit{average} effects. We instead consider estimating \textit{conditional} effects across covariates that are possibly continuous. For example, consider the application from \cite{vansteelandt2017interventional}. One natural follow-up question might be how these disparity reductions change as a function of a woman's age. Even if the total disparity in cancer survival were constant across age, it remains possible that age moderates the interventional effects, and therefore also the proportion of the disparity that would be eliminated via such an intervention. These questions pertain to \textit{conditional} interventional indirect effects (CIIE). To our knowledge, proposed strategies to estimate the CIIE have been limited to parametric methods (see, e.g., \cite{vansteelandt2017interventional}, \cite{loh2020heterogeneous}), and the validity of the inferences are tied to assumptions that the models are correctly specified.

Our first contribution is to propose methods that allow for flexible non-parametric and machine learning methods for estimation. Specifically, we consider the setting of a binary intervention and two discrete-valued mediators whose causal ordering is unknown. We propose two estimation procedures: first, a semi-parametric projection-based approach; second, a fully non-parametric approach. Both procedures are conceptually simple, and involve a regression of an estimate of the uncentered influence function for the average effect onto the covariates. However, the semi-parametric approach targets a \textit{projection} of the CIIE onto a parametric model rather than the CIIE itself. Projection-based estimators have frequently been proposed in the context of different causal estimands (see, e.g., \cite{kennedy2019robust}, \cite{cuellar2020non}, \cite{kennedy2021semiparametric}). Our proposal extends this idea to this setting, and we show that under some conditions, root-n consistent and asymptotically normal estimates of the projection parameter are possible. The second proposal considers a fully non-parametric estimation procedure (a ``DR-Learner'') that targets the CIIE directly, extending results from \cite{kennedy2022}. While directly targeting the CIIE instead of its projection may seem preferable, we cannot in general obtain obtain root-n consistent estimates. Even so, we show that we can obtain oracle rates of convergence in some settings. Both the projection estimator and the DR-Learner substantially weaken the modeling assumptions used to date in the literature on estimating the CIIE. Moreover, these methods allow use of flexible non-parametric and machine-learning methods for estimation while still obtaining relatively fast rates of convergence.

Our methods, like most, require several identifying assumptions, including that the mediator-outcome relationship is unconfounded. A natural question is whether our estimators are sensitive to violations of this assumption. We therefore consider bounds on the CIIE and show that we can use both the projection-based approach and the DR-Learner to estimate these quantities. These methods naturally extend to allow for estimating bounds on the average effects, and we show that root-n consistent and asymptotically normal estimates of these bounds are possible under some conditions. Existing approaches frequently focus on the natural rather than interventional effects, and are often tied to strong parametric modeling assumptions (see, e.g., \cite{park2020estimation}, \cite{park2021sensitivity}, \cite{imai2010general}). Moreover, sensitivity analyses for conditional estimands is less seldom discussed (though see \cite{lindmark2018sensitivity}). 

Finally, we demonstrate these methods using an application previously considered in \cite{rubinstein2023}. This study sought to quantify the extent to which COVID-19 vaccines reduced self-reported depression via changes in social isolation versus worries about health among the COVID-19 Trends and Impact Survey (CTIS) respondents in February 2021. The authors only examined effect heterogeneity across discrete subgroups; moreover, they did not conduct a sensitivity analysis with respect to the interventional effect estimates. We revisit this analysis and model how the vote share for Joe Biden in the 2020 US presidential election in each respondent's county of residence moderated the interventional effects. We then demonstrate our sensitivity analysis for the average and the conditional interventional effects.

This paper proceeds as follows. In Section~\ref{sec:1} we review interventional effects, the required identifying assumptions, and efficient estimation. In Section~\ref{sec:2} we introduce the CIIE and propose the projection-based estimator and the DR-Learner. We establish conditions required for asymptotic normality and root-n consistency of the projection estimator, and for obtaining oracle rates of convergence for the DR-Learner. Section~\ref{sec:3} contains a simulation study demonstrating that these theoretic properties hold in practice. Section~\ref{sec:4} proposes our sensitivity analysis, Section~\ref{sec:5} contains our application, and Section~\ref{sec:6} contains a discussion of these results.

\section{Review}\label{sec:1}

We define the average interventional effects, the causal assumptions required to tie the causal targets to observed data, and efficient estimation of the observed data functionals. We largely summarize material covered in \cite{vansteelandt2017interventional} and \cite{benkeser2021nonparametric}, and refer to those papers for more details. We begin by outlining the setup and notation that we will use throughout.

\subsection{Setup and notation}

Assume that we observe $n$ i.i.d. samples of observations $Z_i = (V_i, W_i, M_{1i}, M_{2i}, A_i, Y_i)$, where $Y$ represents some outcome of interest (either continuous or discrete), $A$ represents a binary intervention, and $M_1$ and $M_2$ represent discrete-valued mediators. Finally, we let $[V, W]$ represent a matrix of either discrete or continuous covariates, where $W$ may be empty, and which we jointly denote as $X$.\footnote{We distinguish between $V$ and $W$ because we will estimate effects conditional on $V = v$, which may or may not include all elements of $X$.} The figure below illustrates the assumed relationships between the variables, with an arrow indicating a causal pathway. Importantly, we do not assume that we know the causal relationship between $M_1$ and $M_2$.

\begin{figure}[H]
\caption{Assumed data generating process}
\begin{center}
\begin{tikzpicture}
[
array/.style={rectangle split, 
	rectangle split parts = 3, 
	rectangle split horizontal, 
    minimum height = 2em
    }
]
 \node (3) {X = [V, W]};
 \node [right =of 3] (4) {A};
 \node [right =of 4] (5) {$M_1$};
 \node [below =of 5] (6) {$M_2$};
 \node [right =of 5] (7) {Y};
 \draw[Arrow] (3.east) -- (4.west);
 \draw[Arrow] (4.east) -- (5.west);
 \draw[Arrow] (6.north) -- (7.south);
 \draw[Arrow] (3) to [out = 25, in = 160] (7);
 \draw[Arrow] (3) to [out = 25, in = 160] (5);
 \draw[Arrow] (5) to (6);
 \draw[Arrow] (6) to (5);
 \draw[Arrow] (5) to (7);
 \draw[Arrow] (3) to (6);
 \draw[Arrow]  (4) to [out = 25, in = 160] (7);
 
 \draw[Arrow] (4) to (6);
\end{tikzpicture}
\end{center}
\end{figure}
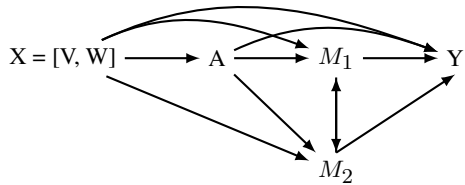

While this figure helps to motivate the problem, we will primarily rely on potential outcomes notation to define our assumptions. Specifically, we assume the potential outcomes $Y^{am_1m_2}$ under $A = a$, $M_1 = m_1$, and $M_2 = m_2$, and that $M_j^a$ represent the counterfactual outcome for the $j$-th mediator under $A = a$. For any discrete variable $C$ we use $p(c)$ as a short-hand for $P(C = c)$ throughout. 

We also define the following functions of the data. Let $\pi_a(X) = p(A = a \mid X)$ and $\mu_a(M_1, M_2, X) = \mathbb{E}[Y \mid A = a, X, M_1, M_2]$. We denote the joint mediator probabilities as $p(M_1, M_2 \mid X, A)$, and the marginal probabilities as $p(M_1 \mid X, A)$ and $p(M_2 \mid X, A)$, respectively. Following the notation in \cite{benkeser2021nonparametric}, we define the marginalized outcome models:

\begin{align}
    \nonumber&\mu_{a, M_1}(M_2, X) = \sum_{m_1} \mu_a(m_1, M_2, X)p(m_1 \mid a, X) \\
    \nonumber&\mu_{a, M_1\times M_2^'}(X) = \sum_{m_1, m_2} \mu_a(m_1, m_2, X)p(m_1 \mid a, X)p(m_2 \mid a', X) 
\end{align}
These quantities are defined with respect to marginalizing the outcome regression over $p(M_1 \mid a, X)$. We use this same notation to define quantities marginalized with respect to $p(M_1 \mid a', X)$ (e.g. $\mu_{a, M_1'\times M_2^'}(M_1, X)$).

We also denote sample averages using $\mathbb{P}_n(Z)$ and regression estimates as $\hat{\mathbb{E}}_n(Y \mid X)$. For possibly random functions $f$, we denote $\|f\|^2$ as the squared $L_2(\mathbb{P})$ norm, $\int f(z)^2 dP(z)$. We use the notation $a \lesssim b$ to indicate that $a \le C b$ for some universal constant $C$. Finally, for any random function $\hat{f}$ learned on an independent sample of size $n$, $D_1^n$, we let $\mathbb{P}\hat{f}(Z) = \int \hat{f}(z) dP(z \mid D_1^n)$. That is, $\mathbb{P}\hat{f}(Z)$ refers to the expected value of this estimated function conditional on the training sample, noting that for a fixed function $f$ this is equivalent to $\mathbb{E}(f)$.

\subsection{Interventional effects}

We can decompose the average effect $\psi = \mathbb{E}[Y^a - Y^{a'}]$ into the sum of the following quantities \citep{vansteelandt2017interventional}:

\begin{align} 
\psi_{IDE} &= \mathbb{E}\left(\sum_{m_1, m_2}\mathbb{E}[Y^{am_1m_2} - Y^{a'm_1m_2} \mid X]p(M_1^{a'} = m_1, M_2^{a'} = m_2 \mid X)\right) \\
\psi_{M_1} &= \mathbb{E}\left(\sum_{m_1, m_2}\mathbb{E}[Y^{am_1m_2} \mid X]\{p(M_1^a = m_1 \mid X) - p(M_1^{a'} = m_1 \mid X)\}p(M_2^{a'} = m_2 \mid X)\right) \\
\psi_{M_2} &= \mathbb{E}\left(\sum_{m_1, m_2}\mathbb{E}[Y^{am_1m_2} \mid X]\{p(M_2^a = m_2 \mid X) - p(M_2^{a'} = m_2 \mid X)\}p(M_1^a = m_1 \mid X)\right) \\
\psi_{Cov} &= \mathbb{E}\left(\sum_{m_1, m_2}\mathbb{E}[Y^{am_1m_2} \mid X]\{p(M_1^a = m_1, M_2^a = m_2 \mid X) - p(M_1^{a} = m_1 \mid X)p(M_2^a = m_2 \mid X) \right.\\ 
\nonumber&\left.- [p(M_1^{a'} = m_1, M_2^{a'} = m_2 \mid X) - p(M_1^{a'} = m_1 \mid X)p(M_2^{a'} = m_2 \mid X)]\}\right)
\end{align}

The interventional direct effect $(\psi_{IDE})$ represents the contrast in mean potential outcomes when we set $A = a$ for everyone in the population versus $A = a'$, while drawing the mediators randomly for each individual from their counterfactual joint distribution under $A = a'$ given subject-specific covariates $X$. By contrast the interventional indirect effect via $M_1$ $(\psi_{M_1})$ holds $A$ fixed at $a$ for all individuals, and considers the average contrast between giving everyone subject-specific values of $M_1$ drawn randomly from the counterfactual distribution under $A = a$ versus the distribution under $A = a'$ given covariates $X$, and simultaneously drawing $M_2$ from the counterfactual distribution under $A = a'$ given covariates $X$. The interventional indirect effect through $M_2$ $\left(\psi_{M_2}\right)$ is defined analogously. The covariant effect $(\psi_{Cov})$ is the difference between the total effect and all three of these effects and captures the effect of the dependence of the mediators on each other. This decomposition also holds switching the role of $a$ and $a'$: \cite{vansteelandt2017interventional} indexes these estimands by $a$ to make this distinction, while for conceptual simplicity we do not. Additionally, for the purposes of this paper we focus on effects via $M_1$; however, our proposed methods can be used for any of these estimands.

\subsection{Identification}\label{sec:identification}

The estimands defined above require knowledge about the potential outcomes under each treatment and mediator value for each subject. However, for any individual we do not observe all of these quantities. We therefore make the following identifying assumptions to connect these causal quantities to the observed data distribution. First, we assume consistency, where for $a^\star \in \{a, a'\}$ and any $(m_1, m_2)$:

\begin{assumption}\label{assumption:consistency}[Consistency]
\begin{align*}
    & A = a^\star \implies (M_1, M_2) = (M_1^{a^\star}, M_2^{a^\star}) \\
    & A = a^\star, M_1 = m_1, M_2 = m_2 \implies Y = Y^{a^\star m_1m_2} 
\end{align*}
\end{assumption}

Consistency precludes the potential outcomes for any individual from depending on another individual's treatment or mediator assignment. We next assume sequential ignorability:

\begin{assumption}\label{assumption:seqig}[Sequential ignorability]
\begin{align}
    &\label{eqn:unconf1}Y^{a^\star m_1m_2} \perp A \mid X \\
    &\label{eqn:unconf2}(M_1^{a^\star}, M_2^{a^\star}) \perp A \mid X \\
    &\label{eqn:unconf3}Y^{a^\star m_1m_2} \perp (M_1, M_2) \mid (A = a^\star, X) 
\end{align}
\end{assumption}

Sequential ignorability consists of three assumptions: equations (\ref{eqn:unconf1})-(\ref{eqn:unconf2}), or Y-A and M-A ignorability, state that $A$ is independent of $Y^{am_1m_2}$ and $(M_1^a, M_2^a)$ given $X$. Equation (\ref{eqn:unconf3}), or Y-M ignorability, states that $(M_1, M_2)$ is independent of $Y^{am_1m_2}$ given $X$ and $A$. We next assume positivity:

\begin{assumption}\label{assumption:positivity}[Positivity]
\begin{align}
    &\label{eqn:apositivity}P(\min_{a^\star} \pi_{a^\star}(X) > \epsilon) = 1, \qquad P(\min_{m_1, m_2, a^\star}p(m_1, m_2 \mid a^\star, X) > \epsilon) = 1, \qquad \epsilon > 0 
\end{align}
\end{assumption}
Equation (\ref{eqn:apositivity}) implies that the propensity scores are bounded away from zero and one and the joint mediator probabilities are bounded away from zero with probability one.\footnote{Equation \ref{eqn:apositivity} is technically stronger than necessary. For example, for any $x$, we only need that $p(m_1, m_2 \mid a, x) > 0$ whenever $p(m_1, m_2 \mid a', x) > 0$.} Under these assumptions we can write $\psi_{M_1}$ in terms of the observed data distribution.

\begin{align}
    \psi_{M_1} &= \mathbb{E}\left(\sum_{m_1, m_2}\mu_a(X, m_1, m_2)\{p(m_1 \mid X, a) - p(m_1 \mid X, a')\}p(m_2 \mid X, a')\right) = \mathbb{E}\left(\psi_{M_1}(X)\right)\label{eqn:parameter1}
\end{align}

The functional in \eqref{eqn:parameter1} reflects other interesting causal parameters under weaker assumptions. For example, consider the case where \eqref{eqn:unconf1} holds but \eqref{eqn:unconf2} does not. This situation is frequently relevant in cases where we are using interventional indirect effects to understand disparities and $A$ is an indicator of some subgroup of interest (for example, Black versus White individuals). In that case \eqref{eqn:parameter1} still targets the causal contrast:

\begin{align}\label{eqn:disparityparam}
    \mathbb{E}\left(\sum_{m_1, m_2}\mathbb{E}[Y^{m_1m_2} \mid a, X][p(m_1 \mid a, X) - p(m_1 \mid a', X)]p(m_2 \mid a', X)\right)
\end{align}

This estimand tells us about how much an intervention on the distribution of $M_1$ could reduce an observed disparity in some outcome of interest \citep{vansteelandt2017interventional}. However, in practice we must carefully consider the relevant conditioning sets when defining these quantities \citep{jackson2020meaningful}.

\subsection{Estimation}\label{ssec:review-estimation}

Regardless of the targeted causal quantity, \eqref{eqn:parameter1} reflects a statistical parameter that is a fixed function of the observed data and we require methods to estimate this quantity. One natural idea would be to estimate each function in \eqref{eqn:parameter1} separately, plug them into that same expression, and take the empirical average. If we were to use correctly specified parametric models to estimate the nuisance functions the resulting estimate would be consistent for $\psi_{M_1}$ and converge at $n^{-1/2}$ rates. Unfortunately, this is unlikely to occur in practice. We could instead estimate these functions flexibly using non-parametric models; however, the subsequent estimator will in general inherit the non-parametric rate of convergence of the slowest estimated nuisance function, a rate generally slower than $n^{-1/2}$ \citep{kennedy2022eifs}. A different estimation strategy instead utilizes the so-called ``influence curve'' of \eqref{eqn:parameter1}. Influence curves are important quantities related to statistical functionals that naturally suggest efficient estimators without parametric modeling assumptions \citep{kennedy2022eifs}.

For example, \cite{benkeser2021nonparametric} previously showed that $\psi_{M_1}$ has the efficient influence curve $\varphi(Z; \eta) - \psi_{M_1}$, where the uncentered influence curve $\varphi(Z; \eta)$ takes the form

\begin{align}\label{eqn:eifm1}
    \varphi(Z; \eta) &= \frac{\mathds{1}(A = a)}{\pi_a(X)}\frac{\{p(M_1 \mid a, X) - p(M_1 \mid a', X)\}p(M_2 \mid a', X)}{p(M_1, M_2, \mid a, X)}(Y - \mu_a(M_1, M_2, X)) \\
    \nonumber &+ \frac{\mathds{1}(A = a)}{\pi_a(X)}\{\mu_{a, M_2^'}(M_1, X) - \mu_{a, M_1\times M_2^'}(X)\} \\ 
    \nonumber &- \frac{\mathds{1}(A = a')}{\pi_{a'}(X)}\{\mu_{a, M_2^'}(M_1, X) - \mu_{a, M_1^'\times M_2^'}(X)\} \\
    \nonumber &+ \frac{\mathds{1}(A = a')}{\pi_{a'}(X)}\left(\mu_{a, M_1}(M_2, X) - \mu_{a, M_1\times M_2'}(X) - (\mu_{a, M_1^'}(M_2, X) - \mu_{a, M_1^'\times M_2^'}(X))\right) \\
    \nonumber &+ \mu_{a, M_1\times M_2^'}(X) - \mu_{a, M_1^'\times M_2^'}(X)
\end{align}
and where 

\begin{align}
\label{eqn:eta}\eta = [p(M_1, M_2 \mid a, X), p(M_1, M_2 \mid a', X), \mu_a(M_1,M_2,X), \pi_a(X)]    
\end{align}
Using $\varphi(Z)$, we can construct the so-called ``one-step'' estimator of $\psi_{M_1}$,

\begin{align}\label{eqn:ose}
\hat{\psi}^{os}_{M_1} = \mathbb{P}_n[\varphi(Z; \hat{\eta})]
\end{align}
This estimator involves estimating $\eta$, plugging these estimates into \eqref{eqn:eifm1} yielding $\varphi(Z; \hat{\eta})$, and taking the empirical mean. As shown by \cite{benkeser2021nonparametric}, the following conditions are sufficient for this estimator to yield root-n consistent and asymptotically normal estimates. 

\begin{enumerate}
    \item The estimates $\hat{\eta}$ are obtained via sample-splitting
    \item $\|\varphi(Z; \hat{\eta}) - \varphi(Z; \eta)\|^2 = o_p(1)$
    \item $\|\hat{\eta} - \eta\| = o_p(n^{-1/4})$
\end{enumerate}

Condition (1) can be enforced in the estimation procedure. Condition (2) requires that the mean-squared error of the estimated influence-function converges in probability to zero at any rate. This would require, for example, that the propensity-scores and their estimates be bounded away from zero and one, the joint mediator probability $p(m_1, m_2 \mid a, x)$ and its estimates are bounded away from zero, and that the nuisance estimates $\hat{\eta}$ are consistent at any rate for $\eta$. Finally, condition (3) holds for a variety of non-parametric estimators of $\eta$ under structural assumptions on the underlying nuisance functions: for example, on their smoothness or sparsity. 

Influence-function based estimators can therefore attain $n^{-1/2}$ convergence rates without parametric modeling assumptions and allowing for non-parametric estimates of the nuisance parameters. Intuitively, the reason is that we can essentially ignore the nuisance estimation error -- assuming that we estimate the nuisance functions well enough (condition (3)). The asymptotics then follow as if we had an oracle that gave us the fixed (but unknown) function $\varphi(Z; \eta)$ and we took the average \citep{kennedy2022eifs}. This remarkable fact occurs because the error of influence-function based estimators is a function of the \textit{product of errors} in the nuisance estimation. By contrast, standard methods are in general \textit{linear} in the nuisance estimation. As a result, influence-function based estimators can tolerate relatively slow rates of convergence for the nuisance estimates (e.g. rates achieved by non-parametric methods) while still obtaining faster rates of convergence for the estimand itself. Analogous to results shown by \cite{kennedy2022} for estimating the conditional average treatment effect (CATE), we will next show that we can also leverage $\varphi(Z; \eta)$ to achieve relatively fast rates of convergence when estimating the CIIE.

\section{Conditional effects}\label{sec:2}

In contrast to the average effects $\psi_{M_1}$, we consider estimating the effect at a given point $V = v$, recalling that $X = [W, V]$, so that $V$ is a subset of the covariates $X$. In a slight abuse of notation,\footnote{The conditioning in the inner expectation is more precisely written as $[W, V = v]$ rather than $X$.} we define this estimand as:

\begin{align}\label{eqn:condfx1}
    \psi_{M_1}(v) &= \mathbb{E}\left[\sum_{m_1, m_2}\mathbb{E}[Y^{am_1m_2} \mid X]\{p(M_1^a = m_1 \mid X) - p(M_1^{a'} = m_1 \mid X)\}p(M_2^{a'} = m_2 \mid X) \mid V = v\right] 
\end{align}
Under Assumptions~(\ref{assumption:consistency})-(\ref{assumption:positivity}), these parameters are identified in the observed data as

\begin{align}
\psi_{M_1}(v) &= \mathbb{E}\left(\sum_{m_1, m_2}\mu_a(X, m_1, m_2)\{p(m_1 \mid X, a) - p(m_1 \mid X, a')\}p(m_2 \mid X, a') \mid V = v\right) 
\end{align}
A natural question is how well we can estimate these effects. Noting that: 

\begin{align*}
\mathbb{E}[\varphi(Z; \eta) \mid V = v] = \psi_{M_1}(v)
\end{align*}
we can think of an ``oracle'' influence-function based estimator as providing a benchmark for comparison for any other CIIE estimate:

\begin{align}\label{eqn:oracle}
\hat{\mathbb{E}}_n[\varphi(Z; \eta) \mid V = v]
\end{align}
Just as the ``oracle'' estimate of $\psi_{M_1}$ would be an empirical average of the true influence function, we can think of \eqref{eqn:oracle} as a \textit{local} average of the true influence function around the point $V = v$. As long as Assumption~\ref{assumption:positivity} holds, we expect that the rate of convergence of \eqref{eqn:oracle} provides a valid, though possibly unachievable, lower bound for the rate of convergence for any estimator of the CIIE. This follows from noting that the convergence rate of this estimator is equivalent, up to constants, of replacing $\mu_a(X, m_1, m_2)$ with the potential outcomes $Y^{am_1m_2}$ in the expression for $\psi_{M_1}(X)$, and regressing this quantity onto $V$. While the remaining discussion compares our estimators against the oracle estimate, we expect that the oracle rates provide, in some settings, the best possible (minimax optimal) rates of convergence.\footnote{In fact this is true for the proposed projection estimator. We show later that this quantity is estimable at root-n rates in some settings, as with the average effect. However, this statement is solely conjecture for the proposed DR-Learner.}

At a high-level, both ideas we propose -- the projection-estimator and the DR-Learner -- substitute the estimated influence function $\varphi(Z; \hat{\eta})$ for $\varphi(Z; \eta)$ into  a  regression model $\hat{\mathbb{E}}_n$. A key difference between these approaches is that the projection-estimator uses a parametric model for the regression and targets a \textit{projection} of the CIIE, while the DR-Learner instead uses a non-parametric model and targets the CIIE itself. For either approach, we show the conditions where the corresponding oracle rates are attainable, analogous to results derived in \cite{kennedy2022}.  Importantly, the theoretic results assume that the nuisance estimates $\hat{\eta}$ are estimated on an independent sample from the regression estimate $\hat{\mathbb{E}}_n$. We therefore refer to this as a ``second-stage'' regression, reflecting the ordering of these two procedures. We discuss both procedures more below. 

\subsection{Projection estimator}

We first consider the case where the second-stage regression estimate $\hat{\mathbb{E}}_n(\cdot \mid V = v)$ is given by the \textit{parametric} model $g(v; \beta)$, where $\beta$ is a finite-dimensional parameter that minimizes some loss function. Specifically,

\begin{align}\label{eqn:projection}
    \beta = \arg\min_{\tilde{\beta}}\mathbb{E}[w(X)\ell\{\psi_{M_1}(X) - g(V; \tilde{\beta})\}]
\end{align}
Importantly, we need not assume $\psi_{M_1}(X) = g(V; \beta)$ for $g(v; \beta)$ to represent a meaningful target of inference. Under no assumptions about $\psi_{M_1}(X)$, $\beta$ represents a population parameter that characterizes a \textit{projection} of $\psi_{M_1}(X)$ onto $g(V; \beta)$. For simplicity, we focus on the case where $\beta$ minimizes the squared-error loss ($\ell(z) = z^2$). The weights $w(X)$ can vary to prioritize different parts of the covariate space when defining the projection, though in the simplest case we can take them to be uniform. This projection, while not necessarily representing the true parameter $\psi_{M_1}(v)$, can nevertheless represent a useful summary of this parameter. The interpretation of parametric models as defining projections is well-known, though perhaps underappreciated, in applied statistics, and is frequently used in more general settings when attempting to summarize characteristics of unknown data-generating processes (see, e.g., \cite{angrist2009mostly}, \cite{buja2019models}). 

To estimate this projection, we assume standard regularity conditions and differentiate \eqref{eqn:projection} with respect to $\beta$ to obtain the following moment condition:

\begin{align}
    \mathbb{E}\left[\frac{\partial g(V; \beta)}{\partial \beta} w(X)\left\{\psi_{M_1}(X) - g(V; \beta)\right\}\right]  := \Psi(\beta; \mathbb{P}) = 0
\end{align}

As with the average effects, our estimation approach is again based on the influence curve of $\Psi(\beta; \mathbb{P})$. This yields an estimating-equation stated formally in Proposition~\ref{proposition2}.

\begin{proposition}\label{proposition2}
Under a non-parametric model, the uncentered efficient influence curve for the moment condition $\Psi(\beta^\star)$ at any fixed $\beta^\star$ is given by

\begin{align}
\phi(Z; \beta^\star, \eta) &= \frac{\partial g(V; \beta^\star)}{\partial \beta}w(X)\left(\varphi(Z; \eta)  - g(V; \beta^\star)\right)
\end{align}
\end{proposition}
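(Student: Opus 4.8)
The plan is to derive the efficient influence function for the parameter $\beta^\star$ defined implicitly by the moment condition $\Psi(\beta^\star;\mathbb{P}) = 0$, treating $\beta^\star$ as \emph{fixed} (so we are computing the influence function of the map $\mathbb{P} \mapsto \Psi(\beta^\star;\mathbb{P})$ at a fixed argument, not the influence function of the estimator $\hat\beta$). First I would recall the standard fact that for a Z-estimator solving $\Psi(\beta;\mathbb{P}_n)=0$, the influence function of $\hat\beta$ is obtained from the influence function of $\beta \mapsto \Psi(\beta;\mathbb{P})$ by the usual delta-method / implicit-function scaling by the inverse of $\partial_\beta \Psi$; the proposition isolates the first ingredient, the EIF of the moment functional itself. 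So the task reduces to: compute the nonparametric EIF of
\begin{align*}
\Psi(\beta^\star;\mathbb{P}) = \mathbb{E}\!\left[\frac{\partial g(V;\beta^\star)}{\partial\beta}\, w(X)\,\bigl\{\psi_{M_1}(X) - g(V;\beta^\star)\bigr\}\right].
\end{align*}

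The key step is to exploit linearity of the EIF operator and the fact that $\partial_\beta g(V;\beta^\star)$, $w(X)$, and $g(V;\beta^\star)$ are all fixed functions of $(V,X)$ — they involve no unknown components of $\mathbb{P}$ once $\beta^\star$ is held fixed. Write $h(X) := \partial_\beta g(V;\beta^\star)\, w(X)$, a fixed (vector-valued) function. Then $\Psi(\beta^\star;\mathbb{P}) = \mathbb{E}[h(X)\psi_{M_1}(X)] - \mathbb{E}[h(X) g(V;\beta^\star)]$. The second term is a pure mean of a fixed function, so its uncentered EIF is just $h(X)g(V;\beta^\star)$ itself. For the first term I would use the general rule for the EIF of $\mathbb{E}[h(X)\,\theta(X)]$ where $\theta(X)$ is itself a pathwise-differentiable conditional functional with uncentered EIF $\varphi(Z;\eta)$ satisfying $\mathbb{E}[\varphi(Z;\eta)\mid X] = \theta(X) = \psi_{M_1}(X)$: by the product/iterated-expectation structure, the uncentered EIF of $\mathbb{E}[h(X)\psi_{M_1}(X)]$ is $h(X)\varphi(Z;\eta)$. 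Concretely this follows because $\mathbb{E}[h(X)\psi_{M_1}(X)] = \mathbb{E}[h(X)\,\mathbb{E}(\varphi(Z;\eta)\mid X)] = \mathbb{E}[h(X)\varphi(Z;\eta)]$, and the inner conditioning on $X$ means no extra "score of the covariate law" correction term is generated beyond what $\varphi$ already carries — i.e. $h(X)\varphi(Z;\eta)$ is already a valid gradient and has the right conditional mean. Combining, the uncentered EIF is $h(X)\varphi(Z;\eta) - h(X)g(V;\beta^\star) = h(X)\bigl(\varphi(Z;\eta) - g(V;\beta^\star)\bigr)$, which is exactly the claimed $\phi(Z;\beta^\star,\eta)$.

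To make this rigorous I would verify the gradient/EIF claim directly via the score-operator definition rather than invoking a black-box rule: take a regular parametric submodel $\{\mathbb{P}_t\}$ through $\mathbb{P}$ with score $s(Z)$, differentiate $\Psi(\beta^\star;\mathbb{P}_t)$ at $t=0$, and check that $\frac{d}{dt}\Psi(\beta^\star;\mathbb{P}_t)\big|_{t=0} = \mathbb{E}[\phi(Z;\beta^\star,\eta)\, s(Z)]$ with $\mathbb{E}[\phi] = 0$ (the centering is automatic since $\mathbb{E}[\phi(Z;\beta^\star,\eta)] = \Psi(\beta^\star;\mathbb{P}) = 0$ when evaluated — actually at a \emph{fixed} $\beta^\star$ not equal to the truth this mean need not vanish, so more carefully I would state the result in uncentered form as the proposition does, and note that the centered EIF is $\phi(Z;\beta^\star,\eta) - \Psi(\beta^\star;\mathbb{P})$). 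Differentiating pulls the $t$-derivative onto the two $\mathbb{P}$-dependent pieces: the outer expectation $\mathbb{E}_t[\cdot]$ contributes $\mathbb{E}[h(X)(\psi_{M_1}(X)-g(V;\beta^\star))\,s(Z)]$, and the $t$-dependence of $\psi_{M_1,t}(X)$ inside contributes $\mathbb{E}[h(X)\,\partial_t\psi_{M_1,t}(X)|_{t=0}]$; using that $\varphi(Z;\eta)$ is the (uncentered) EIF of $\psi_{M_1}$ established in \eqref{eqn:eifm1}, i.e. $\partial_t \mathbb{E}_t[\varphi(Z;\eta)\mid X=x]\big|_{t=0}$-type identities, one reassembles $\mathbb{E}[h(X)\varphi(Z;\eta)\,s(Z)]$ after a conditioning argument.

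The main obstacle is the bookkeeping in that last reassembly step: one must carefully track that the pathwise derivative of the \emph{conditional} functional $x\mapsto\psi_{M_1}(x)$, weighted and integrated against the fixed $h(X)$ and the covariate law, recombines exactly into the covariance of $h(X)\varphi(Z;\eta)$ with the score — in particular confirming that the contribution from perturbing the marginal law of $X$ is already encoded in the last line $\mu_{a,M_1\times M_2^{'}}(X) - \mu_{a,M_1^{'}\times M_2^{'}}(X)$ of $\varphi$, so no double-counting or missing term arises. Since $h$ is nonrandom given $\beta^\star$, this is genuinely just linearity plus the known EIF of $\psi_{M_1}$, and no new hard analysis is needed beyond what \cite{benkeser2021nonparametric} already supplies; the proof is essentially a one-paragraph application of the calculus of influence functions (linearity, and the rule that multiplying a parameter by a fixed function of $X$ multiplies its EIF by that same function).
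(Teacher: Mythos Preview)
Your proposal is correct and takes essentially the same approach as the paper: both define $h(X) = \partial_\beta g(V;\beta^\star)\,w(X)$, split $\Psi(\beta^\star;\mathbb{P})$ into the $h(X)\psi_{M_1}(X)$ and $h(X)g(V;\beta^\star)$ pieces, and combine the known EIF of $\psi_{M_1}$ with the EIF of the covariate mean to obtain $\phi(Z;\beta^\star,\eta) = h(X)(\varphi(Z;\eta) - g(V;\beta^\star))$. The paper's version just makes the product-rule step explicit by treating the data as discrete and writing $\text{IF}[p(x)] = \mathds{1}(X=x) - p(x)$, which is exactly the ``score of the covariate law'' contribution you describe more abstractly.
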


This then suggests the estimator $\hat{\beta}$ that satisfies:

\begin{align}
\mathbb{P}_n\left[\frac{\partial g(V; \hat{\beta})}{\partial \beta} w(X)\left(\varphi(Z; \hat{\eta})  - g(V; \hat{\beta})\right)\right] = 0
\end{align}

Theorem~\ref{theorem3} shows the conditions required to obtain root-n consistent and asymptotically normal parameter estimates. 

\begin{theorem}\label{theorem3}
Consider the moment condition $\mathbb{E}[\phi(Z; \beta_0, \eta_0)] = 0$ evaluated at the true parameters $(\beta_0, \eta_0)$. Now consider the estimator $\hat{\beta}$ that satisfies $\mathbb{P}_n[\phi(Z; \hat{\beta}, \hat{\eta})] = 0$, where $\hat{\eta}$ is estimated on an independent sample. Assume that:
\begin{itemize}
    \item The function class $\{\phi(Z; \beta, \eta): \beta \in \mathbb{R}^p\}$ is Donsker in $\beta$ for any fixed $\eta$
    \item $\|\phi(Z; \hat{\beta}, \hat{\eta}) - \phi(Z; \beta_0, \eta_0)\| = o_p(1)$
    \item The map $\beta \to \mathbb{P}[\phi(Z; \beta, \eta)]$ is differentiable at $\beta_0$ uniformly in the true $\eta$, with non-singular derivative matrix $\frac{\partial}{\partial \beta}\mathbb{P}\{\phi(Z; \beta, \eta)\}\mid_{\beta = \beta_0} = M(\beta_0, \eta)$, where $M(\beta_0, \hat{\eta}) \to^p M(\beta_0, \eta_0)$
\end{itemize}

Then

\begin{align*}
\hat{\beta} - \beta &= -M^{-1}[\mathbb{P}_n - \mathbb{P}]\phi(Z; \beta_0, \eta_0) + \mathcal{O}_p(T_{1n} + T_{2n} + T_{3n} + T_{4n}) \\
\end{align*}
where

\begin{align*}
&T_{1n} = \|\hat{\mu}_a(M_1, M_2, X) - \mu_a(M_1, M_2, X)\|\|\pi_{a}(X) - \hat{\pi}_{a}(X)\| \\
&T_{2n} = \|\hat{\mu}_a(M_1, M_2, X) - \mu_a(M_1, M_2, X)\|[\|p(M_1, M_2 \mid a, X) - \hat{p}(M_1, M_2 \mid a, X)\| \\
&+ \|p(M_1 \mid a, X) - \hat{p}(M_1 \mid a, X)\| + \|p(M_2 \mid a', X) - \hat{p}(M_2 \mid a', X)\| \\
&+ \|p(M_1 \mid a', X) - \hat{p}(M_1 \mid a', X)\|] \\
&T_{3n} = \|\pi_{a}(X) - \hat{\pi}_{a}(X)\|[\|p(M_1 \mid a, X) - \hat{p}(M_1 \mid a, x)\| \\
&+\|p(M_1 \mid a', X) - \hat{p}(M_1 \mid a', x)\| + \|p(M_2 \mid a', X) - \hat{p}(M_2 \mid a', X)\|] \\
&T_{4n} = \|p(M_2 \mid a', X) - \hat{p}(M_2 \mid a', X)\|[\|p(M_1 \mid a, X) - \hat{p}(M_1 \mid a, X)\| \\
&+ \|p(M_1 \mid a', X) - \hat{p}(M_1 \mid a', X)\|]
\end{align*}
Suppose further that $T_{1n} + T_{2n} + T_{3n} + T_{4n} = o_p(n^{-1/2})$. Then the proposed estimator attains the non-parametric efficiency bound and is asymptotically normal with

\begin{align}
    \sqrt{n}(\hat{\beta} - \beta) \to^d \mathcal{N}(0, M^{-1}\mathbb{E}[\phi\phi^\top]M^{-\top}) 
\end{align}
This also implies that for any fixed value of $V = v$:

\begin{align}\label{eqn:avarproj}
    \sqrt{n}(g(v; \hat{\beta}) - g(v; \beta)) \to^d \mathcal{N}\left(0, \left(\frac{\partial g(v; \beta)}{\partial \beta}\right)^\top M^{-1}\mathbb{E}[\phi\phi^\top]M^{-\top}\frac{\partial g(v; \beta)}{\partial \beta}\right)
\end{align}
\end{theorem}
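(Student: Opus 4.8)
The plan is to treat $\hat\beta$ as a $Z$-estimator solving $\mathbb{P}_n[\phi(Z;\hat\beta,\hat\eta)]=0$ with a nuisance parameter $\hat\eta$ fit on an independent fold, and to follow the usual decomposition of such estimators into an empirical-process term, a Taylor remainder in $\beta$, and a pure nuisance-drift term. First I would record that $\hat\beta\to^p\beta_0$, which follows from the stated regularity conditions together with the fact that $\beta_0$ is the unique zero of $\beta\mapsto\mathbb{P}[\phi(Z;\beta,\eta_0)]$ (this map is the gradient of the strictly convex population squared-error objective in \eqref{eqn:projection}). Writing $\mathbb{P}_n=\mathbb{P}+(\mathbb{P}_n-\mathbb{P})$ and adding and subtracting,
\begin{align*}
0 &= (\mathbb{P}_n-\mathbb{P})\big[\phi(Z;\hat\beta,\hat\eta)-\phi(Z;\beta_0,\eta_0)\big] + (\mathbb{P}_n-\mathbb{P})\phi(Z;\beta_0,\eta_0) + \mathbb{P}\big[\phi(Z;\hat\beta,\hat\eta)\big].
\end{align*}

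For the first term I would split the perturbation in $\beta$ from the perturbation in $\eta$. Because $\hat\beta$ is learned on the estimation fold, the $\beta$ perturbation $(\mathbb{P}_n-\mathbb{P})[\phi(Z;\hat\beta,\hat\eta)-\phi(Z;\beta_0,\hat\eta)]$ is controlled by the Donsker-in-$\beta$ assumption (for $\hat\eta$ fixed, conditionally on the training fold) together with consistency of $\hat\beta$ and the $L_2$-continuity assumption $\|\phi(Z;\hat\beta,\hat\eta)-\phi(Z;\beta_0,\eta_0)\|=o_p(1)$, via stochastic equicontinuity; it is $o_p(n^{-1/2})$. Because $\hat\eta$ is independent of the estimation fold, the $\eta$ perturbation $(\mathbb{P}_n-\mathbb{P})[\phi(Z;\beta_0,\hat\eta)-\phi(Z;\beta_0,\eta_0)]$ is handled by a conditional Chebyshev argument: its conditional variance given the training fold is at most $n^{-1}\|\phi(Z;\beta_0,\hat\eta)-\phi(Z;\beta_0,\eta_0)\|^2=o_p(n^{-1})$, so it too is $o_p(n^{-1/2})$. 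For the drift $\mathbb{P}[\phi(Z;\hat\beta,\hat\eta)]$, a first-order Taylor expansion in $\beta$ around $\beta_0$ (using the differentiability assumption) gives $\mathbb{P}[\phi(Z;\hat\beta,\hat\eta)]=\mathbb{P}[\phi(Z;\beta_0,\hat\eta)]+M(\beta_0,\hat\eta)(\hat\beta-\beta_0)+o_p(\|\hat\beta-\beta_0\|)$, with $M(\beta_0,\hat\eta)\to^p M:=M(\beta_0,\eta_0)$ nonsingular.

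The crux is the pure nuisance-drift term $\mathbb{P}[\phi(Z;\beta_0,\hat\eta)]$. Since $\mathbb{P}[\phi(Z;\beta_0,\eta_0)]=0$ and the factor $h(X):=\{\partial g(V;\beta_0)/\partial\beta\}\,w(X)$ does not depend on $\eta$, this term equals $\mathbb{P}\big[h(X)\{\varphi(Z;\hat\eta)-\varphi(Z;\eta_0)\}\big]$. I would bound it by the direct von Mises computation: add and subtract the true marginalized outcome and mediator models inside each of the five terms of $\varphi$ in \eqref{eqn:eifm1} and take iterated expectations, conditioning on $X$ and, where relevant, on $A$ and $(M_1,M_2)$. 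By the mixed-bias/multiple-robustness structure of the efficient influence curve $\varphi$ — the same structure that makes $\hat\psi^{os}_{M_1}$ work in \cite{benkeser2021nonparametric} — all terms linear in a single nuisance error vanish, leaving only cross-products of nuisance errors. Pulling the weight $h(X)$ through these cross-products and applying Cauchy--Schwarz, using Assumption~\ref{assumption:positivity} and boundedness of $w$ and $\partial g/\partial\beta$ to keep $h$ and the inverse propensity/mediator weights uniformly bounded, yields $\mathbb{P}[\phi(Z;\beta_0,\hat\eta)]=\mathcal{O}_p(T_{1n}+T_{2n}+T_{3n}+T_{4n})$. This step is essentially a re-derivation of the Benkeser--Díaz remainder with the extra bounded weight $h(X)$, and it is where almost all of the work lies; the bookkeeping across the several inverse-weighting terms of $\varphi$ is the main obstacle, though each individual cancellation is routine precisely because $\varphi$ is the EIF.

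Collecting the pieces gives $0=(\mathbb{P}_n-\mathbb{P})\phi(Z;\beta_0,\eta_0)+M(\beta_0,\hat\eta)(\hat\beta-\beta_0)+o_p(\|\hat\beta-\beta_0\|)+\mathcal{O}_p(T_{1n}+T_{2n}+T_{3n}+T_{4n})+o_p(n^{-1/2})$; a standard rearrangement (first deducing $\|\hat\beta-\beta_0\|=O_p(n^{-1/2}+\sum_k T_{kn})$ to absorb the $o_p(\|\hat\beta-\beta_0\|)$ term, then inverting $M(\beta_0,\hat\eta)$ and using $M(\beta_0,\hat\eta)\to^p M$) produces the stated expansion $\hat\beta-\beta_0=-M^{-1}(\mathbb{P}_n-\mathbb{P})\phi(Z;\beta_0,\eta_0)+\mathcal{O}_p(T_{1n}+T_{2n}+T_{3n}+T_{4n})$. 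When $\sum_k T_{kn}=o_p(n^{-1/2})$, the leading term is an average of i.i.d.\ mean-zero vectors with finite second moment, so the CLT and Slutsky give $\sqrt{n}(\hat\beta-\beta_0)\to^d\mathcal{N}(0,M^{-1}\mathbb{E}[\phi\phi^\top]M^{-\top})$, and attainment of the non-parametric efficiency bound follows because $\phi$ is the efficient influence curve of $\Psi(\beta_0)$ (Proposition~\ref{proposition2}). Finally, \eqref{eqn:avarproj} is the delta method: since $g$ is smooth in $\beta$, $\sqrt{n}(g(v;\hat\beta)-g(v;\beta_0))=\{\partial g(v;\beta_0)/\partial\beta\}^\top\sqrt{n}(\hat\beta-\beta_0)+o_p(1)$, which converges in distribution to the centered normal with the stated quadratic-form variance.
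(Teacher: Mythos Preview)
Your proposal is correct and follows essentially the same route as the paper. The paper's proof simply invokes Lemma~3 of \cite{kennedy2021semiparametric} as a packaged $Z$-estimator result and then identifies the key remainder $\mathbb{P}[\phi(Z;\beta_0,\hat\eta)-\phi(Z;\beta_0,\eta_0)]$, observing (as you do) that this reduces to a bounded-weight multiple of $\mathbb{P}[\varphi(Z;\hat\eta)-\varphi(Z;\eta_0)]$ and hence is second-order in the nuisance errors; you have instead written out the Donsker/sample-splitting/Taylor decomposition that Lemma~3 encapsulates, arriving at the same crux and the same bound.
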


\begin{remark}\label{remark:proj-order}
    The expression $T_{1n} + T_{2n} + T_{3n} + T_{4n}$ is expressed in terms of separate estimates for the marginal probability $p(M_1 \mid a, X)$ and the joint probability $p(M_1, M_2 \mid a, X)$. However, in practice the estimate of $p(M_1 \mid a, X)$ may come from a marginalized estimate of $p(M_1, M_2 \mid a, X)$; similarly, the estimates of $p(M_1 \mid a', X)$ and $p(M_2 \mid a', X)$ may come from marginalized estimates of $p(M_1, M_2 \mid a', X)$. In this case these estimates of the marginal probabilities will inherit the same rate of convergence as the estimate of these joint probabilities evaluated at the worst-case value of $m_2$ for $p(M_1 \mid a, X)$ (and similarly for $p(M_1 \mid a', X)$ and $p(M_2 \mid a', X)$), allowing us to simplify the second-order expressions above. Such an estimation approach is reasonable if we believe that these functions have the same underlying complexity. On the other hand, if we believe that the marginal probabilities are less complex than the joint probabilities, we may instead wish to estimate each marginal probability separately. 
\end{remark}

Theorem~\ref{theorem3} illustrates that if we can estimate the components of $\eta$ (defined in \eqref{eqn:eta}) quickly enough $\hat{\beta}$ will be root-n consistent for $\beta$ and asymptotically normal.  It also implies that we can obtain point-wise confidence intervals for $g(v; \hat{\beta})$ by using any consistent estimate of the variance of \eqref{eqn:avarproj} and standard normal quantiles.  One sufficient condition  for this to hold would  be that we are able to estimate all elements of $\eta$ at rates of at least $o_p(n^{-1/4})$. This is attainable by many non-parametric regression models under some conditions \citep{tsybakov2004introduction}. Alternatively, we could allow for slower rates for some nuisance components while requiring faster rates for others. For example, we could allow that $\|\mu_a - \hat{\mu}_a\| = o_p(n^{-1/6})$ and all other components to be estimated at $o_p(n^{-1/3})$. Regardless, as we saw when reviewing estimating the average effect $\psi_{M_1}$ in Section~\ref{ssec:review-estimation}, these conditions would imply that the regression of $\varphi(Z; \hat{\eta})$ onto $g(V; \beta)$ is asymptotically equivalent to the oracle regression of $\varphi(Z; \eta)$ onto $g(V; \beta)$. Intuitively, this is because the error induced by the nuisance estimation is decreasing at rates faster than $n^{-1/2}$. As with estimating $\psi_{M_1}$, this in turn follows because the bias of $\hat{\beta}$ is a function of the product of errors in the nuisance estimation, as shown in the statement of Theorem~\ref{theorem3}. 

\subsection{DR-Learner}

In some applications we may not be satisfied with a projection, and may instead wish to directly estimate $\psi_{M_1}(v)$. We propose estimating this quantity using a non-parametric second-stage regression model, which we call a DR-Learner following \cite{kennedy2022}. Algorithm 1 provides specific details. We then analyze the DR-Learner and derive results analogous to those found in \cite{kennedy2022}, giving model-free error bounds for arbitrary first-stage estimators which reveal that under some conditions, the DR-Learner is as efficient as an oracle estimator that regresses $\varphi(Z; \eta)$ onto $V$ directly. Our results are similar to \cite{kennedy2022}: the primary difference is that we must consider the sums of products of errors between several more sets of nuisance functions.

\begin{algorithm}\label{algorithm:mrlearner}
Let $(D_1^n, D_2^n)$ denote two independent samples of $n$ observations of $Z_i$. \newline

\begin{itemize}
\item Step 1: Nuisance training. Construct estimates of $\eta$ using $D_1^n$

\item Step 2: Pseudo-outcome regression. Construct the pseudo-outcomes $\varphi(Z; \hat{\eta})$ and regress it onto covariates $V$ in the test sample $D_2^n$, giving

\begin{align}
    \hat{\psi}_{M_1}^{dr}(v) = \hat{\mathbb{E}}_n\{\varphi(Z; \hat{\eta}) \mid V = v\}
\end{align}

\item Step 3: Cross-fitting (optional). Repeat Steps 1-2, swapping the roles of $D_1^n$ and $D_2^n$. Use the average of the resulting two estimates as a final estimate of $\psi_{M_1}(v)$.
\end{itemize}

\end{algorithm}

\begin{remark}
    In practice, when implementing either the DR-Learner or the projection estimator, one may wish to use sample-split estimates of $\eta$ and regress them onto $V$ using the entire sample, rather than averaging two separate estimates as suggested in Step 3 above. However, our results do not provide theoretic guarantees for this approach.
\end{remark}

Proposition 1 from \cite{kennedy2022} establishes general conditions where the error of a pseudo-outcome regression of $\hat{f}$ onto $V$ and an oracle regression of $f$ onto $V$ are asymptotically equivalent. This result relies on an assumption on the ``stability'' of the second-stage estimator with respect to a distance measure $d$ and the convergence in probability of $\hat{f}$ to $f$ with respect to $d$. Intuitively, estimator stability requires that the second-stage error between the pseudo-outcome regression and the oracle estimator converges in probability to the conditional bias of the pseudo-outcome estimates at a rate determined by root-mean squared error of the oracle estimator.\footnote{We provide the formal definition of stability in Appendix~\ref{appendix:definitions}.} Theorem 2 of \cite{kennedy2022} shows the conditions for the oracle efficiency of a DR-Learner for the CATE under a direct application of Proposition 1 from \cite{kennedy2022}, where $f$ is the uncentered influence function for the average treatment effect (ATE). We use the same approach for the CIIE to obtain analogous results, formalized in Corollary~\ref{corollary-drlearner}.

\begin{corollary}\label{corollary-drlearner}
Define $\hat{b}^\star(x) = \mathbb{E}\{\hat{\varphi}(Z) - \varphi(Z) \mid D_1^n, X = x\}$; in other words, $\hat{b}^\star(x)$ is the conditional bias of the estimated influence function at $X = x$ conditional on the training data. Assume

\begin{enumerate}
    \item $\hat{\mathbb{E}}_n$ is stable with respect to distance metric $d$
    \item $d(\hat{\varphi}, \varphi) \to^p 0$
\end{enumerate}

Let $\tilde{\psi}_{M_1}(v)$ denote an oracle estimator from a regression of the true efficient influence function onto $V$ and $K_n^{\star}(v)$ denote the oracle root mean square error, $\mathbb{E}[\{\tilde{\psi}_{M_1}(v) - \psi_{M_1}(v)\}^2]^{1/2}$. Then

\begin{align}\label{eqn:drlerrorbound2}
    \hat{\psi}^{dr}_{M_1}(v) - \tilde{\psi}_{M_1}(v) &= \hat{\mathbb{E}}_n\{\hat{b}^\star(X) \mid V = v\} + o_p\left(K_n^\star(v)\right) 
\end{align}

We provide an expression for $\hat{b}^\star(x)$ in Appendix~\ref{app:proofs}. Moreover, $\hat{b}^\star(x) \lesssim \hat{b}(x)$, where:

\begin{align}\label{eqn:bhatx}
    \hat{b}(x) &= T_{1n}(x) + T_{2n}(x) + T_{3n}(x) + T_{4n}(x) 
\end{align}
and

\begin{align*}
    T_{1n}(x) &= (\hat{\pi}_a(x) - \pi_a(x))\sum_{m_1, m_2}(\mu_a(m_1, m_2, x) - \hat{\mu}_a(m_1, m_2, x)) \\
    T_{2n}(x) &= \sum_{m_1, m_2}(\mu_a(m_1, m_2, x) - \hat{\mu}_a(m_1, m_2, x))(p(m_1, m_2 \mid a, x) - \hat{p}(m_1, m_2 \mid a, x) \\
    &+ (p(m_1 \mid a, x) - \hat{p}(m_1 \mid a, x)) + (p(m_1 \mid a', x) - \hat{p}(m_1 \mid a', x)) + (p(m_2 \mid a', x) - \hat{p}(m_2 \mid a', x))) \\
    T_{3n}(x) &= (\pi_a(x) - \hat{\pi}_a(x))(p(m_1 \mid a, x) - \hat{p}(m_1 \mid a, x)) + (p(m_1 \mid a', x) - \hat{p}(m_1 \mid a', x)) \\
    &+ (p(m_2 \mid a', x) - \hat{p}(m_2 \mid a', x))) \\
    T_{4n}(x) &= \sum_{m_1, m_2}(p(m_2 \mid a', x) - \hat{p}(m_2 \mid a', x))((p(m_1 \mid a, x) - \hat{p}(m_1 \mid a, x)) + p(m_1 \mid a', x) - \hat{p}(m_1 \mid a', x))
\end{align*}

Therefore, the DR-Learner is oracle efficient if $\hat{\mathbb{E}}_n\{\hat{b}(X) \mid V = v\} = o_p\left(K_n^\star(v)\right)$.
\end{corollary}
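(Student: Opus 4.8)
\emph{Proof plan.} The plan is to obtain the error bound \eqref{eqn:drlerrorbound2} as a direct instance of Proposition~1 of \cite{kennedy2022}, and then to carry out the one genuinely new computation: identifying the conditional bias $\hat b^\star(x)$ and bounding it by $\hat b(x)$. Throughout we set $f(Z) = \varphi(Z;\eta)$, the true uncentered influence curve from \eqref{eqn:eifm1}, and $\hat f(Z) = \varphi(Z;\hat\eta)$, so that the DR-Learner of Algorithm~\ref{algorithm:mrlearner} is exactly a pseudo-outcome regression of $\hat f$ onto $V$ and the oracle $\tilde\psi_{M_1}(v)$ is the regression of $f$ onto $V$; since $\varphi(Z;\eta)$ is the uncentered efficient influence curve of $\psi_{M_1}$, under Assumptions~\ref{assumption:consistency}--\ref{assumption:positivity} we have $\mathbb{E}[\varphi(Z;\eta)\mid V=v]=\psi_{M_1}(v)$, so the oracle regression indeed targets the CIIE.

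First we would invoke Proposition~1 of \cite{kennedy2022}. Its hypotheses are precisely our assumptions (1) (stability of $\hat{\mathbb{E}}_n$ with respect to $d$) and (2) ($d(\hat f, f)\to^p 0$), so the proposition yields $\hat\psi^{dr}_{M_1}(v) - \tilde\psi_{M_1}(v) = \hat{\mathbb{E}}_n\{\hat b^\star(X)\mid V=v\} + o_p(K_n^\star(v))$ with $\hat b^\star(x)=\mathbb{E}[\hat f(Z)-f(Z)\mid D_1^n, X=x]$ the conditional bias of the pseudo-outcome at $X=x$. This is exactly \eqref{eqn:drlerrorbound2}, and it mirrors the route used for Theorem~2 of \cite{kennedy2022} for the CATE, with the ATE influence function there replaced by $\varphi$.

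The main work, and the main obstacle, is to show $\hat b^\star(x)\lesssim \hat b(x)$. We would expand $\varphi(Z;\hat\eta)-\varphi(Z;\eta)$ line by line using \eqref{eqn:eifm1} and take $\mathbb{E}[\,\cdot\mid X=x, D_1^n]$, integrating over the true conditional law of $(A,M_1,M_2,Y)$ given $X=x$. Using the ignorability conditions \eqref{eqn:unconf1}--\eqref{eqn:unconf3} and identities of the form $\mathbb{E}\big[\tfrac{\mathds{1}(A=a^\star)}{\hat\pi_{a^\star}(x)}h(M_1,M_2,Y)\,\big|\,X=x\big]=\tfrac{\pi_{a^\star}(x)}{\hat\pi_{a^\star}(x)}\mathbb{E}[h\mid A=a^\star,X=x]$, the inverse-probability weights and the marginalized outcome models recombine so that every term that is first order in $\hat\eta-\eta$ cancels — the conditional analogue of the Neyman-orthogonality property that gives $\mathbb{P}\varphi(Z;\eta)=\psi_{M_1}$. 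What remains is a conditional second-order remainder; bounding the denominators $\hat\pi_a$, $\hat\pi_{a'}$ and $\hat p(m_1,m_2\mid a,x)$ away from zero (Assumption~\ref{assumption:positivity}, together with the standard requirement that the nuisance estimates respect the same bounds) and using boundedness of $\mu_a$, this remainder is a finite sum of products of nuisance-estimation errors, which we would match term by term with $T_{1n}(x),\dots,T_{4n}(x)$, giving $|\hat b^\star(x)|\lesssim \hat b(x)$. This step is laborious only because $\varphi$ is assembled from several marginalized outcome regressions and two joint mediator distributions; structurally it is the conditional, un-integrated version of the remainder computation already performed for Theorem~\ref{theorem3}, so the algebra parallels — and can reuse — that argument, and the details would be deferred to Appendix~\ref{app:proofs}.

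Finally, combining the two pieces: because the second-stage regression $\hat{\mathbb{E}}_n(\cdot\mid V=v)$ is a linear-smoother averaging operator, and in particular satisfies $|\hat{\mathbb{E}}_n\{g(X)\mid V=v\}|\le \hat{\mathbb{E}}_n\{|g(X)|\mid V=v\}$ (which is consistent with the stability assumption), the hypothesis $\hat{\mathbb{E}}_n\{\hat b(X)\mid V=v\}=o_p(K_n^\star(v))$ together with $|\hat b^\star(x)|\lesssim \hat b(x)$ gives $\hat{\mathbb{E}}_n\{\hat b^\star(X)\mid V=v\}=o_p(K_n^\star(v))$. Substituting into the error bound yields $\hat\psi^{dr}_{M_1}(v)-\tilde\psi_{M_1}(v)=o_p(K_n^\star(v))$, i.e.\ the DR-Learner attains the oracle rate $K_n^\star(v)$, which is the claimed oracle efficiency.
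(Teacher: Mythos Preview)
Your proposal is correct and follows essentially the same route as the paper: invoke the stability framework (Proposition~1 of \cite{kennedy2022}) to obtain \eqref{eqn:drlerrorbound2}, then compute $\hat b^\star(x)=\mathbb{E}[\varphi(Z;\hat\eta)-\varphi(Z;\eta)\mid X=x,D_1^n]$ by iterating expectations over $(A,M_1,M_2,Y)$ and showing the first-order terms cancel, leaving a sum of products of nuisance errors that is bounded by $\hat b(x)$ under positivity and boundedness of the estimates. The paper in fact carries out the full second-order decomposition first for the marginal $P[\hat\varphi-\varphi]$ (reusing it for Theorem~\ref{theorem3}) and then notes that the conditional version at $X=x$ follows by the same algebra, which is exactly the parallel you identify.
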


\begin{remark}
    As with the second-order expression in Theorem~\ref{theorem3}, the expression for $\hat{b}(x)$ depends on products of errors with the marginal mediator probabilities; however, these estimates, as described in Algorithm 1, come from estimates of the joint mediator probabilities. The error in the marginal probabilities is therefore of the same order as the sums of the errors in the joint probabilities across the relevant mediator values. As noted in Remark~\ref{remark:proj-order}, we may wish to estimate the marginal mediator probabilities separately if we believe that the underlying complexity of the marginal probabilities is simpler than the joint mediator probabilities, though such a situation may seem unlikely to occur in practice. 
\end{remark}

One interesting implication of Corollary~\ref{corollary-drlearner} is that  the rate of convergence is a function of the cardinality of the joint mediator values $k$. We  may eliminate this dependence  by invoking the following assumption:

\begin{assumption}\label{assumption:simplify} The smoothed product of errors between mediator probabilities and/or the outcome model are of the same order for any values of ($m_1, m_2$). For example, for any $(m_1, m_1')$ and $(m_2, m_2')$:
    
    \begin{align*}
     \hat{\mathbb{E}}_n\{\left(p(m_1 \mid a, X) - \hat{p}(m_1 \mid a, X)\right)\left(p(m_2 \mid a', X) - \hat{p}(m_2 \mid a', X)\right) \mid V = v\} = o_p(a_n)   
    \end{align*}
    
    and 
    
    \begin{align*}
    \hat{\mathbb{E}}_n\{\left(p(m_1' \mid a, X) - \hat{p}(m_1' \mid a, X)\right)\left(p(m_2' \mid a', X) - \hat{p}(m_2' \mid a', X)\right)\mid V = v\} = o_p(a_n)    
    \end{align*}
\end{assumption}

Assumption~\ref{assumption:simplify} would be reasonable if we do not believe the form of functional form of the mediator probabilities or outcome models varies in underlying complexity across different values of the mediators. 

We next consider the form of the second-stage regression $\hat{\mathbb{E}}_n$. Proposition 2 from \cite{kennedy2022} implies that when $\hat{\mathbb{E}}_n$ is a linear smoother of the form $\sum_i w_i(v; V^n)f(Z_i)$ and $\sum_i \lvert w_i(v; V^n) \rvert = \mathcal{O}_{p}(a_n)$, and $\hat{b}(x)$ can be expressed in the form of $\hat{b}_1(x)\hat{b}_2(x)$, then 

\begin{align*}
\hat{\mathbb{E}}_n\{\hat{b}(X) \mid V = v\} = \mathcal{O}_{p}(a_n\|\hat{b}_1\|_{w,2}\|\hat{b}_2\|_{w,2})   
\end{align*}
where 

\begin{align*}
    \|f\|_{w, 2} = \left[\sum_i\left\{\frac{\lvert w_i(v; V^n) \rvert}{\sum_j \lvert w_j(v; V^n) \rvert} \right\}\lvert f(Z_i)\rvert^2\right]^{1/2}
\end{align*}

Corollary~\ref{corollary-extension} applies this result to Corollary~\ref{corollary-drlearner}.

\begin{corollary}\label{corollary-extension}
Assume the conditions of Corollary~\ref{corollary-drlearner} and that $\hat{\mathbb{E}}_n$ is a minimax optimal linear smoother with $\sum_i \lvert w_i(v; V^n) \rvert = \mathcal{O}_p(1)$. Notice that 

\begin{align*}
    \hat{b}(x) = \sum_j \hat{b}_{j1}(x)\hat{b}_{j2}(x)
\end{align*}
where $j$ indexes all of the error products in \eqref{eqn:bhatx}. Therefore:

\begin{align*}
    \hat{\mathbb{E}}_n\{\hat{b}^\star(X) \mid V = v\} &\lesssim \hat{\mathbb{E}}_n\{\sum_j\hat{b}_{j1}(X)\hat{b}_{j2}(X) \mid V = v\}
\end{align*}
By Proposition 2 of \cite{kennedy2022} we then obtain that

\begin{align*}
\hat{\mathbb{E}}_n\{\hat{b}(X) \mid V = v\} &= \sum_j\mathcal{O}_p(\|\hat{b}_{j1}\|_{w,2}\|\hat{b}_{j2}\|_{w,2}) \asymp \max_j \mathcal{O}_p(\|\hat{b}_{j1}\|_{w,2}\|\hat{b}_{j2}\|_{w,2})
\end{align*}

\end{corollary}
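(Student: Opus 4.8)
The plan is to chain together three things that are already available: the pointwise bound $\hat{b}^\star(x)\lesssim \hat{b}(x)$ from Corollary~\ref{corollary-drlearner}, the observation that $\hat{b}(x)$ is a \emph{finite} sum of products of two nuisance-error terms, and Proposition 2 of \cite{kennedy2022} applied term by term to the linear smoother $\hat{\mathbb{E}}_n$. Note that the assumptions of Corollary~\ref{corollary-drlearner} already deliver \eqref{eqn:drlerrorbound2}, so the only quantity that needs to be controlled is $\hat{\mathbb{E}}_n\{\hat{b}^\star(X)\mid V=v\}$.

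First I would re-index $\hat{b}(x)$. Reading off the definitions of $T_{1n}(x),\dots,T_{4n}(x)$ in \eqref{eqn:bhatx}, each $T_{\ell n}(x)$ is a sum over the finitely many mediator values $(m_1,m_2)$ of a product of exactly two difference terms, where one factor is always a nuisance estimation error and the other is either a second nuisance error or an outcome-model error. Collecting the outer index $\ell\in\{1,2,3,4\}$ together with the mediator indices into a single finite index set $j$, I write $\hat{b}(x)=\sum_j \hat{b}_{j1}(x)\hat{b}_{j2}(x)$, where each $\hat{b}_{j1},\hat{b}_{j2}$ is one of the differences $\hat{\pi}_a-\pi_a$, $\hat{\mu}_a(m_1,m_2,\cdot)-\mu_a(m_1,m_2,\cdot)$, $\hat{p}(m_1\mid a,\cdot)-p(m_1\mid a,\cdot)$, and so on. This step is pure bookkeeping, and the index set is finite precisely because the mediators are discrete.

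Next I would push the smoother through the sum. Since $\hat{\mathbb{E}}_n$ is linear with weights $w_i(v;V^n)$, we have $\hat{\mathbb{E}}_n\{\hat{b}(X)\mid V=v\}=\sum_j \hat{\mathbb{E}}_n\{\hat{b}_{j1}(X)\hat{b}_{j2}(X)\mid V=v\}$, and by Corollary~\ref{corollary-drlearner} the target $\hat{\mathbb{E}}_n\{\hat{b}^\star(X)\mid V=v\}$ is bounded in absolute value, up to a universal constant, by $\hat{\mathbb{E}}_n\{\sum_j\hat{b}_{j1}(X)\hat{b}_{j2}(X)\mid V=v\}$. For a single product term, the triangle inequality followed by Cauchy--Schwarz in the weighted inner product gives
\begin{align*}
\left|\sum_i w_i(v;V^n)\,\hat{b}_{j1}(Z_i)\hat{b}_{j2}(Z_i)\right|
&\le \left(\sum_i |w_i(v;V^n)|\right)\|\hat{b}_{j1}\|_{w,2}\,\|\hat{b}_{j2}\|_{w,2},
\end{align*}
which is exactly Proposition 2 of \cite{kennedy2022} with the generic weight-size factor $a_n$ specialized to $\sum_i |w_i(v;V^n)| = \mathcal{O}_p(1)$, the content of the assumed minimax-optimal linear smoother. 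Hence each term is $\mathcal{O}_p(\|\hat{b}_{j1}\|_{w,2}\|\hat{b}_{j2}\|_{w,2})$, and summing over the finitely many $j$ yields $\sum_j \mathcal{O}_p(\|\hat{b}_{j1}\|_{w,2}\|\hat{b}_{j2}\|_{w,2}) \asymp \max_j \mathcal{O}_p(\|\hat{b}_{j1}\|_{w,2}\|\hat{b}_{j2}\|_{w,2})$, as claimed.

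I do not expect a genuine obstacle here; this is a direct application of a cited proposition after a re-indexing. The two points that need care are (i) verifying that the stated hypothesis — a minimax-optimal linear smoother with $\sum_i|w_i(v;V^n)|=\mathcal{O}_p(1)$ — is precisely what Proposition 2 of \cite{kennedy2022} requires, so that the specialization of $a_n$ is legitimate, and (ii) keeping the $\lesssim$'s honest: because a linear smoother may assign negative weights, monotonicity of $\hat{\mathbb{E}}_n$ cannot be invoked, so every inequality must be carried through with $|w_i(v;V^n)|$ and with $|\hat{b}_{j1}|,|\hat{b}_{j2}|$ rather than the signed quantities.
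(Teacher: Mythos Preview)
Your proposal is correct and follows essentially the same route as the paper: the paper's proof is a single sentence stating that the result follows directly from Proposition~2 of \cite{kennedy2022}, and your argument simply unpacks that citation by re-indexing $\hat{b}(x)$ as a finite sum of error products, applying the proposition term by term, and summing. Your caveat about negative smoother weights and the need to carry absolute values through the $\lesssim$ is well taken, but this is a notational subtlety rather than a gap in the argument.
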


To make Corollary~\ref{corollary-extension} concrete consider the case where $K_n^\star(v) = n^{-\theta}$.  This result  implies that when the second-stage regression estimator is a linear smoother, the DR-learner will achieve the corresponding oracle rate when, for example, all of nuisance errors are at least $o_p(n^{-\theta/2})$ in the $\|\cdot\|_{w,2}$ norm. More generally, the DR-Learner is oracle efficient as long as the highest order error product in \eqref{eqn:bhatx} is $o_p(n^{-\theta})$ in this same norm. Whether these rates are actually attainable in a given application depends on the underlying complexity of the nuisance functions. 

\section{Simulations}\label{sec:3}

We verify that the expected performance of these estimation strategies corresponds with the theory outlined above using a simulation study. First, we outline the data generating process; second, we demonstrate the performance of our proposed approaches on samples of size $n = 1000$ when estimating the nuisance functions using SuperLearner \citep{van2007super}; finally, we compare the convergence rates of the DR-Learner versus a plugin approach while controlling the convergence rates of the nuisance estimation.

\subsection{Setup}

To illustrate our proposed approach, we conduct a simulation study with a one-dimensional covariate $X$ (so that $V = X$). Figure~\ref{fig:sim-nuis-funs} illustrates the simulated nuisance functions as a function of $X$. One aspect of this setup is that the outcome models and the propensity score models have complexity unlikely to be fully captured by simple generalized linear models, motivating our use of non-parametric methods. A second aspect is that the implied function $\psi_{M_1}(X)$, illustrated in Figure~\ref{fig:sim-estimands}, is less complex than these functions. We provide all details about the data generating processes for our simulations, including the functions illustrated in Figure~\ref{fig:sim-nuis-funs}, in Appendix~\ref{app:simulation}.

\begin{figure}[H]
\begin{center}
    \mycaption{Simulation: selected nuisance functions}{Nuisance function specifications for simulation study}\label{fig:sim-nuis-funs}
    \includegraphics[scale=0.4]{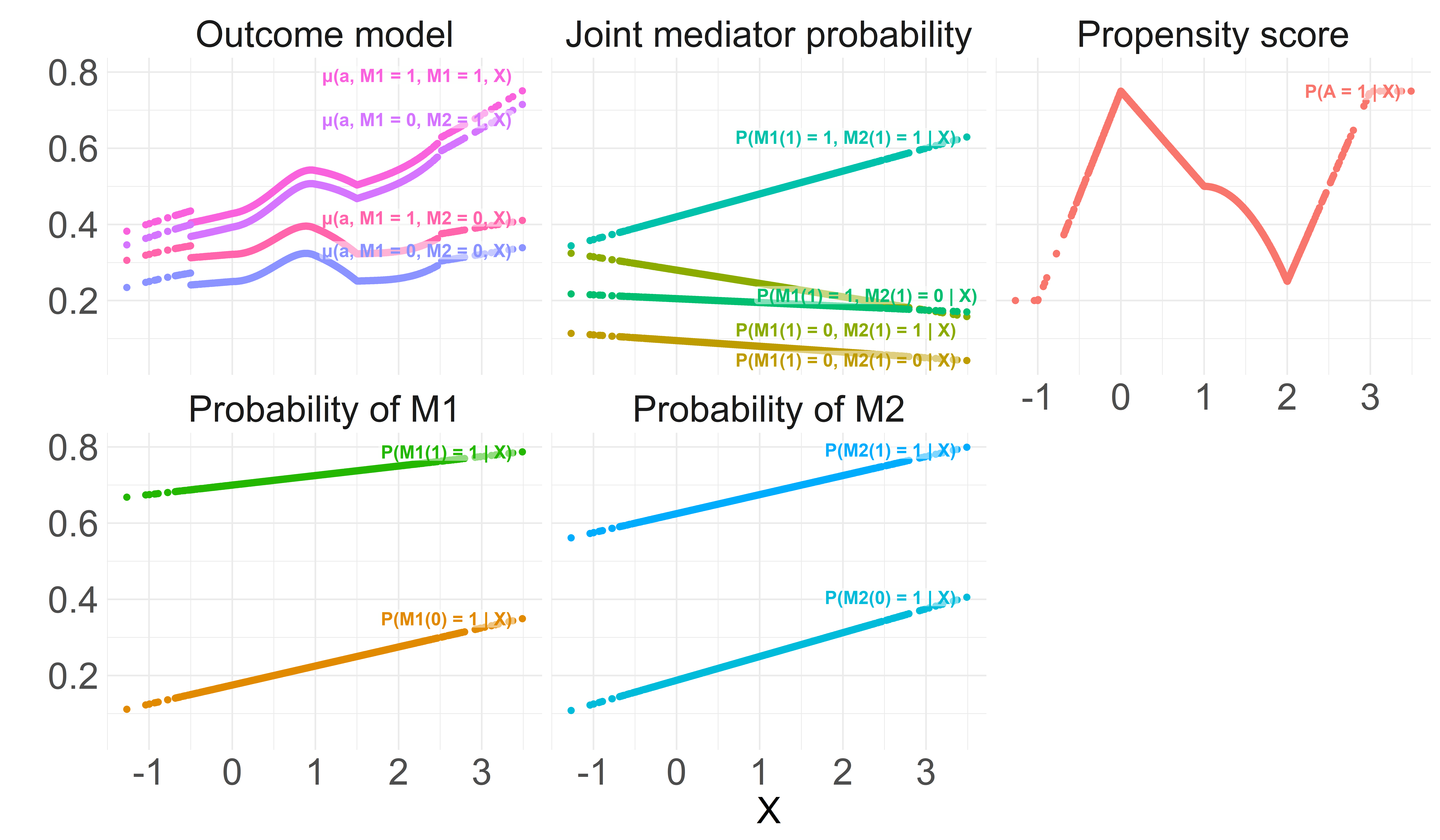}
\end{center}
\end{figure}

Figure~\ref{fig:sim-estimands} also illustrates the implied curves of $\psi_{M_2}(X)$ and $\psi(X)$, as well as the curves for the proportion mediated via each mediator (e.g. $\psi_{M_1}(X) / \psi(X)$). The effects are entirely mediated via $M_1$ and $M_2$,\footnote{Additionally, the covariant effects due to the dependence of the mediators on each other is close to zero throughout.} and the proportion mediated via $M_1$ increases with $X$.

\begin{figure}[H]
\begin{center}
    \mycaption{Estimands}{Total effects, indirect effects, and proportion mediated as a function of $X$}\label{fig:sim-estimands}
    \includegraphics[scale=0.4]{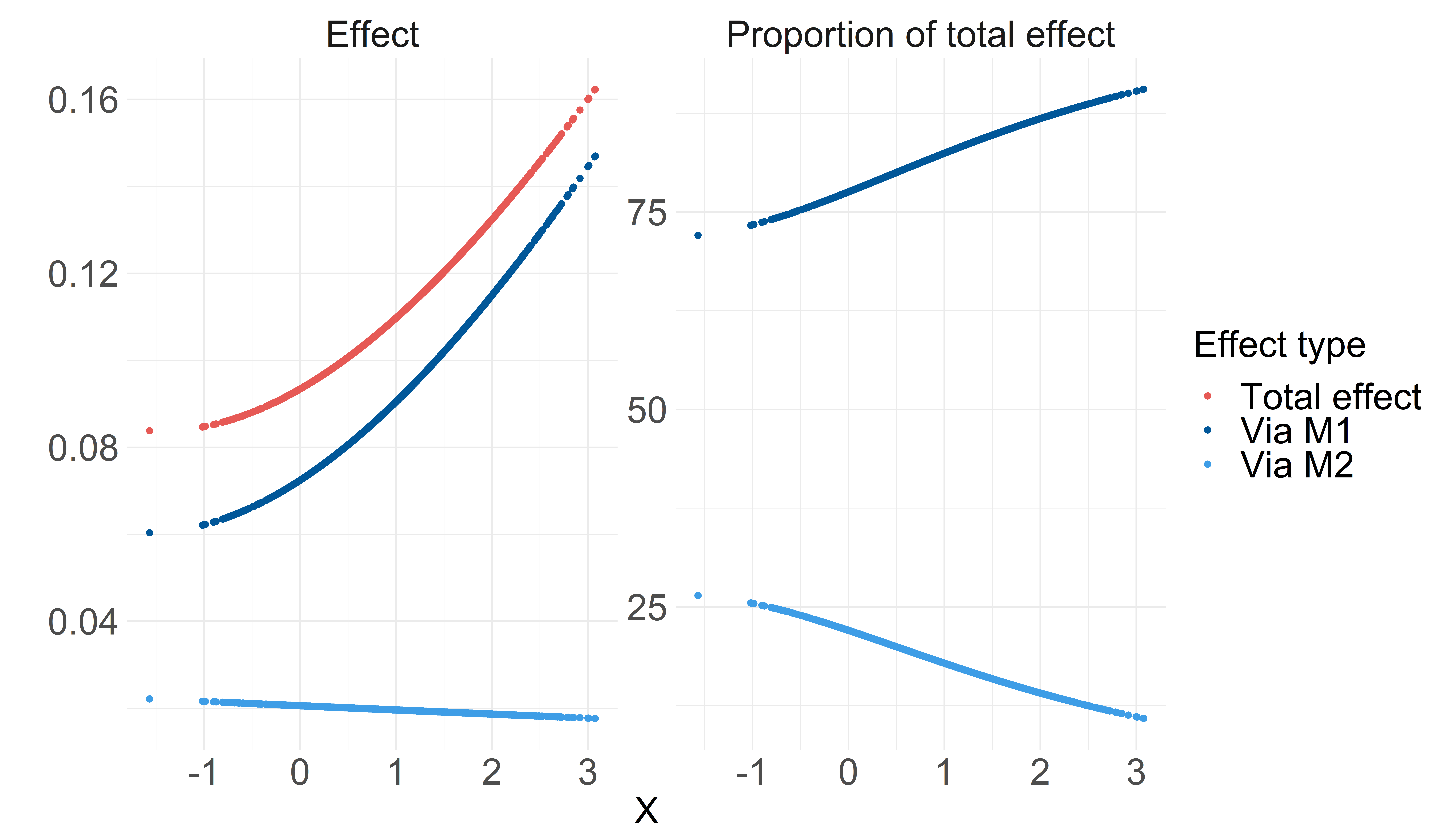}
\end{center}
\end{figure}

\subsection{Estimation: SuperLearner}

We evaluate the performance of each estimator across 1000 simulations on test samples of size 1000. To estimate the nuisance parameters we use SuperLearner, using both the ``SL.glm'' and ``SL.ranger'' libraries.\footnote{In practice it is desirable to use as many libraries as possible when running SuperLearner; however, for the sake of computation time we only use these two libraries for our simulation study.} These model our nuisance functions as a weighted combination of predictions from logistic regression and random forests models. After estimating the nuisance parameters using samples of size 1000, we use a separate test sample to estimate the DR-Learner and projection estimators. We then predict the points at $X = 0$ and $X = 2$. 

While we expect both estimates to be consistent, the mean square error at each point should differ by constants: this is due to differing inverse weights in the expression for $\varphi(Z)$. Figure 8 in Appendix 12 illustrates the maximum possible inverse weight as a function of X: at $X = 0$, the maximum weight is lowest while at $X = 2$ the maximum weight is highest. These two points arguably reflect the easiest and hardest parts of the covariate space to estimate, with the point where $X = 2$ reflecting a ``worst-case scenario'' in our simulation. As long as our assumptions hold, these weights do not affect the asymptotic results. However, they can affect their performance in finite samples, with higher variance estimates where the inverse weights are large. Additionally, confidence intervals may have under-coverage, since their validity is also based on asymptotic approximations. Consequently, we expect the simulations to show better results when estimating the CIIE at $X = 0$ compared to $X = 2$, also with possibly better coverage in these regions. More generally, this comparison highlights a key limitation of our proposed approach: in an actual sample it may be challenging to estimate conditional effects where the inverse weights are extreme.

\begin{table}[ht]
\centering
\caption{Projection estimators: simulation performance, $n = 1000$ \\ (averaged over 1000 simulations)} 
\label{tab:sim1}
\begin{tabular}{rllrrrrr}
  \hline
Point & Strategy & Projection & Truth & Bias & Std & RMSE & Coverage \\ 
  \hline
  0 & Plugin & Linear & 0.07 & 0.00 & 0.03 & 0.03 & 3.10 \\ 
    2 & Plugin & Linear & 0.11 & -0.04 & 0.02 & 0.04 & 1.00 \\ 
    0 & Plugin & Quadratic & 0.07 & -0.00 & 0.03 & 0.03 & 3.10 \\ 
    2 & Plugin & Quadratic & 0.11 & -0.04 & 0.02 & 0.05 & 0.80 \\ 
    0 & Efficient & Linear & 0.07 & 0.00 & 0.06 & 0.06 & 95.50 \\ 
    2 & Efficient & Linear & 0.11 & -0.00 & 0.08 & 0.08 & 93.80 \\ 
    0 & Efficient & Quadratic & 0.07 & 0.00 & 0.06 & 0.06 & 95.10 \\ 
    2 & Efficient & Quadratic & 0.11 & 0.00 & 0.10 & 0.10 & 93.50 \\ 
   \hline
\end{tabular}
\end{table}

Table \ref{tab:sim1} considers the projection estimates, and displays the bias, RMSE, and confidence interval coverage associated with our proposed approach (``Efficient'') and with a plugin approach that regresses plugin estimates of $\psi_{M_1}(x)$ on the same model. Specifically, the plugin estimates involve estimating each component in the expression for $\psi_{M_1}(x)$ and regressing these estimates, rather than estimates of $\varphi(Z; \eta)$, onto $g(X; \beta)$. We predict the projection at the points $X \in \{0, 2\}$ using either a linear or quadratic projection. While the plugin estimator has lower RMSE, the confidence interval coverage is close to zero. This reflects that the bias associated with the nuisance estimation does not converge quickly enough to zero, and we therefore cannot ignore the estimation error in the second-stage regression when conducting inference. By contrast, we obtain close to nominal coverage rates for the efficient estimator. Finally, as expected, the point $X = 0$ is easier to estimate than $X = 2$, reflected by the fact that the RMSE is higher for estimates at $X = 2$ than $X = 0$. 

Table \ref{tab:sim2} displays analogous results when using the DR-Learner to target the true CIIE rather than its projection.\footnote{Because the implied curves are relatively easy to approximate using a linear or quadratic model, we see that the ``Truth'' column in Table~\ref{tab:sim2} is identical to the ``Truth'' column in Table~\ref{tab:sim1}. In fact these are only identical rounded to the second decimal place, but this illustrates that the projections provide good estimates of the true quantities in our simulation.} We use smoothing splines with the default tuning parameters for the second-stage regression,\footnote{The tuning parameters are chosen by default using generalized cross-validation. We technically should account for post-selection inference; however, this does not seem to make a difference in practice in our simulations.} and use the variance estimates from the smoothing matrix and assume that the distribution of the estimates is asymptotically normal to generate confidence intervals. Table \ref{tab:sim2} shows that this procedure yields approximately nominal coverage rates. 

\begin{table}[ht]
\centering
\caption{Non-parametric estimators: simulation performance, $n = 1000$ \\ (averaged over 1000 simulations)} 
\label{tab:sim2}
\begin{tabular}{rlrrrrl}
  \hline
Point & Strategy & Truth & Bias & Std & RMSE & Coverage \\ 
  \hline
  0 & DR-Learner & 0.07 & 0.00 & 0.08 & 0.08 & 94.3 \\ 
    0 & Plugin & 0.07 & -0.00 & 0.03 & 0.03 & - \\ 
    2 & DR-Learner & 0.11 & -0.00 & 0.11 & 0.11 & 92.6 \\ 
    2 & Plugin & 0.11 & -0.04 & 0.03 & 0.05 & - \\ 
   \hline
\end{tabular}
\end{table}

As with the projection approach, the corresponding plugin approach has lower RMSE than the DR-Learner. This is likely a function of the inverse-probability weights associated with the DR-Learner, which could cause this result for a fixed sample size. 

\subsection{Estimation: convergence rates}

We conclude by examining the convergence rates of the DR-Learner versus a plugin estimator by specifying the convergence rates of the nuisance estimation. Roughly, we add $\mathcal{N}(C/n^{\alpha}, C/n^{2\alpha})$ to the true values of the nuisance parameters on the logistic scale to simulate estimates. Using these results we construct ``estimated'' influence functions and regress them onto $X$ using smoothing splines. In contrast to the simulation study above, we use these simulations to estimate the integrated mean square error across the entire domain of $X$. Figure~\ref{fig:sim-imse} displays the results. 

\begin{figure}[H]
\begin{center}
    \mycaption{Convergence of DR-Learner versus Plugin and Oracle estimators}{Scaled RMSE of each estimation strategy as a function of sample size. ``Slow'' nuisance function is estimated $\mathcal{O}_p(n^{-1/10})$ rates, remaining at $\mathcal{O}_p(n^{-1/2})$ rates.}\label{fig:sim-imse}
    \includegraphics[scale=0.45]{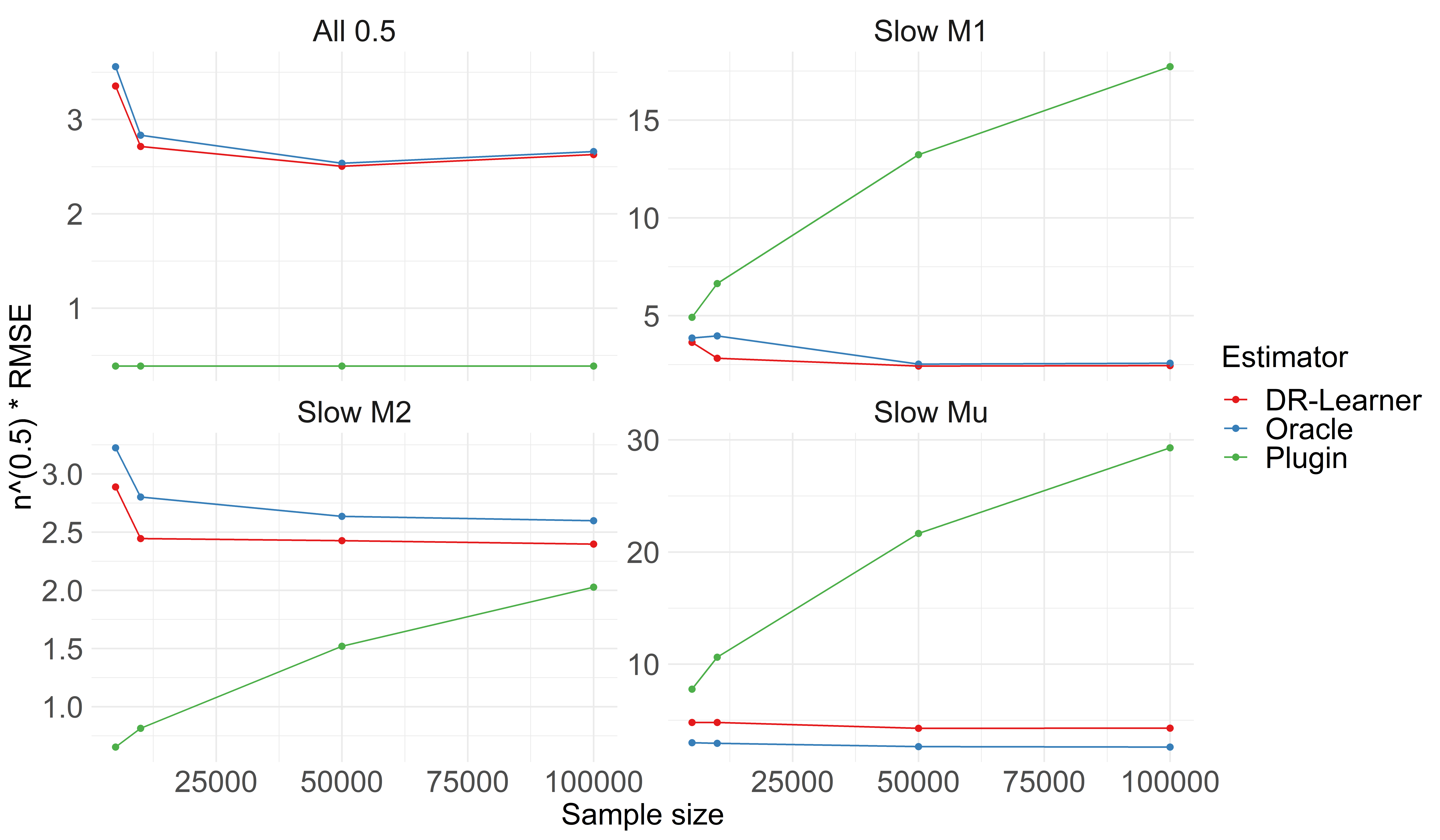}
\end{center}
\end{figure}

The y-axis displays the RMSE of each estimator scaled by $\sqrt{n}$, while the x-axis displays the sample sizes. The top left panel considers the case where we set $\alpha = 0.5$ for all nuisance parameters. The other panels instead set $\alpha = 0.1$ for the function indicated in the panel title. As expected, when all nuisance functions are estimated at the same rate the plugin estimator converges at the same rate as the DR-Learner. However, once one of the nuisance functions  is  estimated slowly, the plugin estimator converges more slowly (illustrated by the diverging green lines), while the DR-Learner appears to attain the oracle rates of convergence in all settings. We again observe that despite the slower convergence rates, the plugin estimator has lower RMSE than either the DR-Learner or oracle estimators in some settings. 

In Appendix~\ref{sec:propmed} we present additional results where we estimate the CIIE as a proportion of the corresponding CATE. We outline two general approaches to this problem: first, where we estimate the CIIE and the CATE and take the ratio of these estimates; second, where we derive the influence function for the mean of the ratio and regress this onto $V$ (this is similar to \cite{cuellar2020non}, who consider estimating a conditional risk-ratio). Our simulations show that this quantity is quite difficult to estimate well due to the high-variance of the estimators, although we are able to construct confidence intervals with approximately nominal coverage rates using either approach.

\section{Sensitivity analysis}\label{sec:4}

We consider estimating $\psi_{M_1}(v)$ when the assumption that the potential outcomes $Y^{am_1m_2}$ are independent of the mediators given the covariates and that $A = a$ does not hold. This might occur, for example, if there were a post-treatment confounder $L$ that occurs prior to $M$ but after $A$; or, a pre-treatment confounder $U$ that affects the Y-M relationship but not the A-M or A-Y relationships. We first outline a general sensitivity framework to generate bounds on the conditional or average effects while specifying the degree of these types of violations, where our approach builds from work in \cite{luedtke2015statistics}. We extend the projection estimator and DR-Learner to estimate bounds on the \textit{conditional} effects, though our proposed method naturally also suggests influence function based estimators of the bounds on the \textit{average} effects. These analyses can help an analyst assess how much inferences may change given a specified degree of confounding. We first briefly introduce additional assumptions and notation to ease exposition. 

\subsection{Setup and notation}

First, we assume for simplicity that $V = X$; that is, that the conditioning set is identical to the observed confounders, so that our target estimand is $\psi_{M_1}(x)$. Second, to construct a bound for $\psi_{M_1}(x)$, it will be helpful to write $\psi_{M_1}(x) = \psi_{M_1, a}(x) - \psi_{M_1, a'}(x)$, where:

\begin{align*}
    \psi_{M_1, a}(x)  &= \sum_{m_1,m_2}\mathbb{E}[Y^{m_1m_2} \mid a, x][p(m_1 \mid a, x)]p(m_2 \mid a', x) \\
    \psi_{M_1, a'}(x) &= \sum_{m_1,m_2}\mathbb{E}[Y^{m_1m_2} \mid a, x][p(m_1 \mid a', x)]p(m_2 \mid a', x)
\end{align*}
 Third, for any $(m_1', m_1, m_2', m_2)$ we let 

\begin{align*}
\mathbb{E}[Y^{m_1m_2} \mid m_1', m_2', a, x] = \mu_{am_1m_2}^\star(m_1', m_2', x)
\end{align*}
Assuming that Y-M ignorability holds when it does not, we define the biased target of our estimator of $\psi_{M_1}(x)$:

\begin{align*}
    \bar{\psi}_{M_1}(x) &= \bar{\psi}_{M_1, a}(x) - \bar{\psi}_{M_1, a'}(v) \\
    &= \sum_{m_1, m_2}\mu_a(m_1, m_2, x)[p(m_1 \mid a, x) - p(m_1 \mid a', x)]p(m_2 \mid a', x) 
\end{align*}
Finally, to ease notation, we let $(\cdot)$ indicate the arguments $(m_1, m_2, x)$. 

While we focus the remaining discussion on the case where $V = X$, all of these results also hold at a point $V = v$ by averaging the relevant quantities over the distribution $p(W \mid V = v)$, recalling that $X = [V, W]$.

\subsection{Sensitivity framework}

The bias $\bar{\psi}_{M_1}(x) - \psi_{M_1}(x)$ occurs because for any $(m_1, m_2, x)$, in general $\mathbb{E}[Y^{m_1m_2} \mid a, x] \ne \mu_a(\cdot)$. We first consider this bias of the outcome regression. Proposition 2 shows that we can bound this bias as a function of $\mu_a(\cdot)$ and a sensitivity parameter $\tau$. We consider a general framework where the meaning of $\tau$ changes based on the chosen sensitivity analysis; however, we describe the interpretation of this parameter under each assumption below.

\begin{proposition}\label{prop:bound-mu}
Assume that we know some functions $b_l(\cdot; \tau)$ and $b_u(\cdot; \tau)$ parameterized by $\tau$ such that for every $(m_1, m_2, x)$:

\begin{align*}
b_u(\cdot; \mu_a, \tau) \ge \mu_{am_1m_2}^\star(M_1 \ne m_1, M_2 \ne m_2, x) - \mu_a(\cdot) \ge b_l(\cdot; \mu_a, \tau) 
\end{align*}
This implies the following bounds:

\begin{align*}
    b_u[\cdot; \mu_a, \tau][1 - p(m_1, m_2 \mid a, x)] \ge \mathbb{E}[Y^{m_1m_2} \mid a, x] - \mu_a(\cdot) \ge  b_l[\cdot; \mu_a, \tau][1 - p(m_1, m_2 \mid a, x)]
\end{align*}
\end{proposition}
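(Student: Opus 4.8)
The plan is to decompose $\mathbb{E}[Y^{m_1m_2}\mid a,x]$ by the tower property over the \emph{observed} mediator strata, peel off the ``diagonal'' stratum using consistency, and then apply the assumed pointwise bounds to the remaining off-diagonal residuals, with the weights doing the rest of the work.

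First I would condition on $(A=a,X=x)$ and then iterate over the realized mediators $(M_1,M_2)$: by positivity (Assumption~\ref{assumption:positivity}) every conditional mean and probability below is well defined, and
\begin{align*}
\mathbb{E}[Y^{m_1m_2}\mid a,x] = \sum_{m_1',m_2'} \mu^\star_{am_1m_2}(m_1',m_2',x)\, p(m_1',m_2'\mid a,x).
\end{align*}
Next I would isolate the term with $(m_1',m_2')=(m_1,m_2)$. Consistency (Assumption~\ref{assumption:consistency}) gives $\mu^\star_{am_1m_2}(m_1,m_2,x)=\mathbb{E}[Y\mid m_1,m_2,a,x]=\mu_a(\cdot)$, so that term equals $\mu_a(\cdot)\,p(m_1,m_2\mid a,x)$. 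Subtracting $\mu_a(\cdot)=\mu_a(\cdot)\sum_{m_1',m_2'}p(m_1',m_2'\mid a,x)$ from both sides, the diagonal contributions cancel and
\begin{align*}
\mathbb{E}[Y^{m_1m_2}\mid a,x] - \mu_a(\cdot) = \sum_{(m_1',m_2')\neq (m_1,m_2)} \bigl(\mu^\star_{am_1m_2}(m_1',m_2',x)-\mu_a(\cdot)\bigr)\, p(m_1',m_2'\mid a,x).
\end{align*}
Finally I would apply the hypothesis: each off-diagonal residual $\mu^\star_{am_1m_2}(m_1',m_2',x)-\mu_a(\cdot)$ lies in $[b_l(\cdot;\mu_a,\tau),\,b_u(\cdot;\mu_a,\tau)]$, the weights $p(m_1',m_2'\mid a,x)$ are nonnegative, and they sum to $1-p(m_1,m_2\mid a,x)\ge 0$; hence the weighted average is squeezed between $b_l(\cdot;\mu_a,\tau)[1-p(m_1,m_2\mid a,x)]$ and $b_u(\cdot;\mu_a,\tau)[1-p(m_1,m_2\mid a,x)]$, which is exactly the claim.

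This argument is essentially bookkeeping, so there is no deep obstacle; the one point requiring care is the reading of the hypothesis. For the cancellation to leave only off-diagonal residuals, the assumed inequalities must be understood as holding for \emph{every} stratum $(m_1',m_2')$ that differs from $(m_1,m_2)$ in at least one coordinate (not merely the stratum where both coordinates differ), uniformly in that stratum; I would state this interpretation explicitly. I would also flag that the sign of $1-p(m_1,m_2\mid a,x)$, which is nonnegative, is what preserves the direction of the inequalities when factoring it out of the sum.
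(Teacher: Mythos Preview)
Your argument is correct and matches the paper's own proof, which simply invokes the law of iterated expectations to write $\mathbb{E}[Y^{m_1m_2}\mid a,x]-\mu_a(\cdot)$ as the off-diagonal weighted sum and then replaces each bracketed difference by $b_l$ or $b_u$. Your version is more explicit---spelling out the role of consistency for the diagonal term, the nonnegativity of the weights, and the interpretation that the hypothesis must hold for every stratum $(m_1',m_2')\neq(m_1,m_2)$ (not only those differing in both coordinates)---which is a genuine clarification of something the paper's notation leaves ambiguous.
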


Different assumptions on the selection process can motivate different functions $b_l$ and $b_u$. For example, let $\tau(m_1, m_2, x) \in [0, 1]$. Consider the following three assumptions for any $(m_1' \ne m_1)$ and $(m_2' \ne m_2)$:

\begin{align}
    \label{eqn:a1}& \tau(\cdot) \ge \rvert \mu_{am_1m_2}^\star(m_1',m_2',x) - \mu_a(\cdot) \lvert \\
    \label{eqn:a2}&\tau(\cdot) + \mu_a(\cdot)[1 - \tau(\cdot)] \ge \mu_{am_1m_2}^\star(m_1',m_2',x) \ge (1 - \tau(\cdot))\mu_a(\cdot) \\
    \label{eqn:a3}&1 / (1-\tau(\cdot)) \ge \mu_{am_1m_2}^\star(m_1',m_2',x) / \mu_a(\cdot) \ge (1 - \tau(\cdot)) 
\end{align}

Under assumption~\ref{eqn:a1}, $\tau$ bounds the absolute value of the difference between the regression functions $\mu_{am_1m_2}^\star(m_1',m_2',x)$ and $\mu_a(\cdot)$ for each level of $(m_1, m_2, x)$. Under assumption~\ref{eqn:a3}, $\tau$ parameterizes deviations of these same regression functions on the risk-ratio scale: below by $1 - \tau$, and above by $\frac{1}{1-\tau}$.\footnote{Often a sensitivity parameter $\Gamma$, which equals $\frac{1}{1-\tau}$, is used instead in this framework. We choose $\tau$ here to maintain consistency with the other two possible approaches.} Finally, the meaning of $\tau$ changes for the upper and lower bound under assumption~\ref{eqn:a2}. First, the lower bound is equivalent to the lower bound in assumption~\ref{eqn:a3}, and $\tau$ retains an equivalent meaning in this case. However, the upper bound instead specifies that $(1 - \mu_{am_1m_2}^\star(m_1',m_2',x)) / (1 - \mu_a(\cdot)) \ge (1 - \tau(\cdot))$. Under this assumption $\tau$ parameterizes the risk-ratio of the regression function when the event $Y$ did not occur.\footnote{This bound can be used when $Y$ is binary, but more generally when $Y$ is bounded and rescaled to fall within zero and one.} We discuss the trade-offs between these assumptions in greater detail below.

While these assumptions provide bounds on the true outcome model, they imply, but are not equivalent to, bounds on $\psi_{M_1}(x)$. Proposition~\ref{prop:bounds-psi-m1} provides a generic form of these bounds.

\begin{proposition}\label{prop:bounds-psi-m1}

Consider assumptions (\ref{eqn:a1})-(\ref{eqn:a3}) and a sensitivity parameter $\tau(x)$ that is valid for any value of $(m_1, m_2)$ at the point $X = x$. All $[b_l, b_u]$ implied by these assumptions can be expressed as: 

\begin{align*}
    [b_l, b_u] = [(c_l\mu_a(\cdot) + t_l)f_l(\tau(x)), (c_u\mu_a(\cdot) + t_u)f_u(\tau(x))]
\end{align*}
for constants $(c_l, c_u, t_l, t_u) \in \{0, 1\}^4$ and functions $f_l$ and $f_u$. At a point $X = x$, this yields the following bounds on $\psi_{M_1}(x)$:

\begin{align}
    \label{eqn:upperboundx}&\psi_{M_1, ub}(x; \tau) = \\
    \nonumber&[\bar{\psi}_{M_1}(x) + \bar{\psi}_{M_1, a}(x)f_u(\tau)c_u - \bar{\psi}_{M_1, a'}(x)f_l(\tau)c_l + t_uf_u(\tau) - t_lf_l(\tau) \\
    \nonumber&- f_u(\tau)\sum_{m_1,m_2}[c_u\mu_a(m_1,m_2,x) + t_u]p(m_1, m_2 \mid a, x)p(m_1 \mid a, x)p(m_2 \mid a', x) \\
    \nonumber&+ f_l(\tau)\sum_{m_1,m_2}[c_l\mu_a(m_1,m_2,x) + t_l]p(m_1, m_2 \mid a, x)p(m_1 \mid a', x)p(m_2 \mid a', x)] \\
    \label{eqn:lowerboundx}&\psi_{M_1, lb}(x; \tau) = \\
    \nonumber&[\bar{\psi}_{M_1}(x) + \bar{\psi}_{M_1, a}(x)f_l(\tau)c_l - \bar{\psi}_{M_1, a'}(x)f_u(\tau)c_u + t_lf_l(\tau) - t_uf_u(\tau) \\
    \nonumber&- f_l(\tau)\sum_{m_1,m_2}[c_l\mu_a(m_1,m_2,x) + t_l]p(m_1, m_2 \mid a, x)p(m_1 \mid a, x)p(m_2 \mid a', x) \\
    \nonumber&+ f_u(\tau)\sum_{m_1,m_2}[c_u\mu_a(m_1,m_2,x) + t_u]p(m_1, m_2 \mid a, x)p(m_1 \mid a', x)p(m_2 \mid a', x)]
\end{align}
\end{proposition}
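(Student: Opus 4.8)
The plan is to lift the pointwise envelope on the outcome regression from Proposition~\ref{prop:bound-mu} to an envelope on $\psi_{M_1}(x)$, using that the only non-identified object entering $\psi_{M_1}(x)$ is $\mathbb{E}[Y^{m_1m_2}\mid a,x]$. Write $\mathbb{E}[Y^{m_1m_2}\mid a,x]=\mu_a(\cdot)+\delta(\cdot)$, so $\delta(\cdot)$ is the outcome-regression bias at $(m_1,m_2,x)$, and Proposition~\ref{prop:bound-mu} gives $b_l[\cdot;\mu_a,\tau][1-p(m_1,m_2\mid a,x)]\le\delta(\cdot)\le b_u[\cdot;\mu_a,\tau][1-p(m_1,m_2\mid a,x)]$. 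First I would check the affine reduction: for each of (\ref{eqn:a1})--(\ref{eqn:a3}), solve the defining inequality for $\mu_{am_1m_2}^\star(m_1',m_2',x)-\mu_a(\cdot)$ and read off $[b_l,b_u]$ in the claimed form $[(c_l\mu_a(\cdot)+t_l)f_l(\tau(x)),(c_u\mu_a(\cdot)+t_u)f_u(\tau(x))]$ — e.g. under (\ref{eqn:a1}) one may take $c_l=c_u=0$, $t_l=t_u=1$, $f_u(\tau)=\tau$, $f_l(\tau)=-\tau$; under (\ref{eqn:a3}), $c_l=c_u=1$, $t_l=t_u=0$, $f_u(\tau)=\tau/(1-\tau)$, $f_l(\tau)=-\tau$; and under (\ref{eqn:a2}) the lower bound coincides with that of (\ref{eqn:a3}) while the upper bound comes from the stated constraint on $(1-\mu_{am_1m_2}^\star)/(1-\mu_a)$ after rescaling a bounded $Y$ to $[0,1]$. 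All that matters downstream is that the $\tau(x)$-dependent factors $f_l,f_u$ can be pulled outside sums over $(m_1,m_2)$, which is exactly what the hypothesis that $\tau(x)$ is valid for every $(m_1,m_2)$ buys.

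Next I would decompose $\psi_{M_1}(x)=\psi_{M_1,a}(x)-\psi_{M_1,a'}(x)$ and substitute $\mathbb{E}[Y^{m_1m_2}\mid a,x]=\mu_a(\cdot)+\delta(\cdot)$ into each term, giving
\[
\psi_{M_1,a}(x)=\bar{\psi}_{M_1,a}(x)+\sum_{m_1,m_2}\delta(\cdot)\,p(m_1\mid a,x)p(m_2\mid a',x),
\]
and the analogous identity for $\psi_{M_1,a'}(x)$ with $p(m_1\mid a',x)$ in place of $p(m_1\mid a,x)$. The key observation is that the mixing weights $p(m_1\mid a,x)p(m_2\mid a',x)$ and $p(m_1\mid a',x)p(m_2\mid a',x)$ are nonnegative, so within each of these two sums $\delta(\cdot)$ may be replaced by its upper (resp.\ lower) envelope with no sign bookkeeping over $(m_1,m_2)$. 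This is precisely why one splits $\psi_{M_1}$ along the marginal probabilities rather than along the (possibly sign-indefinite) contrast $p(m_1\mid a,x)-p(m_1\mid a',x)$.

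To assemble the upper bound I would replace $\delta(\cdot)$ by $b_u[\cdot][1-p(m_1,m_2\mid a,x)]$ in the $\psi_{M_1,a}(x)$ sum and by $b_l[\cdot][1-p(m_1,m_2\mid a,x)]$ in the $\psi_{M_1,a'}(x)$ sum, so that $\psi_{M_1}(x)=\psi_{M_1,a}(x)-\psi_{M_1,a'}(x)$ is bounded above. Substituting the affine forms, pulling $f_u(\tau),f_l(\tau)$ out, and splitting $[1-p(m_1,m_2\mid a,x)]$: the ``$1$'' part collapses via $\sum_{m_1,m_2}p(m_1\mid a,x)p(m_2\mid a',x)=1$ and $\sum_{m_1,m_2}\mu_a(\cdot)p(m_1\mid a,x)p(m_2\mid a',x)=\bar{\psi}_{M_1,a}(x)$ (and the $a'$ analogues) into $c_u\bar{\psi}_{M_1,a}(x)+t_u$ and $c_l\bar{\psi}_{M_1,a'}(x)+t_l$, while the ``$-p$'' part leaves exactly the residual sums $\sum_{m_1,m_2}[c_u\mu_a(\cdot)+t_u]p(m_1,m_2\mid a,x)p(m_1\mid a,x)p(m_2\mid a',x)$ and its $a'$ counterpart. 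Collecting the four contributions yields \eqref{eqn:upperboundx}; the mirror-image choice (lower envelope on $\psi_{M_1,a}$, upper envelope on $\psi_{M_1,a'}$) yields \eqref{eqn:lowerboundx}.

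There is no deep obstacle here: once the affine reduction is in hand the argument is algebraic bookkeeping. The two points needing care are (i) that $f_l$ is genuinely negative in the cases above, so it is the sign of the \emph{bound on $\delta$}, not of $f$, that dictates which envelope to substitute; and (ii) that the marginal-probability split in the second step is what keeps every mixing weight nonnegative — absent it one would have to case-split on $\operatorname{sign}(p(m_1\mid a,x)-p(m_1\mid a',x))$ and would not obtain the stated closed form. Verifying the affine reduction for assumption (\ref{eqn:a2}) — in particular pinning down exactly where the $(1-Y)$ rescaling enters — is the one step that takes a little thought.
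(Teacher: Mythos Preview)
Your proposal is correct and follows essentially the same route as the paper's proof: verify the affine form of $[b_l,b_u]$ case by case, invoke Proposition~\ref{prop:bound-mu}, and then push the envelope through the sums defining $\psi_{M_1}(x)$. If anything, you are more careful than the paper on one point: the paper's proof sketch says to ``average over $[p(m_1\mid a,x)-p(m_1\mid a',x)]p(m_2\mid a',x)$,'' which glosses over the sign of that contrast, whereas you explicitly split $\psi_{M_1}(x)=\psi_{M_1,a}(x)-\psi_{M_1,a'}(x)$ so that each mixing weight is nonnegative before substituting the upper and lower envelopes separately---this is exactly what the final form of the bounds in \eqref{eqn:upperboundx}--\eqref{eqn:lowerboundx} requires, and your explanation of why this split is necessary is a genuine clarification.
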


\begin{remark}
If we desired bounds at the point $V = v$, we could choose a $\tau$ valid for any realization of $W$ at the point $V = v$ and average these expressions over the conditional distribution of $W$ given $V = v$. If we desired bounds on the average effect, we could choose a $\tau$ valid for all $(x, m_1, m_2)$ and marginalize the above expressions over the entire covariate distribution. 
\end{remark}

The assumptions outlined in equations (\ref{eqn:a1})-(\ref{eqn:a3}) yield different bounds. While \eqref{eqn:a1} is perhaps most intuitive, for a given $\tau$ the width of the implied bounds on $\psi_{M_1}$ in equations (\ref{eqn:upperboundx}) and (\ref{eqn:lowerboundx}) are twice as large as those from \eqref{eqn:a2}, and are thus perhaps less useful in practice than the others. Comparing the assumptions in equations (\ref{eqn:a2}) and (\ref{eqn:a3}) is difficult: for a fixed $\tau$ the scale of the confounding for the upper bounds of $\mu_{am_1m_2}^\star$ in these equations is simply different. One benefit of \eqref{eqn:a3} is that it only requires reasoning about the risk-ratio $\mu_{am_1m_2}^\star(m_1',m_2',x) / \mu_a(\cdot)$. By contrast \eqref{eqn:a2} requires additional reasoning about the risk-ratio bias in the estimates that the event $Y$ did not occur, demanding more thought from the user. On the other hand, when using a binary outcome, equation (\ref{eqn:a3}) may result in an upper bound on $\mu_{am_1m_2}^\star$ greater than one, while equation (\ref{eqn:a2}), and the resulting bound on $\psi_{M_1}(x)$, will always respect the parameter space. Finally, for a fixed value of $\tau$ it is unclear whether the bounds yielded by equations (\ref{eqn:a2}) or (\ref{eqn:a3}) will be wider. However, for fixed $(m_1, m_2, x)$, the upper bound on $\mu_{am_1m_2}^\star$ in \eqref{eqn:a3} will always be larger than the bounds in \eqref{eqn:a2} when $\mu_a(\cdot) \ge 0.5$. Heuristically, we may therefore expect the bounds yielded by \eqref{eqn:a3} to be narrower than those from \eqref{eqn:a2} when $\mu_a$ tends to be small across values of $(m_1,m_2,x)$. As a final point, the bound given by \eqref{eqn:a2} is only useful for binary outcomes, or bounded outcomes rescaled to fall within 0 and 1, so that the sensitivity analysis given by \eqref{eqn:a3} is more general.

Finally, we can extend this general approach in several ways. For example, Assumptions (\ref{eqn:a1})-(\ref{eqn:a3}) yield similar bounds for $\psi_{M_2}$, $\psi_{Cov}$, $\psi_{IDE}$ by averaging over the relevant distributions. We discuss these extensions in Appendix~\ref{app:bounds}. We could also specify $\tau$ as a function that varies across $(m_1, m_2, x)$ to arrive at a slightly different expression for the bounds. However, specifying this function would be challenging in practice.

\subsection{Illustration}

Figure~\ref{fig:sim-bounds} uses simulated data to illustrate the estimand, the biased target, and the bounds as a function of $x$ choosing $\tau = 0.1$ under \eqref{eqn:a2} and $\tau = 0.15$ under \eqref{eqn:a3}. These parameters reflect the true maximal values of $\tau$ guaranteed to hold under these assumptions in our simulation. We obtain biased estimates for our outcome model based on \eqref{eqn:a2} and a parameter $\tau$ that varies between $0.1 * \{1/3, 2/3, 1\}$ depending on the value of $x$. We describe the selection process in greater detail in Appendix~\ref{sec:selection}. The upper and lower bounds are depicted in purple and red, while the orange and green lines reflects $\bar{\psi}_{M_1}(x)$ and $\psi_{M_1}(x)$, respectively. As $x$ increases, the bounds given by \eqref{eqn:a3} are at first narrower and eventually wider than the bounds given by \eqref{eqn:a2}. This is generally expected as the values of $\mu_a(m_1,m_2,x)$ tend to increase with $x$ (see Figure \ref{fig:sim-selectionbias2} in Appendix~\ref{app:simulation}). These bounds are also quite conservative: assuming \eqref{eqn:a2}, the bounds are only guaranteed to hold for all $x$ for $\tau = 0.1$. However, even $\tau = 0.02$ provides valid upper and lower bounds across the entire domain in our simulation. Of course, the gap between the value of $\tau$ guaranteed to hold and the minimum $\tau$ that actually does may be smaller for other data distributions; however, it does suggest that these bounds may be conservative in practice.

\begin{figure}[H]
\begin{center}
    \mycaption{Bounds on CIIE}{Target estimand in green, biased target of inference in orange. Purple and red lines reflect upper and lower bounds that differ in terms of $\tau$ specification.}\label{fig:sim-bounds}
    \includegraphics[scale=0.35]{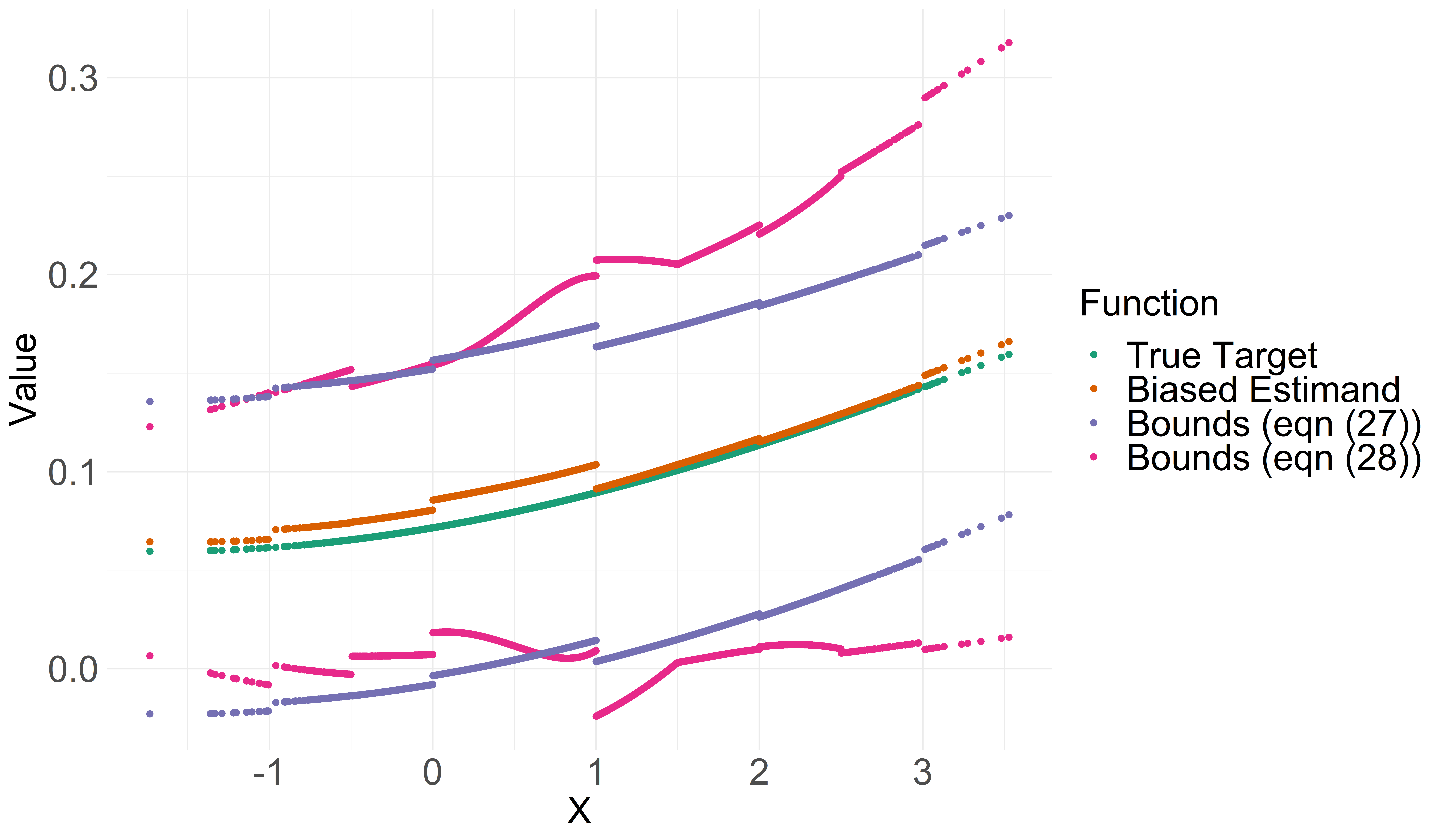}
\end{center}
\end{figure}

\subsection{Alternative approach}

\cite{tchetgen2012semiparametric} and \cite{vanderweele2014sensitivity} considered similar approaches for bounds on natural effects under the assumption that M-A and Y-A ignorability holds but that Y-M ignorability does not. Both proposals assume a known selection function that holds with equality rather than inequality. We could modify our proposed approach in a similar spirit. For example, we could assume that for all $(m_1' \ne m_1)$, $(m_2' \ne m_2)$:

\begin{align*}
\mu_{am_1m_2}^\star(m_1', m_2', x) - \mu_a(m_1, m_2, x) = f(\tau)[c\mu_a(m_1, m_2, x) + t]
\end{align*}
We could then recover $\psi_{M_1}(x)$ (and any averages of it) exactly as:

\begin{align*}
    \psi_{M_1}(x) &= \bar{\psi}_{M_1}(x)[1 + cf(\tau)] \\
    &-c\sum_{m_1,m_2}\mu_a(m_1,m_2,x) p(m_1, m_2 \mid a, x)[p(m_1 \mid a, x) - p(m_1 \mid a', x)]p(m_2 \mid a', x) \\
    &- tf(\tau)\sum_{m_1,m_2}p(m_1, m_2 \mid a, x)[p(m_1 \mid a, x) - p(m_1 \mid a', x)]p(m_2 \mid a', x)
\end{align*}
While such an assumption would allow us to point identify $\psi_{M_1}(x)$, the concept requires knowledge about a selection function that we are unlikely to have. Despite the conservative inferences, we therefore prefer our proposed approach.

\subsection{Estimation}

We extend all of the above methods to estimate the bounds on $\psi_{M_1}(v)$. Theorem~\ref{theorem4} in Appendix~\ref{sec:othertheorems} provides the expressions for the influence functions of the bounds on the average effect $\psi_{M_1}$. Equipped with this expression, we can again use a projection estimator or the DR-Learner to estimate the bounds.\footnote{Moreover, the choice of $\tau$ need only be valid across the domain of $x$ where the weights $w(v; V^n)$ in the second-stage regression are non-zero.} Intuitively, these approaches share the property that the upper bound on the convergence rates in the estimation is governed by the products of errors in the nuisance estimation. Corollaries~\ref{corollary4} and~\ref{corollary5} in Appendix~\ref{sec:othertheorems} give formal statements of these results using the lower bound as an example, although we can derive an upper bound analogously. We also provide expressions for the influence function for the bounds of $\psi_{M_2}$, $\psi_{Cov}$, and $\psi_{IDE}$ in Appendix~\ref{app:bounds} and conjecture that results analogous to Corollaries~\ref{corollary4} and~\ref{corollary5} can be derived for these estimands. Finally, for a fixed $\tau$, Theorem~\ref{theorem5} in Appendix~\ref{sec:othertheorems} establishes the conditions where the one-step estimator for the bounds on the \textit{average} effects is root-n consistent and asymptotically normal. We illustrate this estimation procedure in the application in Section~\ref{sec:5}.

\section{Application}\label{sec:5}

To demonstrate these methods we revisit the data and application considered in \cite{rubinstein2023}, who examined the effect of COVID-19 vaccinations on depression, social isolation, and worries about health during February 2021 using the COVID-19 Vaccine Trends and Impact Survey (CTIS). The Delphi group at Carnegie Mellon University (CMU) conducted the CTIS from April 2020 through June 2022 in collaboration with the Facebook Data for Good group \citep{salomon2021us}. Using this data, \cite{rubinstein2023} posit a model that COVID-19 vaccinations affect depression via a direct path, social isolation, and worries about health. Using the decomposition from \cite{vansteelandt2017interventional}, they found that pathways via social isolation were more important than pathways via worries about health in explaining the effect of COVID-19 vaccinations on depression. We refer to that paper for details on the data and the limitations of this analysis. While this study examined effect heterogeneity, the authors only examined heterogeneity within discrete subgroups and primarily focused on the outcomes analysis. Moreover, the authors did not find substantial effect heterogeneity with respect to the mediation analysis among the specified subgroups.

We examine the decomposition of the total effect within the following subset of CTIS respondents: employed, non-Hispanic, White respondents aged 25-54 with at least a college degree, no chronic health conditions, who work outside the home, and who had previously received an influenza vaccination. This included a total of 13,764 individuals. Table~\ref{tab:table1} displays the average effect estimates using influence-function based estimators, where the nuisance parameters were estimated using twenty stacked XGBoost models with different hyperparameter settings on the full dataset. While restricted to a much smaller subgroup, these results are qualitatively comparable to the average estimates in \cite{rubinstein2023}.

\begin{table}[ht]\caption{Average effect estimates on CTIS subset in February 2021, $N = 13,764$}\label{tab:table1}
\centering
\begin{tabular}{lrrr}  
\hline
Estimand & Estimate & Lower 95\% CI & Upper 95\% CI \\ 
  \hline
Total effect & -4.61 & -6.10 & -3.12 \\ 
  IIE - M1 & -1.86 & -2.45 & -1.28 \\ 
  IIE - M2 & -0.46 & -0.71 & -0.20 \\ 
  IIE - Cov & 0.08 & -0.15 & 0.30 \\ 
  IDE & -2.37 & -3.81 & -0.92 \\ 
   \hline
\end{tabular}
\end{table}

We next compare whether the interventional effects differed among those who live in counties where Trump led Biden by 50 percentage points in the 2020 election (``Trump counties''), and those where Biden led Trump by 50 percentage points (``Biden counties''). By limiting our sample to the subgroup defined above, effect heterogeneity across the Biden vote share may proxy for how social factors may moderate the mediated effects.\footnote{Since we are unable to fully control for socio-demographic variables, this variable may also pick up on these moderating influence of these omitted factors that vary with the Biden vote share.} Specifically, we hypothesize that these relatively educated, vaccine-accepting, and health conscious respondents who live in Trump-voting counties may have lower total effects than those who live in Biden areas due to the added stress of living in areas that generally took relatively fewer COVID precautions. We similarly hypothesize that the effects via worries about health might be lower in Trump-voting counties than Biden-voting counties for this same reason. Figure~\ref{fig:application} displays the results using both the DR-Learner and projection estimators at these two points, where we use a simple linear model for the projection.\footnote{While we use sample-splitting to estimate $\eta$ and construct influence function value estimates, we run the second-stage regression on the entire sample instead of averaging two separate estimates.} Figure~\ref{fig:application1} in Appendix~\ref{app:application} display the entire estimated curves.

\begin{figure}[H]
\begin{center}
    \mycaption{Application results}{Conditional total, indirect, and direct effects in Biden (Pct Dem Lead = 50) versus Trump (Pct Dem Lead = -50) counties}\label{fig:application}
    \includegraphics[scale=0.35]{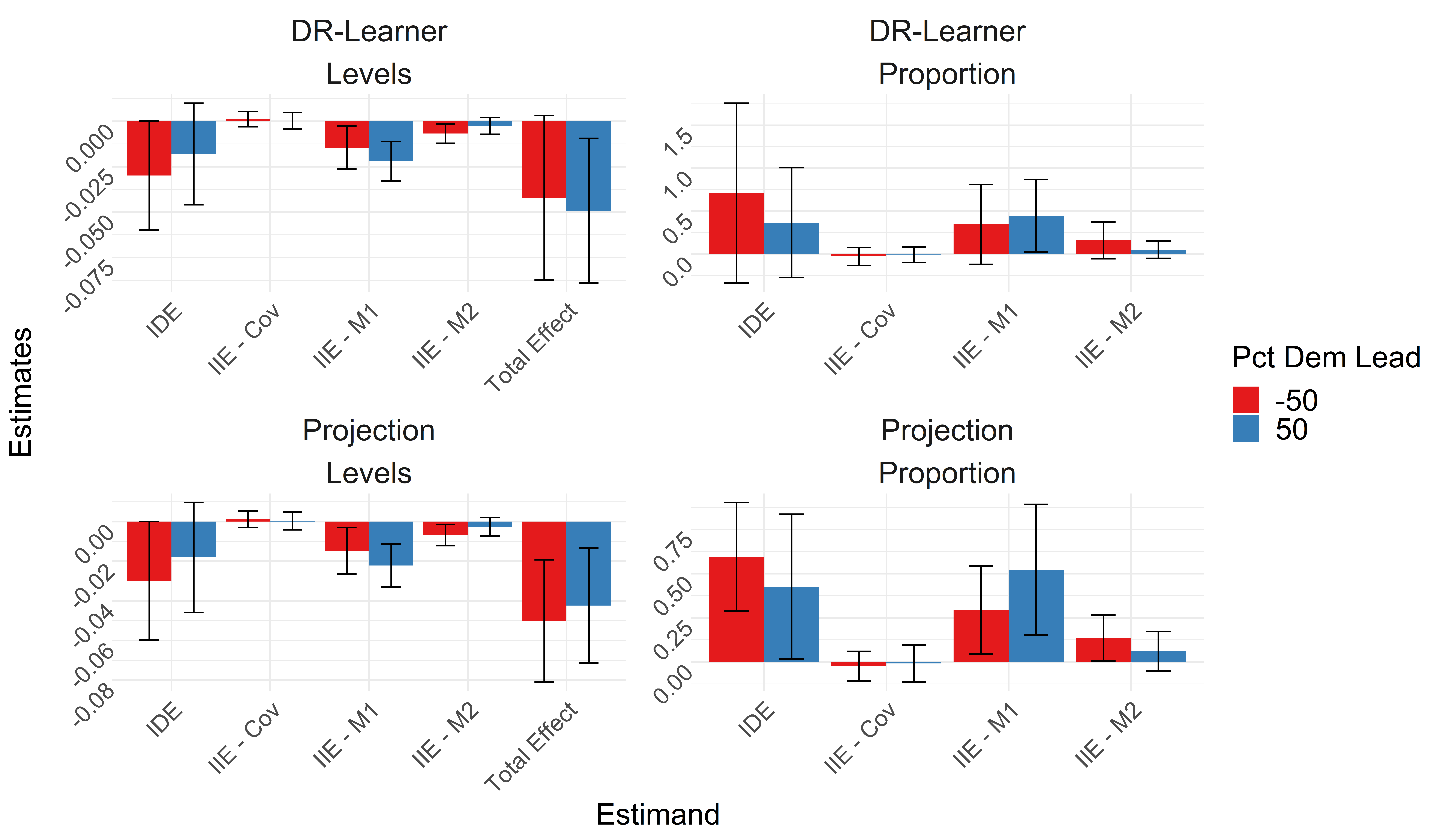}
\end{center}
\end{figure}

The total effect estimates are comparable in the Trump counties relative to Biden counties; however, the projection estimates are slightly lower in Trump relative to Biden counties while the DR-Learner suggests these effects may be slightly higher.\footnote{The uncertainty estimates for the proportion mediated are obtained via the delta method, and the DR-Learner uncertainty estimates are only valid assuming positive dependence between the errors in the models. The uncertainty estimates for the DR-Learner also do not account for post-selection inference and are therefore likely to be anti-conservative.} On the other hand, the effects via worries about health and social isolation are nearly identical for both the the projection-estimator and DR-Learner. As seen in Figure~\ref{fig:application1} in Appendix~\ref{app:application}, the chosen smoothing parameter ends up essentially fitting a linear model for all functions other than the total effect. Regardless, the point estimates are consistent with our expectations, where effects via worries about health are lower in Trump counties relative to Biden counties. Meanwhile, effects via isolation appear slightly larger in Trump counties relative to Biden counties. However, all observed differences in these effects are small relative to the uncertainty estimates and we are unable to draw statistically significant conclusions. 

\subsection{Sensitivity analysis}

We conduct a sensitivity analysis for $\psi_{M_1}$ both on average and as a function of Biden's vote share. Figure~\ref{fig:bound-application} displays the results assuming \eqref{eqn:a3} and where the conditional bounds are estimated using the DR-Learner. 

We find that that our average effect estimates are robust to a $\tau$ as large as 0.05, where $\tau$ parameterizes the deviations of the unobserved counterfactual regression function to the observed regression function on the risk-ratio scale (see equation~\ref{eqn:a3}). In other words, if this ratio were less than $0.95$ ($1 - 0.05$), or greater than $1.05$ ($\frac{1}{1-0.05}$) for any value of $(x, m_1, m_2)$, our bounds would include a null effect. Our conditional effect estimates are less robust, in part due to the greater uncertainty estimates. For example, our estimates for Trump counties is robust only up to $\tau$ of 0.01 and for Biden counties is robust to $\tau$ of 0.03. 

As a point of comparison, if we assumed that no unmeasured confounding held conditional on $X$, but we failed to control for \textit{any} covariates, across all values of $(m_1, m_2)$ we would calculate a maximal $\tau = 0.95$. To be precise, we estimate that:

\begin{align*}
\frac{1}{1-0.95} &\ge \frac{\mathbb{E}[Y^{m_1m_2} \mid A = a, M_1 \ne m_1, M_2 \ne m_2]}{\mathbb{E}[Y^{m_1m_2} \mid A = a, M_1 = m_1, M_2 = M_2]} \\
&= \frac{\mathbb{E}[\mathbb{E}[Y \mid A = a, M_1 = m_1, M_2 = m_2, X]]}{\mathbb{E}[Y \mid A = a, M_1 = m_1, M_2 = M_2]}
\ge (1 - 0.95)
\end{align*}
where the equality holds via assuming no unmeasured confounding conditional on $X$ and consistency. Therefore, a set of unmeasured confounders with comparable association with the potential outcome regression would easily explain away our estimated effects, as we find that we would be unable to rule out a null effect at $\tau = 0.05$. In other words, our significant effect would disappear if there were some unmeasured confounder $U$ that were at least approximately 5\% (0.05 / 0.95) as associative with the outcome as our entire observed covariate set. However, this comparison might be best thought of as a ``worst-case scenario,'' as we estimate $\tau$ using all measured confounders and our covariate set is quite rich. Interesting future work would be to estimate different values of $\tau$ under different covariate subsets to obtain possibly less conservative ranges of $\tau$. Sensitivity results for the remaining parameters are available in Appendix~\ref{app:application}.

\begin{figure}[H]
\begin{center}
    \mycaption{Bounds for application}{Bounds for average and conditional interventional indirect effects via social isolation as a function of $\tau$}\label{fig:bound-application}
    \includegraphics[scale=0.35]{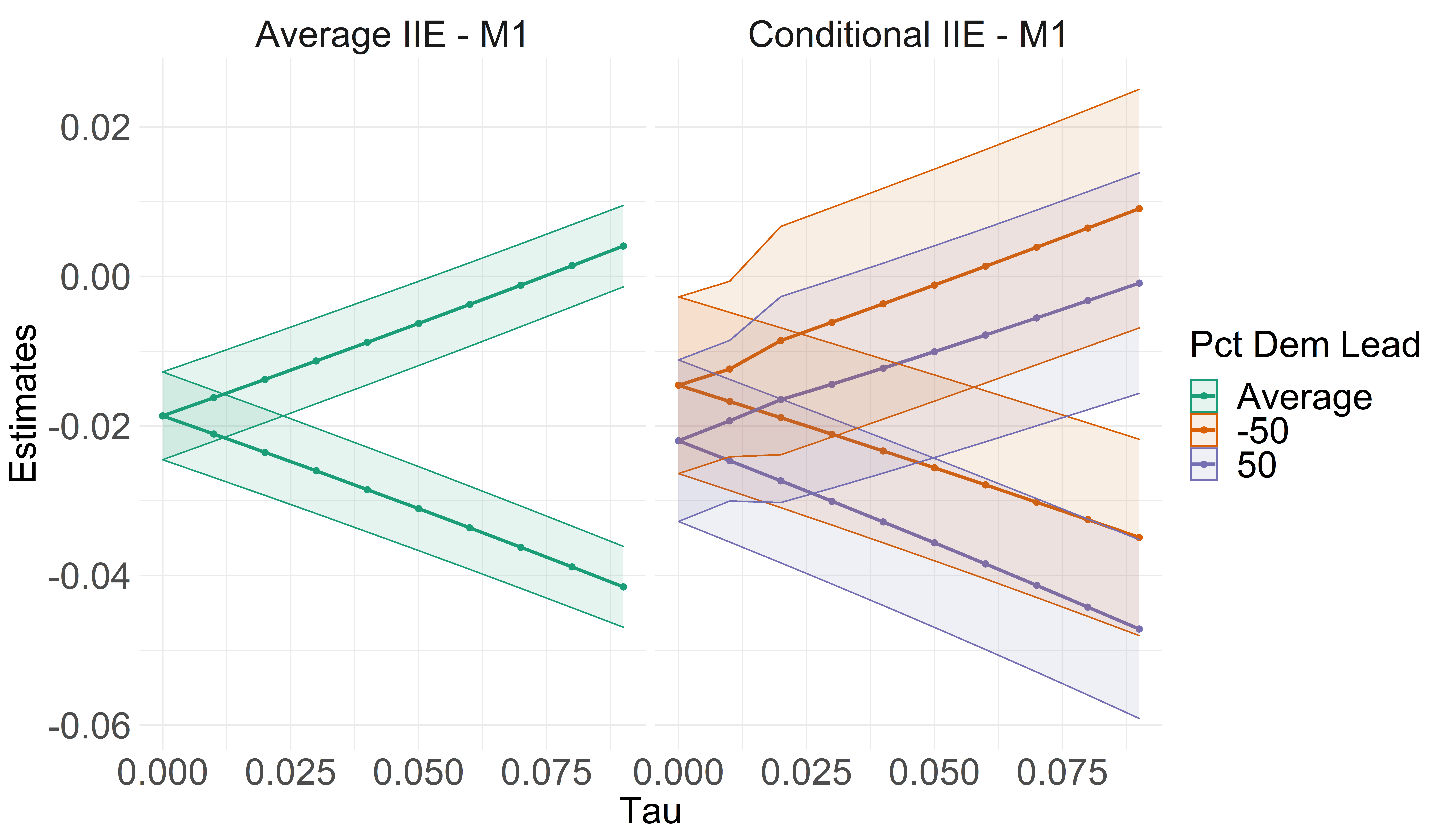}
\end{center}
\end{figure}

\section{Discussion}\label{sec:6}

We propose two methods for estimating conditional average interventional indirect effects: a semi-parametric projection-based approach and a fully non-parametric approach. These procedures are conceptually simple: regress an estimate of the uncentered influence function for the average parameter onto the desired covariates. The projection-based estimator uses a parametric regression model, and therefore targets a projection of the CIIE, while the DR-Learner uses a fully non-parametric for this regression, and therefore targets the CIIE itself. Our primary contribution is to establish the conditions where the convergence rates of these estimators is equivalent to that of an oracle regression of the true influence function onto these same models. As with estimating the CATE, the error of these estimators is a function of the product of errors in the nuisance estimation. However, unlike the CATE, we must consider the sums of several products of nuisance functions, which is in general a function of the cardinality of the joint mediators. While our discussion focused primarily on estimating the effect via $M_1$, this approach can be extended to estimate other interventional effects, mediated effects, and likely a broad class of causal estimands.

As a second contribution we propose a sensitivity analysis for the conditional effects that allows for mediator-outcome confounding. While the resulting bounds may be quite wide in practice, they make only weak assumptions on the underlying confounding mechanisms. Moreover, if one is willing to make stronger assumptions on the selection mechanism, tighter bounds can be obtained using a slight variant of our approach. We propose a general approach to estimating these bounds using the projection estimators or DR-Learner, where our results are again not tied to any particular estimation method. Our methods also easily extend to estimating bounds on the \textit{average} effects, allowing for root-n consistent and asymptotically normal estimates under some standard conditions. 

Our proposed methods have several limitations: first, we only consider two discrete mediators and a binary treatment. However, we could broaden this general approach for more complex settings. For example, we could likely allow for several mediators by regressing the corresponding influence functions derived by Benkeser and Ran (2021) onto $V$. Similarly, we could likely extend our results to allow for continuous mediators. This would require additional assumptions, including, for example, the boundedness of the joint mediator density. A complete treatment of this topic would be an interesting area for future research. On the other hand, allowing for a continuous treatment would be a more challenging problem as the causal estimands themselves would have to be redefined, and an influence-function for the average effect does not exist. A second limitation of our proposed method is that our sensitivity analysis provides bounds that may be conservative. This is in part a function of the fact that the methods we considered are all with respect to worst-case scenarios that may occur infrequently in practice. A third limitation is that we do not study any number of other possible non-parametric estimation methods, such as an extension of the R-Learner proposed by \cite{nie2021quasi}. Finally, we do not explore the minimax optimal rates for CIIE estimation or propose estimators that might achieve these rates. Valuable future work could explore any of these questions. \\

\noindent\textbf{Acknowledgments:} The authors would like to thank Amelia Haviland for helpful discussions as this work developed. The authors would also like to thank the two anonymous reviews and the Associate Editor for helpful comments, questions, and suggestions that improved the quality of this manuscript. \hfill \break

\noindent\textbf{Funding information:} Authors state no funding involved. \hfill \break

\noindent\textbf{Conflict of interest:} Authors state no conflict of interest. \hfill \break

\noindent\textbf{Data availability:} The data used in the application for this study are not publicly available, but are available on request at \href{https://dataforgood.facebook.com/dfg/docs/covid-19-trends-and-impact-survey-request-for-data-access}{this link}. Code is available online at \url{https://github.com/mrubinst757/ciie}.

\bibliographystyle{imsart-nameyear} 
\bibliography{research.bib}       

\newpage

\appendix

\section{Other definitions, theorems, and corollaries}\label{sec:othertheorems}

\subsection{Definitions and theorems from \cite{kennedy2022}}\label{appendix:definitions}

For completeness we reproduce key results from \cite{kennedy2022} that we use in our proofs. In particular, we reproduce the definition of estimator stability, a theorem that linear smoothers are stable (Theorem 1 from \cite{kennedy2022}), and a proposition about the asymptotics of pseudo-outcome regression where the second-stage regression is a linear smoother (Proposition 1 from \cite{kennedy2022}). The proofs can be found in \cite{kennedy2022}.

\begin{definition}[Stability]\label{defn:stability}
Suppose $D^n = (Z_{01}, ..., Z_{0n})$ and $Z^n = (Z_1, ..., Z_n)$ are independent test and training samples, respectively, with covariates $X_i \subset Z_i$ and $V_i \subseteq X_i$. Let 

\begin{enumerate}
    \item $\hat{f}(z) = \hat{f}(z; D^n)$ be an estimate of the function $f$ using training data $D^n$
    \item $\hat{b}(x) = \mathbb{E}\{\hat{f}(Z) - f(Z) \mid X = x, D^n\}$ be the conditional bias of the estimator $\hat{f}$
    \item $\hat{\mathbb{E}}_n(Y \mid V = v)$ denote a generic regression estimator that regresses outcomes $(Y_1, ..., Y_n)$ on the covariates $(V_1, ..., V_n)$ in the test sample $Z^n$. 
\end{enumerate}

Then the regression is defined as ``stable'' with respect to distance metric $d$ if 

\begin{align}
    \frac{\hat{\mathbb{E}}_n\{\hat{f}(Z) \mid V = v\} - \hat{\mathbb{E}}_n\{f(Z) \mid V = v\} - \hat{\mathbb{E}}_n\{\hat{b}(X) \mid V = v\}}{\sqrt{\mathbb{E}\left(\left[\hat{\mathbb{E}}_n\{f(Z) \mid V = v\} - \mathbb{E}\{f(Z) \mid V = v\}\right]^2 \right)}} \to^p 0
\end{align}

whenever $d(\hat{f}, f) \to^p 0$.
\end{definition}

\begin{theorem}[Theorem 1 from \cite{kennedy2022}]
Linear smoothers of the form $\hat{\mathbb{E}}_n(\hat{f}(Z) \mid V = v) = \sum_{i=1}^nw(v; V^n)\hat{f}(Z_i)$ are stable with respect to the distance

\begin{align}
    d(\hat{f}, f) = \|\hat{f} - f\|_{w^2} = \sum_{i=1}^n\frac{w_i(v; V^n)^2}{\sum_{j=1}^n w_j(v; V^n)^2}\int \left\{\hat{f}(z) - f(z)\right\}^2dP(z \mid V_i)
\end{align}

whenever $1/\|\sigma\|_{w^2} = \mathcal{O}_p(1)$ for $\sigma^2(v) = Var(f(Z) \mid V = v)$.
\end{theorem}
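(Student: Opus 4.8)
The plan is to use the linearity of the smoother to reduce the numerator of the stability ratio to a weighted sum of conditionally centered, conditionally independent terms, control its conditional second moment, and separately bound from below the oracle mean squared error in the denominator; dividing and invoking $d(\hat f,f)\to^p 0$ together with $1/\|\sigma\|_{w^2}=\mathcal{O}_p(1)$ then forces the ratio to zero. First I would rewrite the numerator: since $\hat{\mathbb{E}}_n\{g(Z)\mid V=v\}=\sum_{i=1}^n w_i(v;V^n)g(Z_i)$ for every $g$, including $g=\hat b$, the numerator equals $\sum_i w_i(v;V^n)\,\xi_i$ with $\xi_i=\hat f(Z_i)-f(Z_i)-\hat b(X_i)$. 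Conditionally on the training data $D^n$ (so $\hat f,\hat b$ are fixed functions) and on the test design $X^n$ (so the weights are fixed), the $\xi_i$ are independent across $i$, and $\mathbb{E}\{\xi_i\mid X_i,D^n\}=\mathbb{E}\{\hat f(Z_i)-f(Z_i)\mid X_i,D^n\}-\hat b(X_i)=0$ by the definition of $\hat b$ --- so the subtraction of $\hat{\mathbb{E}}_n\{\hat b(X)\mid V=v\}$ is precisely what recenters the summands.

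Given that centering, $\mathbb{E}\{(\text{numerator})^2\mid X^n,D^n\}=\sum_i w_i^2\,\mathrm{Var}(\xi_i\mid X_i,D^n)\le\sum_i w_i^2\,\mathbb{E}\{(\hat f(Z_i)-f(Z_i))^2\mid X_i,D^n\}$, and averaging over $X^n$ given $V^n$ (the weights being fixed, and $\hat f$ a fixed function given $D^n$) turns the bound into $\sum_i w_i^2\int\{\hat f(z)-f(z)\}^2\,dP(z\mid V_i)=\big(\sum_j w_j^2\big)\,d(\hat f,f)$. For the denominator, write $f(Z_i)=\bar f(V_i)+\varepsilon_i$ with $\bar f(v')=\mathbb{E}\{f(Z)\mid V=v'\}$ and $\mathbb{E}\{\varepsilon_i\mid V_i\}=0$; then the oracle error $\sum_i w_i f(Z_i)-\bar f(v)$ splits into a smoothing-bias term measurable in $V^n$ and a noise term $\sum_i w_i\varepsilon_i$ with conditional mean zero given $V^n$, so the two are uncorrelated and the oracle mean squared error is at least the noise variance $\mathbb{E}\big\{\sum_i w_i^2\sigma^2(V_i)\big\}=\mathbb{E}\big\{\big(\sum_j w_j^2\big)\|\sigma\|_{w^2}^2\big\}$. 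Combining the two displays, the squared stability ratio has expectation at most $\mathbb{E}\{(\sum_j w_j^2)\,d(\hat f,f)\}\big/\mathbb{E}\{(\sum_j w_j^2)\,\|\sigma\|_{w^2}^2\}$, which tends to $0$ because $d(\hat f,f)\to^p 0$ while $\|\sigma\|_{w^2}$ is bounded away from $0$ in probability.

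The main obstacle is this last division. The numerator bound carries the random weight functional $\sum_j w_j^2$ and the denominator lower bound carries $(\sum_j w_j^2)\,\|\sigma\|_{w^2}^2$, whereas the denominator $K_n^\star(v)$ appearing in the definition of stability is a deterministic sequence; one must therefore argue that the random ratio $d(\hat f,f)/\|\sigma\|_{w^2}^2$, which is $o_p(1)$ by the two hypotheses acting together (the second making the normalizations of numerator and denominator comparable), still controls the deterministic ratio after integrating against the law of the design. This is made rigorous either by a conditional Chebyshev bound followed by integration, together with a uniform-integrability or truncation argument on $d(\hat f,f)$ --- harmless since $\|\hat f-f\|_{w^2}$ is a bounded $L_2$ distance when $\hat f,f$ are uniformly bounded --- or by passing to a subsequence along which $d(\hat f,f)\to 0$ almost surely. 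Everything else is bookkeeping with the tower property and the definition of $\|\cdot\|_{w^2}$.
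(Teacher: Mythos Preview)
The paper does not supply its own proof of this result: the theorem is reproduced verbatim from \cite{kennedy2022} in Appendix~\ref{appendix:definitions}, and the authors explicitly write ``The proofs can be found in \cite{kennedy2022}.'' There is therefore no in-paper argument to compare your sketch against.

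On its own merits, your outline is structurally sound and hits the right ideas: linearity of the smoother reduces the numerator to a weighted sum of the $\xi_i=\hat f(Z_i)-f(Z_i)-\hat b(X_i)$; these are conditionally centered given $(X_i,D^n)$ and conditionally independent across $i$ given $(V^n,D^n)$; the conditional second moment then collapses to $(\sum_j w_j^2)\,d(\hat f,f)$; and the oracle MSE in the denominator is lower-bounded by its noise component $(\sum_j w_j^2)\,\|\sigma\|_{w^2}$ via the bias--variance split. One small bookkeeping point: in the theorem's convention $\|\cdot\|_{w^2}$ is already a squared quantity, so your $\|\sigma\|_{w^2}^2$ should read $\|\sigma\|_{w^2}$.

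The step you flag as the ``main obstacle'' is genuinely the delicate part. You have a random upper bound on $\mathbb{E}[N_n^2\mid V^n,D^n]$ and a random lower bound inside the deterministic $K_n^{\star 2}(v)$, and you need $\mathbb{E}\{(\sum_j w_j^2)\,d(\hat f,f)\}\big/\mathbb{E}\{(\sum_j w_j^2)\,\|\sigma\|_{w^2}\}\to 0$ from only $d(\hat f,f)\to^p 0$ and $1/\|\sigma\|_{w^2}=\mathcal{O}_p(1)$. Convergence in probability alone does not give convergence of expectations, so your proposed truncation/uniform-integrability or subsequence device is not optional---it is where the work is. Your parenthetical that this is ``harmless since $\|\hat f-f\|_{w^2}$ is a bounded $L_2$ distance when $\hat f,f$ are uniformly bounded'' imports an assumption not in the statement; a cleaner route, and the one typically taken, is to apply Chebyshev conditionally on $(V^n,D^n)$ to get $P\big(|N_n|/K_n^\star>\epsilon\,\big|\,V^n,D^n\big)\le (\sum_j w_j^2)\,d(\hat f,f)\big/\{\epsilon^2 K_n^{\star 2}(v)\}$, observe that $K_n^{\star 2}(v)\ge \mathbb{E}\{(\sum_j w_j^2)\|\sigma\|_{w^2}\}$, and then argue directly that the right-hand side is $o_p(1)$ and bounded by $1$, so its expectation tends to zero by dominated convergence. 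That closes the gap without extra boundedness hypotheses.
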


\begin{proposition}[Proposition 2 from \cite{kennedy2022}]\label{kennedyprop2}
If $\hat{b}(x) = \hat{b}_1(x)\hat{b}_2(x)$, and $\hat{\mathbb{E}}_n$ is a linear smoother with $\sum_i \lvert w_i(v; V^n) \rvert = \mathcal{O}_p(c_n)$, then

\begin{align}
    \hat{\mathbb{E}}_n\{\hat{b}(X) \mid V = v\} = \mathcal{O}_p(c_n \|\hat{b}_1\|_{w,p}\|\hat{b}_2\|_{w,q}) 
\end{align}

for the norm $\|f\|_{w,p} = \left[\sum_i\left\{\frac{\lvert w_i(v; V^n) \rvert}{\sum_j\lvert w_j(v; V^n)\rvert} \right\}\lvert f(X_i)^p\rvert \right]^{1/p}$ for $1/p + 1/q = 1$ and $p, q > 1$.
\end{proposition}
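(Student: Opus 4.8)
The plan is to establish this as an algebraic inequality that holds conditional on both the training sample $D^n$ and the test covariates $V^n$, with the only stochastic ingredient being the assumed rate $\sum_i \lvert w_i(v; V^n)\rvert = \mathcal{O}_p(c_n)$. Because $\hat{\mathbb{E}}_n$ is a linear smoother, by definition $\hat{\mathbb{E}}_n\{\hat{b}(X) \mid V = v\} = \sum_{i=1}^n w_i(v; V^n)\hat{b}(X_i)$, and substituting $\hat{b}(x) = \hat{b}_1(x)\hat{b}_2(x)$ gives $\hat{\mathbb{E}}_n\{\hat{b}(X) \mid V = v\} = \sum_{i=1}^n w_i(v; V^n)\hat{b}_1(X_i)\hat{b}_2(X_i)$. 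Everything on the right is a fixed number once $D^n$ and $V^n$ are observed, so I may manipulate the sum directly.

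First I would apply the triangle inequality and then normalize the weights. Writing $W_n := \sum_{j=1}^n \lvert w_j(v; V^n)\rvert$, we have
\[
\Bigl\lvert \hat{\mathbb{E}}_n\{\hat{b}(X) \mid V = v\}\Bigr\rvert \;\le\; \sum_{i=1}^n \lvert w_i(v; V^n)\rvert\,\lvert \hat{b}_1(X_i)\rvert\,\lvert \hat{b}_2(X_i)\rvert \;=\; W_n \sum_{i=1}^n \pi_i\, \lvert \hat{b}_1(X_i)\rvert\, \lvert \hat{b}_2(X_i)\rvert,
\]
where $\pi_i := \lvert w_i(v; V^n)\rvert / W_n$ defines a probability vector on $\{1, \dots, n\}$. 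Next I would apply H\"older's inequality with respect to the discrete measure $\pi$ and conjugate exponents $p, q$ with $1/p + 1/q = 1$, giving
\[
\sum_{i=1}^n \pi_i\, \lvert \hat{b}_1(X_i)\rvert\, \lvert \hat{b}_2(X_i)\rvert \;\le\; \Bigl(\sum_{i=1}^n \pi_i\, \lvert \hat{b}_1(X_i)\rvert^p\Bigr)^{1/p}\Bigl(\sum_{i=1}^n \pi_i\, \lvert \hat{b}_2(X_i)\rvert^q\Bigr)^{1/q} \;=\; \|\hat{b}_1\|_{w,p}\,\|\hat{b}_2\|_{w,q},
\]
the last equality being exactly the definition of the weighted norms in the statement, with $\lvert f(X_i)^p\rvert$ read as $\lvert f(X_i)\rvert^p$.

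Combining the two displays yields $\lvert \hat{\mathbb{E}}_n\{\hat{b}(X) \mid V = v\}\rvert \le W_n \|\hat{b}_1\|_{w,p}\|\hat{b}_2\|_{w,q}$; since $W_n = \mathcal{O}_p(c_n)$ by hypothesis and the remaining factors are nonnegative, the right-hand side is $\mathcal{O}_p(c_n \|\hat{b}_1\|_{w,p}\|\hat{b}_2\|_{w,q})$, which is the claim. (The $p = q = 2$ instance used in Corollary~\ref{corollary-extension} is just the Cauchy--Schwarz special case.)

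There is no genuine obstacle here: the argument is H\"older's inequality against the probability measure obtained by normalizing the absolute smoothing weights, followed by the assumed bound on $W_n$. The only points that need care are (i) recording that $w_i(v; V^n)$, $\hat{b}_1$, and $\hat{b}_2$ are all measurable given $(D^n, V^n)$, so the inequality steps are deterministic and the $\mathcal{O}_p$ enters solely through $W_n$; and (ii) interpreting the $\lvert f(X_i)^p\rvert$ in the displayed norm as $\lvert f(X_i)\rvert^p$, without which $\|\cdot\|_{w,p}$ would not be the $\ell^p(\pi)$ norm required for H\"older to apply.
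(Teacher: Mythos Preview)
Your argument is correct and is exactly the intended one: normalize the absolute smoothing weights to a probability vector, apply H\"older's inequality, and then invoke the assumed $\mathcal{O}_p(c_n)$ bound on $W_n$. The paper itself does not give a proof of this proposition but defers to \cite{kennedy2022}; the argument there is the same H\"older step you wrote, so there is nothing further to compare.
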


\newpage

\subsection{Other theorems and corollaries}\label{ssec:othertheorems}

This section provides a statement of Theorem~\ref{theorem4}, which provides an expression for the efficient influence function of the bounds on $\psi_{M_1}$. We then state the implied one-step estimator for the bounds on the average effect, and establish the conditions for the root-n consistency and asymptotic normality of these estimates in Theorem~\ref{theorem5}. We then extend the results of Theorem~\ref{theorem3} applied to the projection estimator of the bounds in Corollary~\ref{corollary4}, and the results of Corollary~\ref{corollary-drlearner} for the DR-Learner in Corollary~\ref{corollary5}. Proofs are included in Section~\ref{app:proofs}.

\begin{theorem}[Efficient influence function of bounds]\label{theorem4}
Define 

\begin{align*}
    \Gamma_{1,a}  &= \mathbb{E}\left\{\sum_{m_1,m_2}\mu_a(m_1,m_2,X)p(m_1,m_2\mid a, X)p(m_1\mid a, X)p(m_2 \mid a', X)\right\} = \mathbb{E}\left\{\zeta_{1,a}(X)]\right\} \\
    \Gamma_{1,a'} &= \mathbb{E}\left\{\sum_{m_1,m_2}\mu_a(m_1,m_2,X)p(m_1,m_2\mid a, X)p(m_1\mid a', X)p(m_2 \mid a', X)\right\} = \mathbb{E}\left\{\zeta_{1,a'}(X)]\right\} \\
    \Gamma_{2,a}  &= \mathbb{E}\left\{\sum_{m_2,m_2}p(m_1,m_2\mid a, X)p(m_1\mid a, X)p(m_2 \mid a',X)\right\} = \mathbb{E}\left\{\zeta_{2,a}(X)]\right\} \\
    \Gamma_{2,a'} &= \mathbb{E}\left\{\sum_{m_2,m_2}p(m_1,m_2\mid a, X)p(m_1\mid a', X)p(m_2 \mid a',X)\right\} = \mathbb{E}\left\{\zeta_{2,a'}(X)]\right\} \\
\end{align*}

Let $\phi_{1,a}(Z)$, $\phi_{1,a'}(Z)$, $\phi_{2,a}(Z)$, and $\phi_{2,a'}(Z)$ be the uncentered efficient influence functions for $\Gamma_{1,a}$, $\Gamma_{1,a'}$, $\Gamma_{2,a}$, and $\Gamma_{2,a'}$, respectively. The efficient influence function for each quantity is given below:

\begin{align*}
    \phi_{1,a}(Z) &= \frac{\mathds{1}(A = a)}{\pi_a(X)}[Yp(M_1 \mid a, X)p(M_2 \mid a', X) - \zeta_{1,a}(X)] \\
    &+\frac{\mathds{1}(A = a)}{\pi_a(X)}\left[\sum_{m_2}\mu_a(M_1, m_2, X)p(M_1, m_2 \mid, a, X)p(m_2 \mid a', X) - \zeta_{1,a}(X)\right]\\
    &+ \frac{\mathds{1}(A = a')}{\pi_{a'}(X)}\left[\sum_{m_1}\mu_a(m_1, M_2, X)p(m_1, M_2 \mid a, X)p(m_1 \mid a, X) - \zeta_{1,a}(X)\right] + \zeta_{1,a}(X) \\
    \phi_{1,a'}(Z) &= \frac{\mathds{1}(A = a)}{\pi_a(X)}[Yp(M_1 \mid a', X)p(M_2 \mid a', X) - \zeta_{1,a'}(X)] + \\
    &+\frac{\mathds{1}(A = a')}{\pi_{a'}(X)}\left[\sum_{m_2}\mu_a(M_1, m_2, X)p(M_1, m_2 \mid a, X)p(m_2 \mid a', X) - \zeta_{1,a'}(X)\right] \\
    &+ \frac{\mathds{1}(A = a')}{\pi_{a'}(X)}\left[\sum_{m_1}\mu_a(m_1, M_2, X)p(m_1, M_2 \mid a, X)p(m_1 \mid a', X) - \zeta_{1,a'}(X)\right] + \zeta_{1,a'}(X) \\
    \phi_{2,a}(Z) &= \frac{\mathds{1}(A = a)}{\pi_a(X)}[p(M_1 \mid a, X)p(M_2 \mid a', X) - \zeta_{2,a}(X)] \\
    &+ \frac{\mathds{1}(A = a)}{\pi_a(X)}\left[\sum_{m_2}p(m_2 \mid a', X)p(M_1, m_2 \mid, a, x) - \zeta_{2,a}(X)\right] \\
    &+ \frac{\mathds{1}(A = a')}{\pi_{a'}(X)}\left[\sum_{m_1}p(m_1 \mid a, X)p(m_1, M_2 \mid a, x) - \zeta_{2,a}(X)\right] + \zeta_{2,a}(X) \\
    \phi_{2,a'}(Z) &= \frac{\mathds{1}(A = a)}{\pi_a(X)}\left[p(M_1 \mid a', X)p(M_2 \mid a', X) - \zeta_{2,a'}(X)\right] \\
    &+ \frac{\mathds{1}(A = a')}{\pi_{a'}(X)}\left[\sum_{m_2}p(m_2 \mid a', X)p(M_1, m_2 \mid, a, x) - \zeta_{2,a'}(X)\right] \\
    &+ \frac{\mathds{1}(A = a')}{\pi_{a'}(X)}\left[\sum_{m_1}p(m_1 \mid a', X)p(m_1, M_2 \mid a, x) - \zeta_{2,a'}(X)\right] + \zeta_{2,a'}(X) 
\end{align*}

Recalling the expressions for the bounds in equations (\ref{eqn:upperboundx})-(\ref{eqn:lowerboundx}), the (uncentered) efficient influence function for the parameters $\psi_{M_1, lb}(\tau)$ and $\psi_{M_1, ub}(\tau)$ are therefore equal to

\begin{align*}
    \Xi_{ub}(Z; \eta, \tau) &= \varphi(Z; \eta) + \varphi_a(Z; \eta)f_u(\tau)c_u - \varphi_{a'}(Z)f_l(\tau)c_l + t_uf_u(\tau) - t_lf_l(\tau) \\
    &- f_u(\tau)[c_u\phi_{1, a} + t_u\phi_{1, a'}] + f_l(\tau)[c_l\phi_{1,a'} + t_l\phi_{2,a'}] \\
    \Xi_{lb}(Z; \eta, \tau) &= \varphi(Z; \eta) + \varphi_a(Z; \eta)f_l(\tau)c_l - \varphi_{a'}(Z)f_u(\tau)c_u + t_lf_l(\tau) - t_uf_u(\tau) \\
    &- f_l(\tau)[c_l\phi_{1, a} + t_l\phi_{1, a'}] + f_u(\tau)[c_u\phi_{1,a'} + \phi_{2,a'}] \\
\end{align*}

\end{theorem}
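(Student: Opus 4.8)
The plan is to first compute, separately, the efficient influence function of each of the four auxiliary functionals $\Gamma_{1,a},\Gamma_{1,a'},\Gamma_{2,a},\Gamma_{2,a'}$ under the nonparametric model, and then assemble the influence functions of the bounds by exploiting the fact that, for fixed $\tau$, $\psi_{M_1,ub}(\tau)$ and $\psi_{M_1,lb}(\tau)$ are affine combinations of $\psi_{M_1}$, $\mathbb{E}[\bar\psi_{M_1,a}(X)]$, $\mathbb{E}[\bar\psi_{M_1,a'}(X)]$, and these four $\Gamma$'s (read off from equations (\ref{eqn:upperboundx})--(\ref{eqn:lowerboundx}) after marginalizing over $X$), together with linearity of the efficient-influence-function map.

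For the first step I would use the standard pathwise-derivative characterization: along a regular parametric submodel through the truth with score $s(Z)$, compute $\frac{d}{d\epsilon}\Gamma(P_\epsilon)\big|_{\epsilon=0}$ and match it to $\mathbb{E}[\phi(Z)s(Z)]$; since the observed-data model is nonparametric, the resulting centered $\phi$ is the unique influence function. The computation is organized by the identity that each $\zeta(X)$ is a conditional mean under one treatment arm, e.g. $\zeta_{1,a}(X)=\mathbb{E}[\,Y\,p(M_1\mid a,X)\,p(M_2\mid a',X)\mid A=a,X\,]$ and $\zeta_{2,a}(X)=\mathbb{E}[\,p(M_1\mid a,X)\,p(M_2\mid a',X)\mid A=a,X\,]$, with the $a'$ versions obtained by replacing $p(M_1\mid a,X)$ with $p(M_1\mid a',X)$. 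Treating the mediator marginals appearing inside these expressions first as fixed weights and then as nuisances, the derivative decomposes into: (i) an outer term $\zeta(X)-\Gamma$ from the marginal law of $X$; (ii) an inverse-probability-weighted residual $\frac{\mathds{1}(A=a)}{\pi_a(X)}\{\,Y\,p(M_1\mid a,X)p(M_2\mid a',X)-\zeta(X)\,\}$ for $\Gamma_{1,a}$, and the analogous term without $Y$ for $\Gamma_{2,a}$; and (iii) one correction term per mediator marginal in the weight, each of the form $\frac{\mathds{1}(A=a^\star)}{\pi_{a^\star}(X)}\{\,(\text{partial sum})-\zeta(X)\,\}$, where $a^\star$ is the arm that marginal is indexed by and the partial sum is $\zeta$ with that marginal replaced by the relevant indicator. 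Substituting $\mathbb{E}[Y\mathds{1}(M_1=m_1,M_2=m_2)\mid A=a,X]=\mu_a(m_1,m_2,X)p(m_1,m_2\mid a,X)$ wherever it arises, collecting terms, and adding back $\Gamma$ to pass to the uncentered version yields exactly the stated $\phi_{1,a},\phi_{1,a'},\phi_{2,a},\phi_{2,a'}$. I would also check directly that each candidate has mean $\Gamma$ and depends on a single observation, which with the nonparametric model confirms it is the EIF.

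For the second step, I would invoke that the uncentered-efficient-influence-function map is linear on pathwise-differentiable parameters: the EIF of a finite sum is the sum of the EIFs, of a constant multiple the same multiple, and of an additive constant the constant itself. Marginalizing (\ref{eqn:upperboundx})--(\ref{eqn:lowerboundx}) over $X$ writes $\psi_{M_1,ub}(\tau)$ and $\psi_{M_1,lb}(\tau)$ as $\psi_{M_1}$ plus $f_u(\tau)c_u$ (resp.\ $f_l(\tau)c_l$) times $\mathbb{E}[\bar\psi_{M_1,a}(X)]$, minus the corresponding term in $\mathbb{E}[\bar\psi_{M_1,a'}(X)]$, plus the fixed scalars $\pm(t_uf_u(\tau)-t_lf_l(\tau))$, plus $\mp f_u(\tau)(c_u\Gamma_{1,a}+t_u\Gamma_{2,a})$ and $\pm f_l(\tau)(c_l\Gamma_{1,a'}+t_l\Gamma_{2,a'})$. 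Since $\varphi(Z;\eta)$ (from \cite{benkeser2021nonparametric}) and $\varphi_a(Z;\eta),\varphi_{a'}(Z;\eta)$ are the uncentered EIFs of $\psi_{M_1}$, $\mathbb{E}[\bar\psi_{M_1,a}(X)]$, $\mathbb{E}[\bar\psi_{M_1,a'}(X)]$ — the latter two obtained by the argument of step one — substituting everything in gives the claimed forms for $\Xi_{ub}(Z;\eta,\tau)$ and $\Xi_{lb}(Z;\eta,\tau)$.

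The step I expect to be the main obstacle is the bookkeeping in step one: the mediator marginals $p(M_1\mid a,X)$, $p(M_1\mid a',X)$, $p(M_2\mid a',X)$ appear simultaneously as inner weight functions and as objects to be perturbed, each correction term must be paired with the correct inverse-probability weight and residual-style centering, and the $\Gamma_2$ parameters differ structurally from the $\Gamma_1$ parameters in lacking the outcome-regression residual, so the same template cannot be applied mechanically to all four. Keeping the conditional-expectation identities above front and center, and closing with a direct verification that each assembled function satisfies the defining derivative identity, should make this tractable.
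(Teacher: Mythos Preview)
Your proposal is correct and follows essentially the same route as the paper: compute the efficient influence functions of $\Gamma_{1,a},\Gamma_{1,a'},\Gamma_{2,a},\Gamma_{2,a'}$ one at a time under the nonparametric model, then assemble $\Xi_{ub}$ and $\Xi_{lb}$ by linearity from the affine representation of the bounds. The only cosmetic difference is that the paper writes each $\Gamma$ as a product over $(x,m_1,m_2)$ and applies the influence-function chain rule term by term (i.e., $\text{IF}[\prod_j f_j]=\sum_j \text{IF}[f_j]\prod_{k\ne j}f_k$, treating the data as discrete), whereas you organize the same pathwise-derivative computation around the conditional-mean identity $\zeta_{1,a}(X)=\mathbb{E}[Y\,p(M_1\mid a,X)p(M_2\mid a',X)\mid A=a,X]$; both lead to the same four correction terms and the same final expressions.
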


    
\begin{theorem}[Asymptotic normality of estimates of bounds]\label{theorem5}
Given $\tau$, consider the one-step estimator of the bounds:

\begin{align}\label{eqn:osebounds}
    [\hat{\psi}_{M_1, lb}(\tau), \hat{\psi}_{M_1, ub}(\tau)]^\top &= \left[\mathbb{P}_n[\Xi_{lb}(Z; \hat{\eta}, \tau)], \mathbb{P}_n[\Xi_{ub}(Z; \hat{\eta}, \tau)]\right]^\top ]]
    &= \mathbb{P}_n \Xi(Z; \hat{\eta}, \tau) = \hat{\psi}_{M_1,b}
\end{align}

Let $\hat{\psi}_{M_1, b}$ denote the vector of estimates in \eqref{eqn:osebounds}. Assume that

\begin{enumerate}
    \item Nuisance parameters $\eta$ are estimated using sample-splitting
    \item $\|\hat{\Xi} - \Xi\|^2 = o_p(1)$
    \item $\|\hat{\eta} - \eta\| = o_p(n^{-1/4})$
\end{enumerate}

We then obtain the following limiting distribution for the vector of estimates:

\begin{align}
    \sqrt{n}\mathbb{P}_n[\hat{\psi}_{M_1, b}(\tau) - \psi_{M_1, b}(\tau)] \to^d \mathcal{N}(0, \mathbb{E}[\Xi\Xi^\top])
\end{align}

\end{theorem}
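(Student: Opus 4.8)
The plan is to run the standard one-step / von Mises expansion for an influence-function based estimator, here adapted to the bivariate target $\psi_{M_1,b}(\tau) = (\psi_{M_1,lb}(\tau), \psi_{M_1,ub}(\tau))^\top$. Since $\Xi(Z;\eta,\tau)$ is, by Theorem~\ref{theorem4}, the uncentered efficient influence function for $\psi_{M_1,b}(\tau)$, we have $\mathbb{E}[\Xi(Z;\eta,\tau)] = \psi_{M_1,b}(\tau)$. Writing $\hat{\Xi} = \Xi(Z;\hat{\eta},\tau)$ and $\Xi = \Xi(Z;\eta,\tau)$, I would decompose
\begin{align*}
\hat{\psi}_{M_1,b}(\tau) - \psi_{M_1,b}(\tau) = (\mathbb{P}_n - \mathbb{P})\Xi + (\mathbb{P}_n - \mathbb{P})(\hat{\Xi} - \Xi) + \big(\mathbb{P}\hat{\Xi} - \psi_{M_1,b}(\tau)\big).
\end{align*}
The first term is a sample average of a fixed, mean-$\psi_{M_1,b}(\tau)$ function, so the multivariate CLT gives $\sqrt{n}(\mathbb{P}_n - \mathbb{P})\Xi \to^d \mathcal{N}(0, \mathrm{Var}(\Xi))$, once we check that $\mathrm{Var}(\Xi)$ is finite. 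This follows because the positivity bound in Assumption~\ref{assumption:positivity} keeps every inverse weight appearing in $\Xi$ bounded, $Y$ is bounded in the sensitivity setting, and $\Xi$ is a fixed (for fixed $\tau$) linear combination of $\varphi, \varphi_a, \varphi_{a'}, \phi_{1,a}, \phi_{1,a'}, \phi_{2,a}, \phi_{2,a'}$, each of which then has a finite second moment.

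For the second term I would invoke the sample-splitting assumption (1): conditional on the training sample $D_1^n$, $\hat{\Xi}$ is a fixed function, so $(\mathbb{P}_n - \mathbb{P})(\hat{\Xi} - \Xi)$ is a centered i.i.d.\ average with conditional variance at most $\|\hat{\Xi} - \Xi\|^2/n$. By assumption (2), $\|\hat{\Xi} - \Xi\|^2 = o_p(1)$, so a conditional Chebyshev argument yields $\sqrt{n}(\mathbb{P}_n - \mathbb{P})(\hat{\Xi} - \Xi) = o_p(1)$; if cross-fitting as in Algorithm~\ref{algorithm:mrlearner} is used, one applies this fold-by-fold and averages.

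The crux is the third term, the conditional bias $\mathbb{P}\hat{\Xi} - \psi_{M_1,b}(\tau)$, which I would show is $O_p(\|\hat{\eta} - \eta\|^2)$ and hence $o_p(n^{-1/2})$ by assumption (3). Because $\Xi_{lb}$ and $\Xi_{ub}$ are linear combinations with fixed coefficients $f_l(\tau), f_u(\tau), c_l, c_u, t_l, t_u$ of the uncentered influence functions $\varphi, \varphi_a, \varphi_{a'}, \phi_{1,a}, \phi_{1,a'}, \phi_{2,a}, \phi_{2,a'}$, the conditional bias is the same linear combination of the von Mises remainders of the corresponding functionals $\psi_{M_1}, \Gamma_{1,a}, \Gamma_{1,a'}, \Gamma_{2,a}, \Gamma_{2,a'}$. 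For $\varphi$ the remainder is exactly $T_{1n} + T_{2n} + T_{3n} + T_{4n}$ from Theorem~\ref{theorem3} (without the second-stage weights), each summand a product of two nuisance errors; for each of $\phi_{1,a}, \phi_{1,a'}, \phi_{2,a}, \phi_{2,a'}$ I would carry out the analogous expansion — expand $\mathbb{P}\hat{\phi} - \Gamma$, cancel first-order terms using the structure of the influence function, and collect the remainder as a sum of products $\|\hat{f} - f\|\,\|\hat{g} - g\|$ in the nuisance components (outcome regression, propensity, joint and marginal mediator laws). By Cauchy–Schwarz each such product is $\lesssim \|\hat{\eta} - \eta\|^2$, and since there are finitely many of them with $\tau$ fixed, the whole conditional bias is $O_p(\|\hat{\eta} - \eta\|^2) = o_p(n^{-1/2})$. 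Combining the three pieces, $\sqrt{n}(\hat{\psi}_{M_1,b}(\tau) - \psi_{M_1,b}(\tau)) = \sqrt{n}(\mathbb{P}_n - \mathbb{P})\Xi + o_p(1) \to^d \mathcal{N}(0, \mathrm{Var}(\Xi))$, which is the claim (reading $\mathbb{E}[\Xi\Xi^\top]$ as the variance of the centered influence function). The main obstacle is essentially bookkeeping: establishing the mixed-bias product structure of the remainders for the four new functionals $\Gamma_{1,a}, \Gamma_{1,a'}, \Gamma_{2,a}, \Gamma_{2,a'}$, which is analogous to but somewhat more involved than the $T_{in}$ computation behind Theorem~\ref{theorem3}.
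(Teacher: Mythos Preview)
Your proposal is correct and follows essentially the same argument as the paper: the paper uses the identical three-term decomposition $\mathbb{P}_n\hat\Xi - \mathbb{P}\Xi = (\mathbb{P}_n-\mathbb{P})(\hat\Xi-\Xi) + (\mathbb{P}_n-\mathbb{P})\Xi + \mathbb{P}(\hat\Xi-\Xi)$, handles the first two terms via sample-splitting consistency and the CLT, and then devotes the bulk of the proof to exactly the bookkeeping you anticipate---deriving explicit second-order product-of-errors expressions for $\mathbb{P}[\hat\phi_{1,a}-\phi_{1,a}]$, $\mathbb{P}[\hat\phi_{1,a'}-\phi_{1,a'}]$, $\mathbb{P}[\hat\phi_{2,a}-\phi_{2,a}]$, $\mathbb{P}[\hat\phi_{2,a'}-\phi_{2,a'}]$ and bounding each via Cauchy--Schwarz. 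Your reading of $\mathbb{E}[\Xi\Xi^\top]$ as the variance of the centered influence function is also consistent with how the paper states it.
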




\begin{corollary}[Projection estimator of bounds]\label{corollary4}
    For fixed parameter $\tau$, consider the function $f(V; \alpha)$, the moment condition,
    
    \begin{align*}
        \mathbb{E}\left[w(X)\frac{\partial f(V; \alpha)}{\partial \alpha}[\psi_{M_1, lb}(Z; \tau) - f(V; \alpha)]\right] = 0
    \end{align*}
    the associated influence curve at any $\alpha^\star$,
    
    \begin{align*}
        \phi_{lb}(Z; \alpha^\star, \eta, \tau) &= \frac{\partial f(V; \alpha)}{\partial \alpha} w(X)\{\Xi_{lb}(Z; \eta, \tau) - f(V; \alpha^\star)\}
    \end{align*}
    and the associated influence-function based estimating equation for $\hat{\alpha}$:
    
    \begin{align*}
        \mathbb{P}_n\left[\frac{\partial f(V; \hat{\alpha})}{\partial \alpha} w(X)\{\hat{\Xi}_{lb}(Z; \hat{\eta}, \tau) - f(V; \hat{\alpha})\} \right] = 0
    \end{align*}

    Consider the moment condition $\mathbb{E}[\phi_{lb}(Z; \alpha_0, \eta_0, \tau)] = 0$ evaluated at the true parameters $(\alpha_0, \eta_0)$. Now consider the estimator $\hat{\alpha}$ that satisfies $\mathbb{P}_n[\phi_{lb}(Z; \hat{\alpha}, \hat{\eta}, \tau)] = 0$, where $\hat{\eta}$ is estimated on an independent sample. Assume that:

    \begin{itemize}
        \item The function class $\{\phi_{lb}(Z; \alpha, \eta): \alpha \in \mathbb{R}^p\}$ is Donsker in $\alpha$ for any fixed $\eta$
        \item $\|\phi_{lb}(Z; \hat{\alpha}, \hat{\eta}, \tau) - \phi(Z; \alpha_0, \eta_0, \tau)\| = o_p(1)$
        \item The map $\alpha \to \mathbb{P}[\phi_{lb}(Z; \alpha, \eta, \tau)]$ is differentiable at $\alpha_0$ uniformly in the true $\eta$, with non-singular derivative matrix $\frac{\partial}{\partial \alpha}\mathbb{P}\{\phi_{lb}(Z; \alpha, \eta, \tau\}\mid_{\alpha = \alpha_0} = G(\alpha_0, \eta, \tau)$, where $G(\alpha_0, \hat{\eta},\tau) \to^p M(\alpha_0, \eta_0)$
    \end{itemize}

    Then
    
    \begin{align*}
        \hat{\beta} - \beta &= -G^{-1}[\mathbb{P}_n - P]\phi_{lb}(Z; \alpha_0, \eta_0, \tau) + \mathcal{O}_p(T_{1n} + T_{2n} + T_{3n} + T_{4n} \\
        &+ \underbrace{\|p(M_1, M_2 \mid a, X) - \hat{p}(M_1, M_2 \mid a, X)\|\|p(M_1 \mid a', X) - \hat{p}(M_1 \mid a', X)\|}_{(xi)} \\
        &+ \underbrace{\|p(M_1, M_2 \mid a, X) - \hat{p}(M_1, M_2 \mid a, X)\|\|p(M_2 \mid a', X) - \hat{p}(M_2 \mid a', X)\|}_{(xii)} \\
        &+ \underbrace{\|p(M_1, M_2 \mid a, X) - \hat{p}(M_1, M_2 \mid a, X)\|\|p(M_1 \mid a, X) - \hat{p}(M_1 \mid a, X)\|}_{(xiii)} \\
        &+ \underbrace{\|p(M_1, M_2 \mid a, X) - \hat{p}(M_1, M_2 \mid a, X)\|\|\pi_a(X) - \hat{\pi}_a(X)\|}_{(xiv)}) + o_p(1 / \sqrt{n})
    \end{align*}
    where the terms $T_{1n}$-$T_{4n}$ are defined in Theorem~\ref{theorem3}. Moreover, if $Z_n = T_{1n} + T_{2n} + T_{3n} + T_{4n} + (x)-(iv)$ are $o_p(n^{1/2})$, then
    \begin{align*}
        \sqrt{n}(\hat{\alpha} - \alpha) \to^d \mathcal{N}(0, G^{-1}\mathbb{E}[\phi_{lb}\phi_{lb}^\top]G^{-\top})
    \end{align*}
    where $G =\frac{\partial}{\partial \alpha}\mathbb{P}\{\phi_{lb}(Z; \alpha, \eta, \tau)\}_{\alpha = \alpha_0}$.
\end{corollary}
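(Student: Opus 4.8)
The plan is to treat $\hat\alpha$ as an estimating-equations estimator in $\alpha$ with the estimated nuisance $\hat\eta$ and the fixed $\tau$ plugged in, and to reuse the argument of Theorem~\ref{theorem3} almost verbatim, the only change being that the pseudo-outcome $\varphi(Z;\eta)$ is replaced by $\Xi_{lb}(Z;\eta,\tau)$. First I would invoke the three stated hypotheses (Donsker in $\alpha$ for fixed $\eta$, $L_2$-consistency of $\phi_{lb}(Z;\hat\alpha,\hat\eta,\tau)$, and differentiability of $\alpha\mapsto\mathbb{P}\phi_{lb}$ at $\alpha_0$ with nonsingular, consistently estimable derivative $G$) to obtain the standard linearization
\begin{align*}
\hat\alpha - \alpha_0 = -G^{-1}(\mathbb{P}_n - \mathbb{P})\phi_{lb}(Z;\alpha_0,\eta_0,\tau) - G^{-1}\,\mathbb{P}\phi_{lb}(Z;\alpha_0,\hat\eta,\tau) + o_p\!\left(n^{-1/2} + \|\hat\alpha - \alpha_0\|\right).
\end{align*}
Because $\hat\eta$ is fit on the independent sample $D_1^n$, the term $\mathbb{P}\phi_{lb}(Z;\alpha_0,\hat\eta,\tau)$ is an ordinary conditional expectation; and since $\alpha_0$ solves the limiting moment equation while $\mathbb{E}[\Xi_{lb}(Z;\eta_0,\tau)\mid X] = \psi_{M_1,lb}(X;\tau)$, it collapses to $\mathbb{E}\!\left[\tfrac{\partial f(V;\alpha_0)}{\partial\alpha}\,w(X)\,\{\mathbb{E}[\Xi_{lb}(Z;\hat\eta,\tau)\mid X, D_1^n] - \psi_{M_1,lb}(X;\tau)\}\right]$, i.e.\ a $w(X)\,\partial f/\partial\alpha$-weighted average of the conditional bias of the estimated influence function for the lower bound.

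The second step is to expand that conditional bias. By Theorem~\ref{theorem4}, $\Xi_{lb}(Z;\eta,\tau)$ is a linear combination, with coefficients depending on $\tau$ only through $f_l,f_u,c_l,c_u,t_l,t_u$, of $\varphi(Z;\eta)$, the influence functions for $\bar\psi_{M_1,a}$ and $\bar\psi_{M_1,a'}$, and the influence functions $\phi_{1,a},\phi_{1,a'},\phi_{2,a'}$ for $\Gamma_{1,a},\Gamma_{1,a'},\Gamma_{2,a'}$. Since conditional bias is additive over this combination, the $\varphi$- and $\bar\psi$-parts contribute the term $\hat b^\star(X)$ already controlled in Corollary~\ref{corollary-drlearner} by $\hat b^\star \lesssim \hat b = T_{1n}(X)+T_{2n}(X)+T_{3n}(X)+T_{4n}(X)$ (the $\bar\psi$-parts add nothing beyond $T_{1n}$--$T_{4n}$, since their biases are the same $\mu_a$-, $\pi_a$- and marginal-mediator error products already present there), while each $\phi_{j,\cdot}$ contributes the remainder of the von Mises expansion of the corresponding one-step functional. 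For $\Gamma_{1,a}$, since $\zeta_{1,a}(X)$ is multilinear in $\mu_a$, $p(M_1,M_2\mid a,X)$, $p(M_1\mid a,X)$ and $p(M_2\mid a',X)$ and its efficient influence function carries the usual $\mathds{1}(A=a)/\pi_a$ and $\mathds{1}(A=a')/\pi_{a'}$ corrections, a routine second-order Taylor expansion shows its remainder is a sum of products of nuisance errors, each containing a factor $\hat p(M_1,M_2\mid a,\cdot)-p(M_1,M_2\mid a,\cdot)$ --- the source of the genuinely new products, since the joint mediator density enters $\zeta_{1,a}$ but not $\bar\psi_{M_1}$ --- paired with one of the errors in $\mu_a$, $\pi_a$, $p(M_1\mid a,\cdot)$, $p(M_1\mid a',\cdot)$, or $p(M_2\mid a',\cdot)$; $\Gamma_{1,a'}$ and $\Gamma_{2,a'}$ behave identically. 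Applying Cauchy--Schwarz to the $w(X)\,\partial f/\partial\alpha$-weighted integral (finite under the positivity and regularity assumptions) turns these into $L_2(\mathbb{P})$-norm products, recovering $T_{1n}$--$T_{4n}$ and producing exactly the four displayed terms $(xi)$--$(xiv)$.

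Combining the two steps gives $\mathbb{P}\phi_{lb}(Z;\alpha_0,\hat\eta,\tau) = \mathcal{O}_p\!\left(T_{1n}+T_{2n}+T_{3n}+T_{4n}+(xi)+(xii)+(xiii)+(xiv)\right)$, which is the claimed expansion once the $o_p(\|\hat\alpha-\alpha_0\|)$ term is absorbed into the left-hand side in the usual fixed-point manner. If in addition the combined remainder $Z_n := T_{1n}+T_{2n}+T_{3n}+T_{4n}+(xi)+(xii)+(xiii)+(xiv)$ is $o_p(n^{-1/2})$, the drift is asymptotically negligible, so $\hat\alpha - \alpha_0 = -G^{-1}(\mathbb{P}_n-\mathbb{P})\phi_{lb}(Z;\alpha_0,\eta_0,\tau) + o_p(n^{-1/2})$, and the CLT together with Slutsky (using $G(\alpha_0,\hat\eta,\tau)\to^p G$) yields $\sqrt{n}(\hat\alpha - \alpha_0)\to^d\mathcal{N}\!\left(0, G^{-1}\mathbb{E}[\phi_{lb}\phi_{lb}^\top]G^{-\top}\right)$. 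I expect the main obstacle to be the bookkeeping in the second step: carrying out the von Mises expansions of $\Gamma_{1,a},\Gamma_{1,a'},\Gamma_{2,a'}$, verifying that none of their remainders produces a product worse than $(xi)$--$(xiv)$, and checking that the coefficients multiplying each product remain bounded in $\tau$ over its admissible range --- tedious, but structurally identical to the proofs of Theorem~\ref{theorem3} and Corollary~\ref{corollary-drlearner}.
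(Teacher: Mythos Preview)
Your proposal is correct and mirrors the paper's own proof, which simply notes that the argument is identical to that of Theorem~\ref{theorem3} once one observes that the drift term $\mathbb{P}[\phi_{lb}(Z;\alpha_0,\hat\eta,\tau)-\phi_{lb}(Z;\alpha_0,\eta_0,\tau)]$ is second-order in the nuisance errors, with the extra products $(xi)$--$(xiv)$ coming from the von~Mises remainders of $\phi_{1,a},\phi_{1,a'},\phi_{2,a},\phi_{2,a'}$ derived in the proof of Theorem~\ref{theorem5}. The only small omission is that $\Xi_{lb}$ also involves $\phi_{2,a}$ (not just $\phi_{1,a},\phi_{1,a'},\phi_{2,a'}$), but its remainder is structurally identical to that of $\phi_{2,a'}$ and contributes no new product terms, so the conclusion is unaffected.
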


\begin{remark}\label{remark:boundproj1}
    There are 21 possible combinations of second-order terms among nuisance elements $[\pi_a(X), \mu_a(M_1, M_2, X), p(M_1 \mid a, X), p(M_1 \mid a', X), p(M_2 \mid a', X), p(M_1, M_2 \mid a, X)]$. $\hat{\psi}^{dr}_{M_1}(x)$ requires considering ten of these error products. Estimating $\hat{\psi}^{dr}_{M_1, lb}(x)$ requires considering the four additional terms noted above. 
    
    On the other hand, if the estimate of $p(M_1 \mid a, X)$ comes from an estimate of $p(M_1, M_2 \mid a, X)$ (and similarly $p(M_1 \mid a', x)$ and $p(M_2 \mid a', x)$ come from an estimate of $p(M_1, M_2 \mid a', X)$), then these expressions simplify. In that case there are only 10 total possible second-order terms. $\psi_{M_1}$ requires estimating seven of the ten and the bounds only require one additional error product that comes from (xiii), which is on the same order of the square of the L2-norm of the error of the estimated joint mediator probability $p(M_1, M_2 \mid a, X)$. 
\end{remark}

\begin{remark}\label{remark:boundproj2}
    This result is agnostic as to which of equations (\ref{eqn:a1})-(\ref{eqn:a3}) are assumed for the sensitivity analysis. While not all second-order remainder terms from $\phi_{1,a}$, $\phi_{1,a'}$, $\phi_{2,a}$, $\phi_{2,a'}$ will appear in each expression, as the constants on these functions may be zero depending on the assumption used, these same terms will appear regardless. This can be seen from the Proof of Theorem~\ref{theorem5}, where we derive the second-order terms for 
    
    \begin{align*}
        P\left[[\hat{\phi}_{1,a}, \hat{\phi}_{1,a'}, \hat{\phi}_{2,a}, \hat{\phi}_{2,a'}]^\top \right] - P\left[[\phi_{1,a}, \phi_{1,a'}, \phi_{2,a}, \phi_{2,a'}]^\top\right]
    \end{align*}
    
\end{remark}

\begin{remark}
    It is easy to show an analogous result for a projection estimator of $\psi_{M_1, ub}(v)$.
\end{remark}

\begin{corollary}[DR-Learner for bounds]\label{corollary5}
    Define Algorithm 2 as Algorithm 1, where Step 2 instead constructs the estimated pseudo-outcomes $\Xi_{lb}(Z; \hat{\eta}, \tau)$. Assume that the conditions of Corollary~\ref{corollary-drlearner} hold and that $\tau \ne 0$.
    
    Denote $\tilde{\psi}_{M_1, lb}(v; \tau)$ as an oracle estimator from a regression of $\Xi_{lb}(Z; \eta, \tau)$ onto $V$ and let $Q_n^{2\star}(v)$ denote the oracle risk at the point $V = v$. Then 
    
    \begin{align*}
    [\hat{\psi}_{M_1, lb}(\tau) - \psi_{M_1, lb}(\tau)] \lesssim \hat{\mathbb{E}}_n(\hat{q}_{lb}(X) \mid V = v) + o_p(Q^\star_n(v))       
    \end{align*}
    where
    
    \begin{align*}
        \hat{q}_{lb}(x) &= \hat{b}(x) + (\pi_a(x) - \hat{\pi}_a(x))\sum_{m_1, m_2} (p(m_1, m_2 \mid a, x) - \hat{p}(m_1, m_2 \mid a, x)) \\ 
        &+ \sum_{m_1, m_2}(p(m_1, m_2 \mid a, x) - \hat{p}(m_1, m_2 \mid a, x))(p(m_1 \mid a, x) - \hat{p}(m_1 \mid a, x)) \\
        &+ \sum_{m_1, m_2}(p(m_1, m_2 \mid a, x) - \hat{p}(m_1, m_2 \mid a, x))(p(m_1 \mid a', x) - \hat{p}(m_1 \mid a', x)) \\
        &+\sum_{m_1, m_2}(p(m_1, m_2 \mid a, x) - \hat{p}(m_1, m_2 \mid a, x))(p(m_2 \mid a', x) - \hat{p}(m_2 \mid a', x))
    \end{align*}
    
    Moreover, if $\hat{\mathbb{E}}_n$ is a linear smoother with weights $\sum_i\lvert w_i(v; V^n)\rvert = \mathcal{O}_p(a_n)$, then
    
    \begin{align*}
    \hat{\mathbb{E}}_n(\hat{q}_{lb}(X) \mid V = v) = \max_j \mathcal{O}_p\left(a_n\|\hat{q}_{lb, j1}\|_{w,2}\|\hat{q}_{lb, j2}\|_{w, 2}\right)
    \end{align*}
    where $\hat{q}_{lb, j}$ represents the $j$-th second-order term in the expression of $\hat{q}_{lb}(x)$ above.
\end{corollary}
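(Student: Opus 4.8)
The plan is to reduce the claim to the argument behind Corollary~\ref{corollary-drlearner} (i.e.\ Proposition~1 of \cite{kennedy2022}), applied with the pseudo-outcome $f=\Xi_{lb}(\cdot;\eta,\tau)$ in place of $\varphi(\cdot;\eta)$, and then to control the resulting conditional-bias term by reusing the second-order expansions already derived for Theorem~\ref{theorem5}. First I would check the two hypotheses of Corollary~\ref{corollary-drlearner} for this $f$: stability of $\hat{\mathbb{E}}_n$ with respect to $d$ is assumed, and $d(\hat\Xi_{lb},\Xi_{lb})\to^p 0$ follows because, under Assumption~\ref{assumption:positivity} and with $\tau\ne 0$ (so $f_l(\tau),f_u(\tau)$ are finite and the coefficients $c_l,c_u,t_l,t_u$ are fixed), $\Xi_{lb}(z;\eta,\tau)$ is a bounded, Lipschitz transformation of the components of $\eta$ together with the auxiliary regressions $\zeta_{1,a},\zeta_{1,a'},\zeta_{2,a},\zeta_{2,a'}$ that enter $\phi_{1,a},\phi_{1,a'},\phi_{2,a},\phi_{2,a'}$, each of which is consistent by hypothesis. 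This gives
\[
\hat\psi^{dr}_{M_1,lb}(v;\tau) - \tilde\psi_{M_1,lb}(v;\tau) = \hat{\mathbb{E}}_n\{\hat b^\star_{lb}(X)\mid V=v\} + o_p(Q_n^\star(v)),
\]
where $\hat b^\star_{lb}(x) = \mathbb{E}\{\hat\Xi_{lb}(Z;\hat\eta,\tau) - \Xi_{lb}(Z;\eta,\tau)\mid D_1^n, X=x\}$ is the conditional bias of the estimated pseudo-outcome.

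The key step is to show $\hat b^\star_{lb}(x)\lesssim \hat q_{lb}(x)$. By the form of $\Xi_{lb}$ in Theorem~\ref{theorem4}, $\Xi_{lb}(z;\eta,\tau)$ is a fixed linear combination (with $\tau$-dependent coefficients, plus additive constants that contribute no bias) of $\varphi$, $\varphi_a$, $\varphi_{a'}$, $\phi_{1,a}$, $\phi_{1,a'}$, $\phi_{2,a}$, $\phi_{2,a'}$; hence $\hat b^\star_{lb}(x)$ is the same linear combination of the pointwise conditional biases of the estimated versions of these functions. The bias of $\hat\varphi$ is dominated by $\hat b(x)$ by Corollary~\ref{corollary-drlearner}, and the biases of $\hat\varphi_a,\hat\varphi_{a'}$ are likewise dominated by $\hat b(x)$ since they are the $A=a$ and $A=a'$ pieces of the same multiply robust functional. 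For the remaining pieces I would take the second-order remainders of $\mathbb{P}[\hat\phi_{1,a},\hat\phi_{1,a'},\hat\phi_{2,a},\hat\phi_{2,a'}]^\top - \mathbb{P}[\phi_{1,a},\phi_{1,a'},\phi_{2,a},\phi_{2,a'}]^\top$ established in the proof of Theorem~\ref{theorem5} (see Remark~\ref{remark:boundproj2}) and ``localize'' them by conditioning on $X=x$ rather than integrating over $X$; each $\Gamma$-functional is an integral of a product of the nuisances $\{\pi_a,\pi_{a'},\mu_a,p(M_1,M_2\mid a,X),p(M_1\mid a,X),p(M_1\mid a',X),p(M_2\mid a',X)\}$, so its conditional bias is a sum of pairwise products of the corresponding nuisance errors. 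The pairs not involving the error in $\hat p(M_1,M_2\mid a,X)$ already appear in $\hat b(x)$; the genuinely new contributions are the products of that error with the errors in $\hat\pi_a$, $\hat p(M_1\mid a,X)$, $\hat p(M_1\mid a',X)$, and $\hat p(M_2\mid a',X)$ --- exactly the four additional terms in $\hat q_{lb}(x)$. Collecting, $\hat b^\star_{lb}(x)\lesssim \hat q_{lb}(x)$, and substituting into the display above yields the first assertion.

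The linear-smoother refinement is the exact analogue of Corollary~\ref{corollary-extension}. I would write $\hat q_{lb}(x) = \sum_j \hat q_{lb,j1}(x)\hat q_{lb,j2}(x)$ as its finite decomposition into pairwise error products, apply Proposition~\ref{kennedyprop2} with $p=q=2$ to each term to get $\hat{\mathbb{E}}_n\{\hat q_{lb,j1}(X)\hat q_{lb,j2}(X)\mid V=v\} = \mathcal{O}_p(a_n\|\hat q_{lb,j1}\|_{w,2}\|\hat q_{lb,j2}\|_{w,2})$, and then sum over the finitely many $j$, bounding the sum by its maximum, to obtain $\hat{\mathbb{E}}_n\{\hat q_{lb}(X)\mid V=v\} = \max_j \mathcal{O}_p(a_n\|\hat q_{lb,j1}\|_{w,2}\|\hat q_{lb,j2}\|_{w,2})$.

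\textbf{Main obstacle.} The only non-routine step is the conditional-bias computation for the multiply robust components $\phi_{1,a},\phi_{1,a'},\phi_{2,a},\phi_{2,a'}$: one must confirm that the pointwise bias decomposes into precisely the claimed pairwise products --- in particular that no error products involving $\hat p(M_1,M_2\mid a,X)$ other than the four listed survive --- which requires carefully tracking cancellations in the von Mises expansion of each $\Gamma$-functional. This is where I would lean on the Theorem~\ref{theorem5} calculation; the extra work here is only to verify that replacing the outer expectation over $X$ by conditioning on $X=x$ preserves the remainder structure, which is immediate since each remainder is already an $X$-conditional expectation of products of functions of $X$. The remaining steps --- the $d$-convergence check and the linear-smoother bookkeeping --- are mechanical.
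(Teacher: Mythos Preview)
Your proposal is correct and follows essentially the same route as the paper: the paper's proof also localizes the second-order expansions from Theorem~\ref{theorem5} to obtain the conditional-bias functions $\hat{k}_{1,a}^\star,\hat{k}_{1,a'}^\star,\hat{k}_{2,a}^\star,\hat{k}_{2,a'}^\star$, combines them linearly with $\hat{b}^\star(x)$ via the structure of $\Xi_{lb}$, and then invokes boundedness plus estimator stability and Proposition~\ref{kennedyprop2} to conclude. The only cosmetic difference is that the paper writes out the four $\hat{k}^\star$ expressions explicitly before collecting terms, whereas you argue the form of the surviving error products directly from Remark~\ref{remark:boundproj2}.
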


\begin{remark}
    The comments in Remarks \ref{remark:boundproj1} and \ref{remark:boundproj2} apply here as well. The result for an estimate of $\psi_{M_1, ub}(v)$ is asymptotically equivalent. 
\end{remark}

\newpage

\section{Proofs}\label{app:proofs}

This section contains proofs of all theorems, corollaries, and propositions in the main paper and in Section~\ref{ssec:othertheorems}. We divide the proofs into two sections: the first pertaining to conditional effect estimation; the second pertaining to the bounds on $\psi_{M_1}$, both average and conditional. We defer tedious algebraic derivations to Section~\ref{app:soterms}.

\subsubsection{Conditional effect estimation}

The following proofs pertain to estimating the conditional effects. This section contains four proofs:

\begin{itemize}
    \item Proposition \ref{proposition2}: efficient influence function for the projection estimator
    \item Theorem \ref{theorem3}: root-n consistency and asymptotic normality of projection estimator
    \item Corollaries \ref{corollary-drlearner} and \ref{corollary-extension}: DR-Learner rates of convergence
\end{itemize} 

\begin{proof}[Proof of Proposition~\ref{proposition2}]
We consider the functional $\Psi(\beta; \mathbb{P})$ defined in \eqref{eqn:projestimator} and derive its efficient influence curve. This result follows directly from the result previously derived in \cite{cuellar2020non}, though we reproduce it here for completeness. We treat the data as discrete throughout to simplify the derivation. Let $h(x) = \frac{\partial g(v; \beta)}{\partial \beta}w(x)$. We first define our parameter:

\begin{align*}
\Psi(\beta; P) &= \mathbb{E}[h(X)[\psi_{M_1}(X) - g(V; \beta)]] \\
&= \sum_x h(x)[\psi_{M_1}(x) - g(v; \beta)]p(x)
\end{align*}
We can use the product rule to obtain that:

\begin{align*}
\text{IF}[\Psi(\beta)] &= \sum_x h(x) \text{IF}[\psi_{M_1}(x)]p(x) + h(x)[\psi_{M_1}(x) - g(v; \beta)]\text{IF}[p(x)] \\
&=\sum_x h(x) \text{IF}[\psi_{M_1}(x)]p(x) + h(x)[\psi_{M_1}(x) - g(v; \beta)][\mathds{1}(X = x) - p(x)] \\
&= h(X)[\varphi(Z) - g(V; \beta)] - \Psi(\beta; P) \\
&= \phi(Z) - \Psi(\beta; P)
\end{align*}
where $\varphi(Z)$ takes the form defined in \eqref{eqn:eifm1}. 
\end{proof}

\begin{proof}[Proof of Theorem \ref{theorem3}]
This result follows directly from Lemma 3 from \cite{kennedy2021semiparametric}. We further make the assumptions outlined in Theorem \ref{theorem3}, which correspond to the same assumptions of Lemma 3. The following term drives our ability to obtain root-n consistency and asymptotically normal estimates:

\begin{align*}
    \mathbb{P}[\phi(Z; \beta, \hat{\eta}) - \phi(Z; \beta, \eta)]
\end{align*}
This is simply equal to:

\begin{align*}
    \mathbb{P}[\varphi(Z; \hat{\eta}) - \varphi(Z; \eta)]
\end{align*}
We have shown that this expression is bounded by the product of errors of the nuisance functions (these products are reproduced and used to prove Corollary~\ref{corollary-drlearner}). These terms then characterize the rate $R_n = T_{1n} + T_{2n} + T_{3n} + T_{4n}$ in the $\mathcal{O}_p(R_n)$ term in Lemma 3. An immediate consequence is that if $R_n = o_p(n^{-1/4})$ then this term is of smaller order and thus

\begin{align*}
    \hat{\beta} - \beta = -M^{-1}[\mathbb{P}_n - \mathbb{P}]\phi(Z; \beta, \eta) + o_p(n^{-1/2})
\end{align*}
Because this expression is simply a centered sample mean of a fixed function, the asymptotic normality of the difference scaled by $\sqrt{n}$ follows via the Central Limit Theorem with variance equal to $M^{-1}\mathbb{E}[\phi\phi^\top]M^{-\top}$.
\end{proof}

\begin{proof}[Proof of Corollary~\ref{corollary-drlearner}]
Let $\varphi(Z; \eta)$ be the (uncentered) efficient influence function for $\psi_{M_1}$. Let $\hat{\psi}_{M_1} = n^{-1}\sum_{i=1}^n\varphi(Z; \hat{\eta})$. Recall that $P\{f(Z)\} = \int f(z) dP(z \mid D_1^n)$ where $D_1^n$ is an independent sample of size $n$. To ease notation we omit the conditioning on the training samples for the remainder of this proof; however, the consequence is that we derive the results thinking of the estimated functions as fixed and not dependent on the data. We show in Section~\ref{app:soterms} that $P[\hat{\psi}_{M_1} - \psi_{M_1}]$ decomposes into the following form:\footnote{\cite{benkeser2021nonparametric} show a similar derivation; our decomposition largely agrees though we find a few additional second-order terms. They do not provide a proof of their result in their paper, and we therefore cannot compare our derivations; however, we include ours here for completeness. The differences between our decomposition and theirs do not substantively affect our results.}

\begin{align}
    \nonumber & P[\hat{\psi}_{M_1} - \psi_{M_1}] \\
    \label{eqn:m1decomp.1}&= \mathbb{E}\left[\frac{\pi_a(X)}{\hat{\pi}_a(X)}\sum_{m_1, m_2}\frac{(\hat{p}(m_1 \mid a, X) - \hat{p}(m_1 \mid a', X))\hat{p}(m_2 \mid a', X)}{\hat{p}(m_1, m_2 \mid a', X)}\right.\\
    \nonumber&\left.\times \left(p(m_1, m_2 \mid a, X) - \hat{p}(m_1, m_2 \mid a, X))(\mu_a(m_1, m_2, X) - \hat{\mu}_a(m_1, m_2, X)) \right)\right] \\
    \label{eqn:m1decomp.2}&+ \mathbb{E}\left[\left(\frac{\pi_a(X) - \hat{\pi}_a(X)}{\hat{\pi}_a(X)}\right)\sum_{m_1, m_2}\hat{p}(m_2 \mid a', X)[\hat{p}(m_1 \mid a, X) - \hat{p}(m_1 \mid a', X)] \right.\\
    \nonumber&\left.\times(\mu_a(m_1, m_2, X) - \hat{\mu}_a(m_1, m_2, X))\right] \\
    \label{eqn:m1decomp.3}&+ \mathbb{E}\left[\left(\frac{\pi_a(X) - \hat{\pi}_a(X)}{\hat{\pi}_a(X)}\right)\sum_{m_1, m_2}\hat{\mu}_a(m_1, m_2, X)\hat{p}(m_2 \mid a', X)[(p(m_1 \mid a, X) - \hat{p}(m_1 \mid a, X)) \right.\\
    \nonumber&\left.+ (p(m_1 \mid a', X) - \hat{p}(m_1 \mid a', X))\right] \\
    \label{eqn:m1decomp.4}&+ \mathbb{E}\left[\left(\frac{\pi_{a'}(X) - \pi_{a'}(X)}{\hat{\pi}_{a'}(X)}\right)\sum_{m_1, m_2}\hat{\mu}_a(m_1, m_2, X)\right.\\
    \nonumber&\left.\times[\hat{p}(m_1 \mid a, X) - \hat{p}(m_1 \mid a', X)][p(m_2 \mid a, X) - \hat{p}(m_2 \mid a', X)\right]\\
    \label{eqn:m1decomp.5}&- \mathbb{E}\left[\sum_{m_1,m_2}\hat{p}(m_2 \mid a', X)(\mu_a(m_1, m_2, X) - \hat{\mu}_a(m_1, m_2, X))\right.\\
    \nonumber&\left.\times((p(m_1 \mid a, X) - \hat{p}(m_1 \mid a, X)) - (p(m_1 \mid a', X) - \hat{p}(m_1 \mid a', X))\right] \\
    \label{eqn:m1decomp.6}&- \mathbb{E}\left[\sum_{m_1, m_2}\hat{\mu}_a(m_1, m_2, X)[p(m_2 \mid a', X) - \hat{p}(m_2 \mid a', X)] \right. \\
    \nonumber&\left.\times ((p(m_1 \mid a, X) - \hat{p}(m_1 \mid a, X)) - (p(m_1 \mid a', X) - \hat{p}(m_1 \mid a', X))\right] \\
    \label{eqn:m1decomp.7}&- \mathbb{E}\left[\sum_{m_1, m_2}(p(m_1 \mid a, X) - p(m_1 \mid a', X))\right. \\
    \nonumber&\left.\times(\mu_a(m_1, m_2, X) - \hat{\mu}_a(m_1, m_2, X))(p(m_2 \mid a', X) - \hat{p}(m_2 \mid a', X) \right]
\end{align}

We use this result to bound the error of the DR-Learner at a given point $V = v$. We first define the conditional bias at $X = x$ as $\hat{b}^\star(x) = \mathbb{E}\{\hat{\varphi}(Z) - \varphi(Z) \mid X = x, D_1^n\}$. Following similar logic we obtain the following expression for $\hat{b}^\star(x)$:

\begin{align*}
    &\hat{b}^\star(x) = \frac{\pi_a(x)}{\hat{\pi}_a(x)}\left[\sum_{m_1, m_2}\frac{(\hat{p}(m_1 \mid a, x) - \hat{p}(m_1 \mid a', x))\hat{p}(m_2 \mid a', x)}{\hat{p}(m_1, m_2 \mid a', x)}\right.\\
    \nonumber&\left.\times \left(p(m_1, m_2 \mid a, x) - \hat{p}(m_1, m_2 \mid a, x))(\mu_a(m_1, m_2, x) - \hat{\mu}_a(m_1, m_2, x)) \right)\right] \\
    &+ \left(\frac{\pi_a(x) - \hat{\pi}_a(x)}{\hat{\pi}_a(x)}\right)\left[\sum_{m_1, m_2}\hat{p}(m_2 \mid a', x)[\hat{p}(m_1 \mid a, x) - \hat{p}(m_1 \mid a', x)] \right.\\
    \nonumber&\left.\times(\mu_a(m_1, m_2, x) - \hat{\mu}_a(m_1, m_2, x))\right] \\
    &+ \left[\left(\frac{\pi_a(x) - \hat{\pi}_a(x)}{\hat{\pi}_a(x)}\right)\sum_{m_1, m_2}\hat{\mu}_a(m_1, m_2, x)\hat{p}(m_2 \mid a', x)[p(m_1 \mid a, x) - \hat{p}(m_1 \mid a, x) \right.\\
    &\left.+ (p(m_1 \mid a', x) - \hat{p}(m_1 \mid a', x))\right] \\
    &+ \left(\frac{\pi_{a'}(x) - \hat{\pi}_{a'}(x)}{\hat{\pi}_{a'}(x)}\right)\left[\sum_{m_1, m_2}\hat{\mu}_a(m_1, m_2, x)\right.\\
    &\left.\times[\hat{p}(m_1 \mid a, x) - \hat{p}(m_1 \mid a', x)][p(m_2 \mid a, x) - \hat{p}(m_2 \mid a', x)\right]\\
    &- \left[\sum_{m_1,m_2}\hat{p}(m_2 \mid a', x)(\mu_a(m_1, m_2, x) - \hat{\mu}_a(m_1, m_2, x))\right.\\
    \nonumber&\left.\times(p(m_1 \mid a, x) - \hat{p}(m_1 \mid a, x) - (p(m_1 \mid a', x) - \hat{p}(m_1 \mid a', x))\right] \\
    &- \left[\sum_{m_1, m_2}\hat{\mu}_a(m_1, m_2, x)[p(m_2 \mid a', x) - \hat{p}(m_2 \mid a', x)] \right. \\
    \nonumber&\left.\times (p(m_1 \mid a, x) - \hat{p}(m_1 \mid a, x) - (p(m_1 \mid a', x) - \hat{p}(m_1 \mid a', x))\right] \\
    &- \left[\sum_{m_1, m_2}(p(m_1 \mid a, x) - p(m_1 \mid a', x))\times(\mu_a(m_1, m_2, x) - \hat{\mu}_a(m_1, m_2, x))(p(m_2 \mid a', x) - \hat{p}(m_2 \mid a', x) \right]
\end{align*}

The expression for $\hat{b}(x)$ in Corollary~\ref{corollary-drlearner} follows by the boundedness of the propensity-scores, the joint mediator probabilities, and their estimates. The result then follows from the definition of estimator stability, defined in \eqref{defn:stability}.
\end{proof}

\begin{proof}[Proof of Corollary~\ref{corollary-extension}]
    This result follows directly from Proposition~\ref{kennedyprop2} from \cite{kennedy2022}.
\end{proof}

\subsubsection{Bounds on interventional indirect effect}

This section contains proofs pertaining to estimating the proposed bounds on $\psi_{M_1}$.

\begin{proof}[Proof of Proposition~\ref{prop:bound-mu}]
Using the law of iterated expectations we obtain:

\begin{align}\label{eqn:iie}
    &\mathbb{E}[Y^{m_1m_2} \mid a, x] - \mu_a(m_1, m_2, x) \\
    \nonumber&= \sum_{m_1' \ne m_1, m_2' \ne m_2}\underbrace{[\mu_{am_1m_2}^\star(m_1', m_2', x) - \mu_a(m_1, m_2, x)]}_{(i)}p(m_1', m_2' \mid a, x)
\end{align}

The Proposition follows by replacing the term $(i)$ with $b_l(\cdot)$ or $b_u(\cdot)$.
\end{proof}

\begin{proof}[Proof of Proposition~\ref{prop:bounds-psi-m1}]
Recall that we assume throughout that $\tau(x)$ is valid for all values of $(m_1, m_2)$ at $X = x$. The assumptions in (\ref{eqn:a1})-(\ref{eqn:a3}) imply:

\begin{align*}
    &(\ref{eqn:a1}) \implies b_l(\cdot) = -\tau(x), \qquad \qquad b_u(\cdot) = \tau(x)\\
    &(\ref{eqn:a2}) \implies b_l(\cdot) = -\tau(x)\mu_a(\cdot), \qquad b_u(\cdot) = [1 - \mu_a(\cdot)]\tau(x) \\
    &(\ref{eqn:a3}) \implies b_l(\cdot) = -\tau(x)\mu_a(\cdot), \qquad b_u(\cdot) = \mu_a(\cdot)\tau(x) / (1-\tau(x))
\end{align*}

Notice that all functions $b_l$ and $b_u$ that take the form:

\begin{align*}
    h(\cdot) = [c\mu_a(\cdot) + t]f(\tau(x))
\end{align*}
for constants $(c, t) \in \{0, 1\}^2$. Replacing $b_l$ and $b_u$ with expressions of these form, combining this with the result from Proposition~\ref{prop:bound-mu}, and then averaging this expression over $[p(m_1 \mid a, x) - p(m_1 \mid a', x)]p(m_2 \mid a', x)$ for all values of $(m_1, m_2)$ gives the desired result. 
\end{proof}

\begin{proof}[Proof of Theorem~\ref{theorem4}]
We focus on the terms $\Gamma_{1, a}$ and $\Gamma_{2, a}$ noting that the other derivations follow analogously. We treat the data as discrete throughout to simplify the derivations (see, e.g., \cite{kennedy2022eifs}).

\begin{align*}
    \Gamma_{2,a} &= \sum_{x, m_1, m_2}p(m_1, m_2 \mid a, x)p(m_1 \mid a, x)p(m_2 \mid a', x)p(x) \\
    \Gamma_{1,a} &= \sum_{x, m_1, m_2}\mu_a(x, m_1, m_2) p(m_1, m_2 \mid a, x)p(m_1 \mid a, x)p(m_2 \mid a', x)p(x) 
\end{align*}

We can use the chain rule to obtain the result:

\begin{align*}
    \text{IF}(\Gamma_{2,a}) &= \sum_{x, m_1, m_2}\text{IF}[p(m_1, m_2 \mid a, x)]p(m_1 \mid a, x)p(m_2 \mid a', x)p(x) \\ 
    &+ p(m_1, m_2 \mid a, x)\text{IF}[p(m_1 \mid a, x)]p(m_2 \mid a', x)p(x) \\
    &+ p(m_1, m_2 \mid a, x)p(m_1 \mid a, x)\text{IF}[p(m_2 \mid a', x)]p(x) \\
    &+ p(m_1, m_2 \mid a, x)p(m_1 \mid a, x)p(m_2 \mid a', x)\text{IF}[p(x)]  \\
    &= \sum_{x, m_1, m_2}\frac{\mathds{1}(A = a, X = x)}{p(a \mid x)}[M_1M_2 - p(m_1, m_2 \mid a, x)]p(m_1 \mid a, x)p(m_2 \mid a', x) \\
    &+\frac{\mathds{1}(A = a, X = x)}{p(a \mid x)}[M_1 - p(m_1 \mid a, x)]p(m_1, m_2 \mid a, x)p(m_2 \mid a', x) \\
    &+ \frac{\mathds{1}(A = a', X = x)}{p(a' \mid x)}[M_2 - p(m_2 \mid a', x)]p(m_1 \mid a', x)p(m_1, m_2 \mid a, x) \\
    &+ p(m_1, m_2 \mid a, x)p(m_1 \mid a, x)p(m_2 \mid a', x)[\mathds{1}(X = x) - p(x)] \\
    &= [p(M_1 \mid a, X)p(M_2 \mid a', X) - \zeta_{2,a}(X)] \\ 
    &+ \frac{\mathds{1}(A = a)}{\pi_a(X)}[\sum_{m_2}p(M_1, m_2 \mid a, X)p(m_2 \mid a', x) - \zeta_{2,a}(X)] \\
    &+ \frac{\mathds{1}(A = a')}{\pi_{a'}(X)}\sum_{m_1}[p(m_1, M_2 \mid a, X)p(m_1 \mid a, X) - \zeta_{2,a}(X)] + \zeta_{2,a} - \Gamma_{2,a}
\end{align*} 

We now show the derivation for $\Gamma_{1, a}$ noting that $\Gamma_{1, a} = \sum_{x, m_1, m_2}\mu_a(x, m_1, m_2)\zeta_{2,a}(x)p(x)$. Therefore,

\begin{align*}
    \text{IF}(\Gamma_{1, a}) &= \sum_{x, m_1, m_2} \text{IF}[\mu_a(x, m_1, m_2)]\zeta_{2,a}(x)p(x) \\ 
    &+ \mu_a(x, m_1, m_2)\text{IF}[\zeta_{2,a}(x)]p(x) + \mu_a(x, m_1, m_2)\zeta_{2,a}(x)\text{IF}[p(x)] \\
    &= \sum_{x, m_1, m_2}\frac{\mathds{1}(A = a, M_1 = m_1, M_2 = m_2)}{p(m_1, m_2 \mid a, x)p(a \mid x)}[Y - \mu_a(m_1, m_2, x)]\\
    &\times[p(m_1 \mid a, x)p(m_2 \mid a', x)p(m_1, m_2 \mid a, x) - \zeta_{1, a}] \\
    &+ \mu_a(m_1, m_2, x)[\frac{\mathds{1}(A = a, X = x)}{p(a \mid x)}[M_1M_2 - p(m_1, m_2 \mid a, x)]p(m_1 \mid a, x)p(m_2 \mid a', x) \\
    &+\frac{\mathds{1}(A = a, X = x)}{p(a \mid x)}[M_1 - p(m_1 \mid a, x)]p(m_1, m_2 \mid a, x)p(m_2 \mid a', x) \\
    &+ \frac{\mathds{1}(A = a', X = x)}{p(a' \mid x)}[M_2 - p(m_2 \mid a', x)]p(m_1 \mid a', x)p(m_1, m_2 \mid a, x) \\
    &+ p(m_1, m_2 \mid a, x)p(m_1 \mid a, x)p(m_2 \mid a', x)[\mathds{1}(X = x) - p(x)] \\
    &+ \mu_a(m_1, m_2, x)\zeta_{2,a}(x)[\mathds{1}(X = x) - p(x)] \\
    &= \frac{\mathds{1}(A = a)}{\pi_a(X)}[Yp(M_1 \mid a, X)p(M_2 \mid a', X) - \zeta_{1,a}(X)] \\
    &+\frac{\mathds{1}(A = a)}{\pi_a(X)}\sum_{m_2}[\mu_a(M_1, m_2, X)p(M_1, m_2 \mid, a, X)p(m_2 \mid a', X) - \zeta_{1, a}(X) \\
    &+ \frac{\mathds{1}(A = a')}{\pi_{a'}(X)}\left[\sum_{m_1}\mu_a(m_1, M_2, X)p(m_1, M_2 \mid a, X)p(m_1 \mid a, X) - \zeta_{1,a}(X)\right] \\
    &+ \zeta_{1,a}(X) - \Gamma_{1, a}
\end{align*}

The derivations for $\Gamma_{1, a'}$ and $\Gamma_{2, a'}$ are analogous.
\end{proof}

\begin{proof}[Proof of Theorem~\ref{theorem5}]
Let $\hat{\Xi} = \Xi(Z; \hat{\eta}, \tau)$. We begin by considering the decomposition in \cite{kennedy2022eifs}:

\begin{align*}
    \mathbb{P}_n \hat{\Xi} - P \Xi &= \underbrace{[\mathbb{P}_n - P][\hat{\Xi}(Z) - \Xi(Z)]}_{T_1} + \underbrace{[\mathbb{P}_n - P]\Xi(Z)}_{T_2} + \underbrace{P[\hat{\Xi}(Z) - \Xi(Z)]}_{T_3} 
\end{align*}

$T_1$ is $o_p(1/\sqrt{n})$ as long as $\hat{\Xi}$ is consistent for $\Xi$ in the $L_2(\mathbb{P})$ norm -- i.e. $\|\hat{\Xi} - \Xi \| = o_p(1)$ (see, e.g., \cite{kennedy2022eifs}). $T_2$ is a sample average of a fixed function with zero mean and therefore by the Central Limit Theorem converges in distribution to $\mathcal{N}(0, \mathbb{E}[\phi(Z)\phi(Z)^\top]$.

It remains to analyze $T_3$. We proceed in typical fashion by showing that this term can be expressed as the product of errors in the nuisance estimation. We previously showed in the proof of Corollary~\ref{corollary-drlearner} that $P[\varphi(Z; \hat{\eta}) - \varphi(Z; \eta)]$ is second-order in the nuisance estimation. It suffices to show that 

\begin{align*}
P\{[\hat{\phi}_{1,a}(Z), \hat{\phi}_{1,a'}(Z), \hat{\phi}_{2,a}(Z), \hat{\phi}_{2,a'}(Z)]^\top-[\phi_{1,a}(Z), \phi_{1,a'}(Z), \phi_{2,a}(Z), \phi_{2,a'}(Z)]^\top\}
\end{align*}
is also second-order in the nuisance estimation. We show this for each term separately. We provide the final results below and include the derivations in Section~\ref{app:soterms}. 

\begin{align*}
    &P[\hat{\phi}_{1,a}(Z) - \phi_{1,a}(Z)] \\
    &= \mathbb{E}\left[\frac{\pi_a(X) }{\hat{\pi}_a(X)}\sum_{m_1,m_2}(p(m_1, m_2 \mid a, x) - \hat{p}(m_1, m_2 \mid a, x))(\mu_a(m_1, m_2, X) - \hat{\mu}_a(m_1, m_2, X))\right.\\
    &\left.\times\hat{p}(m_1 \mid a, X)\hat{p}(m_2 \mid a', X)\right] \\
    &+ \left(\frac{\pi_a(X) - \hat{\pi}_a(X)}{\hat{\pi}_a(X)}\right) \sum_{m_1,m_2}(\mu_a(m_1, m_2, X) - \hat{\mu}_a(m_1, m_2, X))\hat{p}(m_1 \mid a, X) \\
    &\times \hat{p}(m_2 \mid a', X)\hat{p}(m_1, m_2 \mid a, X) \\
    &- \frac{\pi_a(X)}{\hat{\pi}_a(X)}\sum_{m_1,m_2}\hat{\mu}_a(m_1, m_2 \mid a, X)[p(m_1, m_2 \mid a, X) - \hat{p}(m_1, m_2 \mid a, X)] \\
    &\times[p(m_2 \mid a', X)(p(m_1 \mid a, X) - \hat{p}(m_1 \mid a, X)) + \hat{p}(m_1 \mid a, X)(p(m_2 \mid a', X) - \hat{p}(m_2 \mid a', X))] \\
    &+ \left(\frac{\pi_a(X) - \hat{\pi}_a(X)}{\hat{\pi}_a(X)}\right)\sum_{m_1, m_2}\hat{\mu}_a(m_1, m_2, X)(p(m_1, m_2 \mid a, X) - \hat{p}(m_1, m_2 \mid a, X))\hat{p}(m_1 \mid a, X)\hat{p}(m_2 \mid a', X) \\
    &+\left(\frac{\pi_a(X) - \hat{\pi}_a(X)}{\hat{\pi}_a(X)}\right)\sum_{m_1,m_2}\hat{\mu}_a(m_1, m_2, X)\hat{p}(m_1, m_2 \mid a, X)[p(m_1 \mid a, X) - \hat{p}(m_1 \mid a, X)]\hat{p}(m_2 \mid a', X) \\
    &+\left(\frac{\pi_{a'}(X) - \hat{\pi}_{a'}(X)}{\hat{\pi}_{a'}(X)}\right)\sum_{m_1,m_2}\hat{\mu}_a(m_1, m_2, X)\hat{p}(m_1, m_2 \mid a, X)(p(m_2 \mid a', X) - \hat{p}(m_2 \mid a', X))\hat{p}(m_1 \mid a, X) \\
    &-\sum_{m_1, m_2}\hat{\mu}_a(m_1, m_2 \mid a, X)(p(m_1, m_2 \mid a, X) - \hat{p}(m_1, m_2 \mid a, X))[(p(m_1 \mid a, x) - \hat{p}(m_1 \mid a, X))]\hat{p}(m_2 \mid a', X) \\
    &- \sum_{m_1,m_2}\hat{\mu}_a(m_1, m_2, X)\hat{p}(m_1, m_2 \mid a, X)[p(m_2 \mid a', X) - \hat{p}(m_2 \mid a', X)] \\
    &\times\{\hat{p}(m_1, m_2 \mid a, X)[p(m_1 \mid a, X) - \hat{p}(m_1 \mid a, X)] + p(m_1 \mid a, X)[p(m_1, m_2 \mid a, X) - \hat{p}(m_1, m_2 \mid a, X)]\} \\
    &- \sum_{m_1, m_2}[\mu_a(m_1,m_2, X) - \hat{\mu}_a(m_1, m_2, X)]\{p(m_1, m_2 \mid a, X)p(m_2 \mid a', X)[p(m_1 \mid a, X) - \hat{p}(m_1 \mid a, X)] \\
    &+ p(m_1,m_2 \mid a, X)\hat{p}(m_1 \mid a, X)[p(m_2 \mid a', X) - \hat{p}(m_2 \mid a', X)] \\
    &+ \hat{p}(m_1 \mid a, X)\hat{p}(m_2 \mid a', X)[p(m_1, m_2 \mid a, X) - \hat{p}(m_1, m_2 \mid a, x)]\}]
\end{align*}

\begin{align*}
    &P[\hat{\phi}_{1,a'}(Z) - \phi_{1,a'}(Z)] \\
    &= \mathbb{E}\left[\frac{\pi_a(X) }{\hat{\pi}_a(X)}\sum_{m_1,m_2}(p(m_1, m_2 \mid a, x) - \hat{p}(m_1, m_2 \mid a, x))(\mu_a(m_1, m_2, X) - \hat{\mu}_a(m_1, m_2, X))\right.\\
    &\left.\times\hat{p}(m_1 \mid a', X)\hat{p}(m_2 \mid a', X)\right] \\
    &+ \left(\frac{\pi_a(X) - \hat{\pi}_a(X)}{\hat{\pi}_a(X)}\right) \sum_{m_1,m_2}(\mu_a(m_1, m_2, X) - \hat{\mu}_a(m_1, m_2, X))[\hat{p}(m_1 \mid a', X)] \\
    &\times \hat{p}(m_2 \mid a', X)\hat{p}(m_1, m_2 \mid a, X) \\
    &- \frac{\pi_a(X)}{\hat{\pi}_a(X)}\sum_{m_1,m_2}\hat{\mu}_a(m_1, m_2 \mid a, X)[p(m_1, m_2 \mid a, X) - \hat{p}(m_1, m_2 \mid a, X)] \\
    &\times[p(m_2 \mid a', X)(p(m_1 \mid a', X) - \hat{p}(m_1 \mid a', X)) + \hat{p}(m_1 \mid a', X)(p(m_2 \mid a', X) - \hat{p}(m_2 \mid a', X))] \\
    &- \left(\frac{\pi_a(X) - \hat{\pi}_a(X)}{\hat{\pi}_a(X)}\right)\sum_{m_1, m_2}\hat{\mu}_a(m_1, m_2, X)(p(m_1, m_2 \mid a, X) - \hat{p}(m_1, m_2 \mid a, X))\hat{p}(m_1 \mid a', X) \hat{p}(m_2 \mid a', X) \\
    &-\left(\frac{\pi_a(X) - \hat{\pi}_a(X)}{\hat{\pi}_a(X)}\right)\sum_{m_1,m_2}\hat{\mu}_a(m_1, m_2, X)\hat{p}(m_1, m_2 \mid a, X)[(p(m_1 \mid a', X) - \hat{p}(m_1 \mid a', X))]\hat{p}(m_2 \mid a', X) \\
    &+\left(\frac{\pi_{a'}(X) - \hat{\pi}_{a'}(X)}{\hat{\pi}_{a'}(X)}\right)\sum_{m_1,m_2}\hat{\mu}_a(m_1, m_2, X)\hat{p}(m_1, m_2 \mid a, X)(p(m_2 \mid a', X) - \hat{p}(m_2 \mid a', X))\hat{p}(m_1 \mid a', X) \\
    &-\sum_{m_1, m_2}\hat{\mu}_a(m_1, m_2 \mid a, X)(p(m_1, m_2 \mid a, X) - \hat{p}(m_1, m_2 \mid a, X))[p(m_1 \mid a', x) - \hat{p}(m_1 \mid a', X)]\hat{p}(m_2 \mid a', X) \\
    &- \sum_{m_1,m_2}\hat{\mu}_a(m_1, m_2, X)\hat{p}(m_1, m_2 \mid a, X)[p(m_2 \mid a', X) - \hat{p}(m_2 \mid a', X)] \\
    &\times \{\hat{p}(m_1, m_2 \mid a, X)[p(m_1 \mid a', X) - \hat{p}(m_1 \mid a', X)] + p(m_1 \mid a', X)[p(m_1, m_2 \mid a, X) - \hat{p}(m_1, m_2 \mid a, X)]\} \\
    &- \sum_{m_1, m_2}[\mu_a(m_1,m_2, X) - \hat{\mu}_a(m_1, m_2, X)]\{p(m_1, m_2 \mid a, X)p(m_2 \mid a', X)[p(m_1 \mid a', X) - \hat{p}(m_1 \mid a', X)] \\
    &+ p(m_1,m_2 \mid a, X)\hat{p}(m_1 \mid a', X)[p(m_2 \mid a', X) - \hat{p}(m_2 \mid a', X)] \\
    &+ \hat{p}(m_1 \mid a', X)\hat{p}(m_2 \mid a', X)[p(m_1, m_2 \mid a, X) - \hat{p}(m_1, m_2 \mid a, x)]\}]
\end{align*}

\begin{align*}
    &P[\hat{\phi}_{2,a}(Z) - \phi_{2,a}(Z)] \\
    &= \mathbb{E}\left[\left(\frac{\pi_a(X) - \hat{\pi}_a(X)}{\hat{\pi}_a(X)}\right)\sum_{m_1, m_2} [p(m_1 \mid a, X) - \hat{p}(m_1 \mid a, X)]\hat{p}(m_2 \mid a', X)\hat{p}(m_1, m_2 \mid a, X)\right.\\
    \nonumber&\left.+ \left(\frac{\pi_{a'}(X) - \hat{\pi}_{a'}(X)}{\hat{\pi}_a(X)}\right)\sum_{m_1, m_2} [p(m_2 \mid a', X) - \hat{p}(m_2 \mid a', X)]\hat{p}(m_1 \mid a, X)\hat{p}(m_1, m_2 \mid a, X) \right.\\ 
    \nonumber&\left.+ \left(\frac{\pi_a(X) - \hat{\pi}_a(X)}{\hat{\pi}_a(X)}\right)\sum_{m_1, m_2} [p(m_1, m_2 \mid a, X) - \hat{p}(m_1, m_2 \mid a, X)]\hat{p}(m_1 \mid a, X)\hat{p}(m_2 \mid a', X)  \right.\\
    \nonumber&\left.-\sum_{m_1, m_2}\hat{p}(m_1, m_2 \mid a, X)[p(m_1 \mid a, X) - \hat{p}(m_1 \mid a, X)][p(m_2 \mid a', X) - \hat{p}(m_2 \mid a', X)] \right.\\
    \nonumber&\left.-\sum_{m_1, m_2}\hat{p}(m_1 \mid a, x)[p(m_1, m_2 \mid a, X) - \hat{p}(m_1, m_2 \mid a, X)][p(m_2 \mid a', X) - \hat{p}(m_2 \mid a', X)] \right.\\
    \nonumber&\left.- \sum_{m_1, m_2}p(m_2 \mid a', X)[p(m_1, m_2 \mid a, X) - \hat{p}(m_1, m_2 \mid a, X)][p(m_1 \mid a, X) - \hat{p}(m_1 \mid a, X)]\right]
\end{align*}

\begin{align*}
    &P[\hat{\phi}_{2,a'}(Z) - \phi_{2,a'}(Z)] \\
    &= \mathbb{E}\left[\left(\frac{\pi_{a'}(X) - \hat{\pi}_{a'}(X)}{\hat{\pi}_{a'}(X)}\right)\sum_{m_1, m_2} [p(m_1 \mid a', X) - \hat{p}(m_1 \mid a', X)]\hat{p}(m_2 \mid a', X)\hat{p}(m_1, m_2 \mid a, X)\right.\\
    \nonumber&\left.+ \left(\frac{\pi_{a'}(X) - \hat{\pi}_{a'}(X)}{\hat{\pi}_{a'}(X)}\right)\sum_{m_1, m_2} [p(m_2 \mid a', X) - \hat{p}(m_2 \mid a', X)]\hat{p}(m_1 \mid a', X)\hat{p}(m_1, m_2 \mid a, X) \right.\\ 
    \nonumber&\left.+ \left(\frac{\pi_a(X) - \hat{\pi}_a(X)}{\hat{\pi}_a(X)}\right)\sum_{m_1, m_2} [p(m_1, m_2 \mid a, X) - \hat{p}(m_1, m_2 \mid a, X)]\hat{p}(m_1 \mid a', x)\hat{p}(m_2 \mid a', X)  \right.\\
    \nonumber&\left.-\sum_{m_1, m_2}\hat{p}(m_1 \mid a', x)[p(m_1, m_2 \mid a, X) - \hat{p}(m_1, m_2 \mid a, X)][p(m_2 \mid a', X) - \hat{p}(m_2 \mid a', X)] \right.\\
    \nonumber&\left.-\sum_{m_1, m_2}\hat{p}(m_1, m_2 \mid a, x)[p(m_1 \mid a', X) - \hat{p}(m_1 \mid a', X)][p(m_2 \mid a', X) - \hat{p}(m_2 \mid a', X)] \right.\\
    \nonumber&\left.- \sum_{m_1, m_2}p(m_2 \mid a', X)[p(m_1, m_2 \mid a, X) - \hat{p}(m_1, m_2 \mid a, X)][p(m_1 \mid a', X) - \hat{p}(m_1 \mid a', X)]\right]
\end{align*}

Each of the terms above can be shown to be bounded by the product of nuisance estimation times a constant, having assumed that the propensity scores, the joint mediator probabilities, and their estimates are bounded away from zero (and one for the propensity scores) by $\epsilon$. For example, consider the first term in the decomposition of $P[\hat{\phi}_{2,a}(Z) - \phi_{2,a}(Z)]$:

\begin{align*}
    &\mathbb{E}\left[\left(\frac{\pi_a(X) - \hat{\pi}_a(X)}{\hat{\pi}_a(X)}\right)\sum_{m_1, m_2} [p(m_1 \mid a, X) - \hat{p}(m_1 \mid a, X)]\hat{p}(m_2 \mid a', X)\hat{p}(m_1, m_2 \mid a, X)\right]
\end{align*}

By the boundedness of the propensity score estimates, we obtain that this term:

\begin{align*}
    &\le \frac{1}{\epsilon} P\left[\frac{\mathds{1}(A = a)[\pi_a(X) - \hat{\pi}_a(X)][p(M_1 \mid a, X) - \hat{p}(M_1 \mid a, X)]\hat{p}(M_2 \mid a', X)\hat{p}(M_1, M_2 \mid a, X)}{p(A \mid X)p(M_1, M_2 \mid A, X)}\right] \\
    &\le \frac{1}{\epsilon^3} \|\pi_a(X) - \hat{\pi}_a(X)\|\|p(M_1 \mid a, X) - \hat{p}(M_1 \mid a, X)\|
\end{align*}
where the final line holds by the boundedness of the propensity scores, mediator probabilities, and their estimates, and by Cauchy-Schwarz. We can make similar derivations for the remaining terms. Therefore, as long as $Z_n = o_p(n^{-1/2})$, where

\begin{align*}
    Z_n &= \|\hat{\mu}_a(M_1, M_2, X) - \mu_a(M_1, M_2, X)\|\left[\|\hat{\pi}(X)_a - \pi_a(X)\| \right.\\
    &\left.+\|\hat{p}(M_1 \mid a, X) - p(M_1 \mid a, X)\| + \|\hat{p}(M_1, M_2 \mid a, X) - p(M_1, M_2 \mid a, X)\| \right. \\
    &\left.+ \|\hat{p}(M_1 \mid a', X) - p(M_1 \mid a', X)\| + \|\hat{p}(M_2 \mid a', X) - p(M_2 \mid a', X)\|\right] \\
    &+\|\hat{\pi}_a(X) - \pi_a(X)\|\left[\|\hat{p}(M_1 \mid a, X) - p(M_1 \mid a, X)\| + \|\hat{p}(M_1 \mid a', X) - p(M_1 \mid a', X)\| \right.\\
    &\left.+ \|\hat{p}(M_2 \mid a', X) - p(M_2 \mid a', X)\| +
    \|\hat{p}(M_1, M_2 \mid a, X) - p(M_1, M_2 \mid a, X)\|\right] \\
    &+ \|\hat{p}(M_1, M_2 \mid a, X) - p(M_1, M_2 \mid a, X)\|\left[\|\hat{p}(M_1 \mid a, X) - p(M_1 \mid a, X)\| \right.\\
    &\left. +\|\hat{p}(M_1 \mid a', X) - p(M_1 \mid a', X)\| + \|\hat{p}(M_2 \mid a', X) - p(M_2 \mid a', X)\|\right] \\
    &+ \|\hat{p}(M_2 \mid a', X) - p(M_2 \mid a', X)\|[\|\hat{p}(M_1 \mid a, X) - p(M_1 \mid a, X)\| \\
    &+ \|\hat{p}(M_1 \mid a', X) - p(M_1 \mid a', X)\|]
\end{align*}
then the result follows. This would be satisfied, if, for example, each element of $\eta$ were estimated at a rate of $o_p(n^{-1/4})$. 
\end{proof}

\begin{remark}
Estimating each element of $\eta$ at rates of $o_p(n^{-1/4})$ suffices to guarantee both asymptotic normality of the point estimate of $\psi_{M_1}$ and the bounds above. 
\end{remark}

\begin{proof}[Proof of Corollary~\ref{corollary4}]
    The proof of this corollary for either the projection estimator of the bounds is identical to the proof of Theorem \ref{theorem3}, where the result follows from the fact that: 
    
    \begin{align*}
        P[\phi(Z; \hat{\eta}, \tau) - \phi(Z; \eta, \tau)] 
    \end{align*}
    is second-order in the nuisance estimation, with the second-order expression provided by the term $Z_n$ in the statement of Corollary~\ref{corollary4}. 
\end{proof}

\begin{proof}[Proof of Corollary~\ref{corollary5}]
    The proof of this corollary relies primarily on the fact that the conditional bias term at $X = x$ is second-order in the nuisance estimation. This follows using the decomposition from the proof of Theorem~\ref{theorem5} and following the steps of Corollaries~\ref{corollary-drlearner} and \ref{corollary-extension}. 
    
    We first derive the following expressions:  
    
    \begin{align*}
        &\hat{k}_{1, a}^\star(x)  = \mathbb{E}[\hat{\phi}_{1, a} - \phi_{1, a} \mid X = x, D_1^n] \\
        &\hat{k}_{1, a'}^\star(x) = \mathbb{E}[\hat{\phi}_{1, a'} - \phi_{1, a'} \mid X = x, D_1^n] \\
        &\hat{k}_{2, a}^\star(x)  = \mathbb{E}[\hat{\phi}_{2, a} - \phi_{2, a} \mid X = x, D_1^n] \\
        &\hat{k}_{2, a'}^\star(x) = \mathbb{E}[\hat{\phi}_{2, a'} - \phi_{2, a'} \mid X = x, D_1^n] \\
    \end{align*}
    noting that $\mathbb{E}[\hat{\Xi}_{M_1, lb} - \Xi_{M_1, lb} \mid X = x, D_1^n]$ can be derived as linear combinations of these these expressions, $\hat{b}(v)$ (derived previously), $\tau$. To ease notation, we let $\pi_a = \pi_a(x)$, $\mu_a = \mu_a(x, m_1, m_2)$, $p_1 = p(m_1 \mid a, x)$, $p_2 = p(m_2 \mid a', x)$, and so forth,
    
    \begin{align*}
        \hat{k}_{1,a}^\star(x) &= \frac{\pi_a}{\hat{\pi}_a}\sum_{m_1,m_2}(\mu_a - \hat{\mu}_a)(p_{12}-\hat{p}_{12})\hat{p}_1\hat{p}_2' + \left(\frac{\pi_a - \hat{\pi}_a}{\hat{\pi}_a}\right)\sum_{m_1,m_2} (\mu_a - \hat{\mu}_a)\hat{p}_{12}\hat{p}_1 \hat{p}_2' \\
        &- \frac{\pi_a}{\hat{\pi}_a}\sum_{m_1,m_2}\hat{\mu}_a(p_{12} - \hat{p}_{12})[(\hat{p}_1(p_2' - \hat{p}_2') + p_2'(p_1 - \hat{p}_1)] \\
        &+ \left(\frac{\pi_a - \hat{\pi}_a}{\hat{\pi}_a}\right)\sum_{m_1,m_2}\hat{\mu}_a(p_{12} - \hat{p}_{12})\hat{p}_2'\hat{p}_1 \\
        &+ \left(\frac{\pi_a - \hat{\pi}_a}{\hat{\pi}_a}\right)\sum_{m_1,m_2}\hat{\mu}_a\hat{p}_{12}\hat{p}_2'[p_1 - \hat{p}_1] + \left(\frac{\pi_{a'} - \hat{\pi}_{a'}}{\hat{\pi}_{a'}}\right)\sum_{m_1,m_2}\hat{\mu}_a\hat{p}_{12}\hat{p}_1[(p_2' - \hat{p}_2')] \\
        &- \sum_{m_1, m_2}\hat{\mu}_a(p_{12} - \hat{p}_{12})[(p_1 - \hat{p}_1)]\hat{p}_2' \\
        &- \sum_{m_1,m_2}\hat{\mu}_a(p_2' - \hat{p}_2')[p_1(p_{12} - \hat{p}_{12}) + \hat{p}_{12}(p_1 - \hat{p}_1)] \\
        &- \sum_{m_1,m_2}(\mu_a - \hat{\mu}_a)[p_2'p_{12}(p_1 - \hat{p}_1) + \hat{p}_1p_{12}(\hat{p}_2' - p_2') + \hat{p}_1\hat{p}_2'(p_{12} - \hat{p}_{12})]    
    \end{align*}

    \begin{align*}
        \hat{k}_{1,a'}^\star(x) &= \frac{\pi_a}{\hat{\pi}_a}\sum_{m_1,m_2}(\mu_a - \hat{\mu}_a)(p_{12}-\hat{p}_{12})\hat{p}_1'\hat{p}_2' + \left(\frac{\pi_a - \hat{\pi}_a}{\hat{\pi}_a}\right)\sum_{m_1,m_2} (\mu_a - \hat{\mu}_a)\hat{p}_{12}\hat{p}_1'\hat{p}_2' \\
        &- \frac{\pi_a}{\hat{\pi}_a}\sum_{m_1,m_2}\hat{\mu}_a(p_{12} - \hat{p}_{12})[(p_1'(p_2' - \hat{p}_2') + \hat{p}_2'(p_1' - p_1')] \\
        &- \left(\frac{\pi_a - \hat{\pi}_a}{\hat{\pi}_a}\right)\sum_{m_1,m_2}\hat{\mu}_a(p_{12} - \hat{p}_{12})\hat{p}_2'\hat{p}_1' \\
        &- \left(\frac{\pi_a - \hat{\pi}_a}{\hat{\pi}_a}\right)\sum_{m_1,m_2}\hat{\mu}_a\hat{p}_{12}\hat{p}_2'[(p_1' - \hat{p}_1')] + \left(\frac{\pi_{a'} - \hat{\pi}_{a'}}{\hat{\pi}_{a'}}\right)\sum_{m_1,m_2}\hat{\mu}_a\hat{p}_{12}\hat{p}_1'[(p_2' - \hat{p}_2')] \\
        &- \sum_{m_1, m_2}\hat{\mu}_a(p_{12} - \hat{p}_{12})[(p_1' - \hat{p}_1')]\hat{p}_2' \\
        &- \sum_{m_1,m_2}\hat{\mu}_a(p_2' - \hat{p}_2')[p_1'(p_{12} - \hat{p}_{12}) + \hat{p}_{12}(p_1' - \hat{p}_1')] \\
        &- \sum_{m_1,m_2}(\mu_a - \hat{\mu}_a)[p_2'p_{12}(p_1' - \hat{p}_1') + \hat{p}_1'p_{12}(p_2' - \hat{p}_2') + \hat{p}_1'\hat{p}_2'(p_{12} - \hat{p}_{12})]    
    \end{align*}
    
    \begin{align*}
        \hat{k}_{2,a}^\star(x) &= \left(\frac{\pi_a - \hat{\pi}_a}{\hat{\pi}_a}\right)\sum_{m_1, m_2} [p_{12} - \hat{p}_{12}]\hat{p}_1\hat{p}_2'
        + \left(\frac{\pi_a - \hat{\pi}_a}{\hat{\pi}_a}\right) \sum_{m_1, m_2}[p_1 - \hat{p}_1]\hat{p}_{12}\hat{p}_2' \\
        &+\left(\frac{\pi_{a'} - \hat{\pi}_{a'}}{\hat{\pi}_{a'}}\right)\sum_{m_1, m_2}[p_2' - \hat{p}_2']\hat{p}_{12}\hat{p}_1 \\
        &-\sum_{m_1,m_2}[p_{12} - \hat{p}_{12}]p_2'[p_1 - \hat{p}_1] - \sum_{m_1,m_2}[p_{12} - \hat{p}_{12}]\hat{p}_1[p_2' - \hat{p}_2'] \\
        &-\sum_{m_1, m_2}\hat{p}_{12}[p_1 - \hat{p}_1][p_2' - \hat{p}_2']
    \end{align*}

    \begin{align*}
        \hat{k}_{2,a'}^\star(x) &= \left(\frac{\pi_a - \hat{\pi}_a}{\hat{\pi}_a}\right)\sum_{m_1, m_2} [p_{12} - \hat{p}_{12}]\hat{p}_1\hat{p}_2'  + \left(\frac{\pi_{a'} - \hat{\pi}_{a'}}{\hat{\pi}_{a'}}\right) \sum_{m_1, m_2}[p_1' - \hat{p}_1']\hat{p}_{12}\hat{p}_2' \\ 
        &+\left(\frac{\pi_{a'} - \hat{\pi}_{a'}}{\hat{\pi}_{a'}}\right)\sum_{m_1, m_2}[p_2' - \hat{p}_2']\hat{p}_{12}\hat{p}_1' \\
        &-\sum_{m_1,m_2}[p_{12} - \hat{p}_{12}]p_2'[p_1' - \hat{p}_1'] - \sum_{m_1,m_2}[p_{12} - \hat{p}_{12}]\hat{p}_1'[p_2' - \hat{p}_2'] \\
        &-\sum_{m_1, m_2}\hat{p}_{12}[p_1' - \hat{p}_1'][p_2' - \hat{p}_2']
    \end{align*}
    
    The expression for $\hat{q}_{lb}(v)$ in Corollary~\ref{corollary5} follows from collecting all possible second-order products and the boundedness of the outcome model, propensity scores, mediator probabilities, and their estimates. The final result follows from the definition of estimator stability and applying Proposition~\ref{kennedyprop2} from \cite{kennedy2022}.
\end{proof}

\newpage

\newpage

\section{Simulation details}\label{app:simulation}

We provide additional details about the data generating process for our simulation and additional simulation results. 

\subsection{Data generation}

We first define the covariates and the propensity-score model:
\begin{align*}
 & X^\star \sim \mathcal{N}(1, 0.5) \\
 & X = -2\mathds{1}(X^\star < -2) + 4 \mathds{1}(X^\star > 4) + Z\mathds{1}(X^\star \ge -2, X^\star \le 4) \\
 & \pi_a(X) = 0.2\mathds{1}(X < -1) + (0.2 + 0.55 \lvert X + 1 \rvert \mathds{1}(X > -1, X < 0) \\
 &+ (0.75 - 0.25 X)\mathds{1}(X > 0, X < 1) + (0.5 - 0.25 (X - 1)^2) \mathds{1}(X > 1, X < 2) \\
 &+ (0.25 + 0.5 (X - 2)) \mathds{1}(X > 2, X < 3) + 0.75 \mathds{1}(X > 3) 
\end{align*}
We next define the counterfactual mediator probabilities:

\begin{align*}
 &U \sim \text{Bern}(0.5) \\
 &p(M_1(0) = 1 \mid U = 0, X) = 0.15 + 0.1(X + 1) \\
 &p(M_1(1) = 1 \mid U = 0, X) = 0.55 + 0.05(X + 1)  \\
 &p(M_1(0) = 1 \mid U = 1, X) = 0.1 \\
 &p(M_1(1) = 1 \mid U = 1, X) = 0.8 \\
 &p(M_2(0) = 1 \mid U = 0, X) = 0.15 + 0.125(X + 1)  \\
 &p(M_2(1) = 1 \mid U = 0, X) = 0.4 + 0.1(X + 0.5)  \\
 &p(M_2(0) = 1 \mid U = 1, X) = 0.1 \\
 &p(M_2(1) = 1 \mid U = 1, X) = 0.8
\end{align*}
We multiply the relevant probabilities together to obtain the joint mediator probabilities given $U = u$. For $j \in \{1, 2\}$ and $a^\star \in \{0, 1\}$, we then marginalize all of these probabilities over $U$ to obtain $p(M_{j}(a^\star) \mid X)$ and $p(M_1^{a^\star}, M_2^{a^\star} \mid X)$. We next define the following intermediate functions for the outcome models:

\begin{align*}
&\zeta(X) = (X - X^2) \mathds{1}(X < - 0.5) + (-2 + X) \mathds{1} (X > -0.5, X < 0)  \\
&+(- 12 + 10\sin(X^2) + 10\cos(X^2)) \mathds{1} (X > 0, X < 1)) \\
&+ (-12 + 10\sin(1) + 10\cos(1) - 5(X - 1) - 5(X - 1)^2) \mathds{1}(X > 1, X < 1.5) \\
&+(- 12 + 10\sin(1) + 10\cos(1) - 3.75 + 0.5 * (X - 1.5) - (X - 1.5)^2 + 3(X - 1.5)^3) \mathds{1} (X > 1.5, X < 2.5)) \\
&+(-4 + 2*X) \mathds{1} (X > 2.5)) \\
&\tilde{\mu}_{a^\star}(M_1 = 1, M_2 = 1, X) = 10 + \zeta(X) + 2X + 0.5X^2 \\ 
&\tilde{\mu}_{a^\star}(M_1 = 0, M_2 = 1, X) = 4 + \zeta(X) \\  
&\tilde{\mu}_{a^\star}(M_1 = 1, M_2 = 0, X) = 8 + \zeta(X) + 2X + 0.5X^2 \\
&\tilde{\mu}_{a^\star}(M_1 = 0, M_2 = 0, X) = \zeta(X) \\
&z_{a^\star,u} = \max_{m_1, m_2, x}\tilde{\mu}_a(M_1 = m_1, M_2 = m_2, X = x) \\
&z_{a^\star,l} = \max_{m_1, m_2, x}\tilde{\mu}_a(M_1 = m_1, M_2 = m_2, X = x)
\end{align*}

Finally, we define the outcome models for $a^\star \in \{0, 1\}$ at any $(m_1, m_2, x)$ as:

\begin{align*}
\mu_{a^\star}(m_1, m_2, x) &= (\tilde{\mu}_{a^\star}(m_1, m_2, x) - z_{a^\star,l} + 10) / (z_{a^\star,u} - z_{a^\star,l} + 20)
\end{align*}
All realizations of the potential mediators and outcomes at the individual-level are drawn Bernoulli with the mean parameter at the corresponding mediator or outcome probability.

\subsection{Inverse weights}

Figure~\ref{fig:sim-inv-weights} displays the maximum possible inverse probability weight associated with the true influence curve for $\psi_{M_1}$ associated at different values of $X = x$. This plot illustrates that $\psi_{M_1}(x)$ is likely easiest to estimate at $X = 0$ and hardest to estimate at $X = 2$.

\begin{figure}[H]
\begin{center}
    \caption{Maximum inverse probability weights for CIIE via $M_1$}\label{fig:sim-inv-weights}
    \includegraphics[scale=0.45]{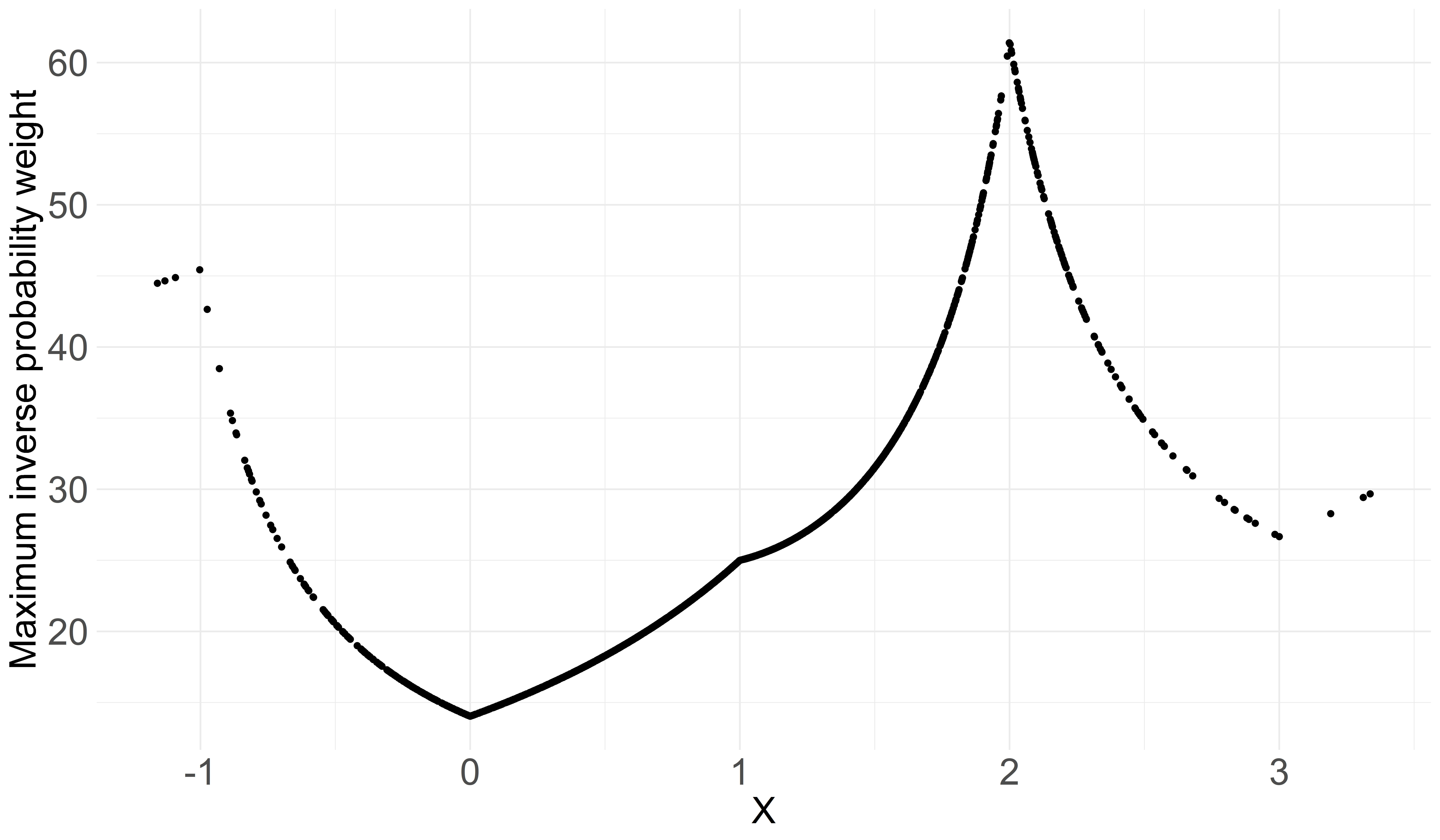}
\end{center}
\end{figure}

\subsection{Proportion mediated}\label{sec:propmed}

The conditional proportion mediated ($\psi_{R}(x)$) is frequently a quantity of interest that relates the CATE ($\psi(x)$) to the CIIE ($\psi_{M_1}(x)$). Specifically, 

\begin{align*}
 \psi_{R}(x)  &= \frac{\psi_{M_1}(x)}{\psi(x)}
\end{align*}

We again consider a DR-learner and projection-based approach to estimate this quantity; however, we consider two approaches. First, where we derive the uncentered influence function for the average quantity $\mathbb{E}[\psi_{R}(x)]$\footnote{This quantity is not the same as the proportion mediated, which is instead the ratio of the expectations.}, which we call $\Lambda(Z; \eta)$. Treating the data as discrete and using standard differentiation rules, it is easy to see that this takes the following form:

\begin{align}
    \nonumber \Lambda(Z; \eta) &= \sum_x \frac{\text{IF}[\psi_{M_1}(x)]\psi(x) - \text{IF}[\psi(x)]\psi_{M_1}(x)]}{\psi(x)^2}p(x) \\
    \nonumber&+ \psi_{R}(x)[\mathds{1}(X = x) - p(x)] \\
    \label{eqn:eifratio}&= \frac{1}{\psi(X)}\left[\frac{\mathds{1}(A = a)}{\pi_a(X)}\frac{\{p(M_1 \mid a, X) - p(M_1 \mid a', X)\}p(M_2 \mid a', X)}{p(M_1, M_2, \mid a, X)}(Y - \mu_a(M_1, M_2, X)) \right.\\
    \nonumber\nonumber&\left.+ \frac{\mathds{1}(A = a)}{\pi_a(X)}\{\mu_{a, M_2^'}(M_1, X) - \mu_{a, M_1\times M_2^'}(X)\} \right.\\ 
    \nonumber\nonumber&\left.- \frac{\mathds{1}(A = a')}{\pi_{a'}(X)}\{\mu_{a, M_2^'}(M_1, X) - \mu_{a, M_1^'\times M_2^'}(X)\} \right. \\
    \nonumber &\left.+ \frac{\mathds{1}(A = a')}{\pi_{a'}(X)}\left(\mu_{a, M_1}(M_2, X) - \mu_{a, M_1\times M_2'}(X) - (\mu_{a, M_1^'}(M_2, X) - \mu_{a, M_1^'\times M_2^'}(X)) \right)\right] \\
    \nonumber&- \frac{\psi_{R}(X)}{\psi(X)}\left[\left(\frac{A}{\pi(X)} - \frac{1 - A}{1 - \pi(X)}\right)(Y - \mu_A(X)) \right] + \psi_{R}(X)
\end{align}

We can then regress an estimate of $\Lambda(Z; \eta)$ onto $X$ using either a non-parametric second-stage model or a projection. Table~\ref{tab:propmedcomp} presents the results of this strategy under the column heading ``EIF Ratio.'' As a second approach, we separately construct estimates of $\psi(x)$ and $\psi_{M_1}(x)$ and take the ratio of these estimates. Table~\ref{tab:propmedcomp} presents the results from this approach in the ``EIF Separate'' column. 

\begin{table}[!h]

\caption{Proportion mediated: comparison of two approaches \label{tab:propmedcomp}}
\centering
\begin{threeparttable}
\begin{tabular}[t]{lrrlrlr}
\toprule
\multicolumn{1}{c}{ } & \multicolumn{1}{c}{ } & \multicolumn{1}{c}{ } & \multicolumn{2}{c}{EIF Ratio} & \multicolumn{2}{c}{EIF Separate} \\
\cmidrule(l{3pt}r{3pt}){4-5} \cmidrule(l{3pt}r{3pt}){6-7}
Estimator & Sample Size & Point & RMSE & Coverage & RMSE & Coverage\\
\midrule
DRLearner & 1000 & 0 & 6e+06 & 0.951 & 475.8 & 0.947\\
DRLearner & 1000 & 2 & 1e+09 & 0.936 & 17.7 & 0.911\\
DRLearner & 2000 & 0 & 6e+07 & 0.958 & 9.3 & 0.958\\
DRLearner & 2000 & 2 & 1e+06 & 0.942 & 2.0 & 0.910\\
Projection-Linear & 1000 & 0 & 2e+07 & 0.965 & 49.7 & 0.956\\
Projection-Linear & 1000 & 2 & 1e+09 & 0.962 & 14.2 & 0.944\\
Projection-Linear & 2000 & 0 & 6e+07 & 0.955 & 8.3 & 0.960\\
Projection-Linear & 2000 & 2 & 9e+05 & 0.964 & 0.6 & 0.944\\
Projection-Quad & 1000 & 0 & 1e+07 & 0.965 & 16.5 & 0.940\\
Projection-Quad & 1000 & 2 & 1e+09 & 0.959 & 8.2 & 0.945\\
Projection-Quad & 2000 & 0 & 6e+07 & 0.947 & 14.0 & 0.939\\
Projection-Quad & 2000 & 2 & 3e+05 & 0.962 & 2.2 & 0.932\\
\bottomrule
\end{tabular}
\begin{tablenotes}
\item EIF Ratio reflects one regression of EIF of the mean ratio onto X; EIF Separate reflects two separate regressions of the EIF of the proportion mediated and the EIF of the total effect onto X
\end{tablenotes}
\end{threeparttable}
\end{table}

Our confidence interval estimates have approximately nominal coverage rates for either strategy; however, the RMSE of the estimator is orders of magnitude higher for the ``ratio'' approach than the ``separate approach.'' This shows that the second approach is more desirable in this setting. 

The form of $\Lambda(Z; \eta)$ expressed in \eqref{eqn:eifratio} reveals why we might expect this estimator to have higher variance more generally. While the influence function for $\psi$ and $\psi_{M_1}$ are functions of inverse weights with respect to the propensity-scores and/or mediator probabilities, the influence function for the average ratio is also a function of $1 / \psi(x)$. Even if this quantity in truth is far away from zero, the estimates may be arbitrarily close to zero. Examining the simulation results confirms this: across 1000 simulations, the median maximum inverse weight for the ratio is approximately 2000 times what it is for the maximum inverse weight for each regression separately, where the difference is driven by estimates of $1 / \psi(x)$. On the other hand, there may be instances where the ratio of the CATE to the CIIE may have less complexity than each function individually; the first-approach can exploit such a case while the second approach does not. Investigating this further would be an interesting avenue for future research.

Variance estimation is also more challenging using the second approach. For the projection estimators we estimate the covariance matrix between the model parameters using the residuals for each model. However, for the non-parametric ``separate'' approach our variance estimates are valid assuming that the errors in each model are positively dependent (which should lead to conservative variance estimates). 

The next table displays the results for the projection estimators of the CIIE, CATE, and conditional proportion mediated using the separate estimation approach.

\newpage

\begin{landscape}
\begin{table}[!h]

\caption{Projection estimators: simulation performance}
\centering
\begin{tabular}[t]{llllllllllllllll}
\toprule
\multicolumn{4}{c}{ } & \multicolumn{4}{c}{CIIE-M1} & \multicolumn{4}{c}{CATE} & \multicolumn{4}{c}{Proportion Mediated} \\
\cmidrule(l{3pt}r{3pt}){5-8} \cmidrule(l{3pt}r{3pt}){9-12} \cmidrule(l{3pt}r{3pt}){13-16}
Point & Sample Size & Strategy & Projection & Truth & Bias & RMSE & Coverage & Truth & Bias & RMSE & Coverage & Truth & Bias & RMSE & Coverage\\
\midrule
0 & 1000 & Plugin & Linear & 0.070 & 5.1e-04 & 0.027 & 3.1 & 0.091 & -0.0094 & 0.053 & 3.9 & 0.77 & 1.042 & 20.89 & 4.5\\
2 & 1000 & Plugin & Linear & 0.112 & -3.7e-02 & 0.045 & 1.0 & 0.129 & 0.0149 & 0.056 & 4.2 & 0.87 & -0.202 & 0.93 & 2.5\\
0 & 1000 & Plugin & Quadratic & 0.072 & -2.5e-03 & 0.027 & 3.1 & 0.092 & -0.0109 & 0.053 & 3.8 & 0.78 & -0.830 & 24.19 & 5.2\\
2 & 1000 & Plugin & Quadratic & 0.113 & -4.0e-02 & 0.047 & 0.8 & 0.131 & 0.0132 & 0.056 & 3.9 & 0.87 & -0.252 & 1.53 & 3.3\\
0 & 1000 & DR & Linear & 0.070 & 1.5e-03 & 0.059 & 95.5 & 0.091 & -0.0025 & 0.056 & 95.0 & 0.77 & -1.585 & 49.67 & 95.6\\
2 & 1000 & DR & Linear & 0.112 & -9.5e-05 & 0.076 & 93.8 & 0.129 & 0.0045 & 0.059 & 93.9 & 0.87 & -0.011 & 14.23 & 94.4\\
0 & 1000 & DR & Quadratic & 0.072 & 3.5e-03 & 0.055 & 95.1 & 0.092 & 0.0019 & 0.065 & 94.0 & 0.78 & -0.623 & 16.51 & 94.0\\
2 & 1000 & DR & Quadratic & 0.113 & 1.6e-03 & 0.096 & 93.5 & 0.131 & 0.0108 & 0.068 & 93.2 & 0.87 & 0.046 & 8.20 & 94.5\\
0 & 2000 & Plugin & Linear & 0.070 & 4.3e-04 & 0.018 & 3.0 & 0.091 & -0.0103 & 0.038 & 3.5 & 0.77 & -0.131 & 11.85 & 3.5\\
2 & 2000 & Plugin & Linear & 0.112 & -3.7e-02 & 0.041 & 0.4 & 0.129 & 0.0128 & 0.039 & 2.7 & 0.87 & -0.293 & 0.39 & 0.9\\
0 & 2000 & Plugin & Quadratic & 0.072 & -2.3e-03 & 0.019 & 3.0 & 0.092 & -0.0119 & 0.038 & 3.9 & 0.78 & -3.001 & 91.57 & 3.7\\
2 & 2000 & Plugin & Quadratic & 0.113 & -4.0e-02 & 0.043 & 0.2 & 0.131 & 0.0112 & 0.039 & 2.2 & 0.87 & -0.302 & 0.40 & 1.3\\
0 & 2000 & DR & Linear & 0.070 & 6.6e-04 & 0.040 & 95.6 & 0.091 & -0.0033 & 0.042 & 94.3 & 0.77 & 0.131 & 8.29 & 96.0\\
2 & 2000 & DR & Linear & 0.112 & 8.5e-06 & 0.047 & 94.7 & 0.129 & 0.0079 & 0.040 & 94.7 & 0.87 & 0.014 & 0.65 & 94.4\\
0 & 2000 & DR & Quadratic & 0.072 & 2.8e-03 & 0.037 & 95.2 & 0.092 & 0.0027 & 0.046 & 94.1 & 0.78 & -0.198 & 13.96 & 93.9\\
2 & 2000 & DR & Quadratic & 0.113 & 2.0e-03 & 0.058 & 94.6 & 0.131 & 0.0144 & 0.046 & 93.9 & 0.87 & -0.028 & 2.24 & 93.2\\
\bottomrule
\end{tabular}
\end{table}

\end{landscape}

\subsection{Selection mechanism}\label{sec:selection}

We detail the selection mechanism that we use in Figure~\ref{fig:sim-bounds}. Specifically, consider the function:

\begin{align*}
    \tau^\star(X; \sigma) &= \sigma \left[\mathds{1}(X < -1)*0.01 + \mathds{1}(-1 < X < 0) * 0.02 \right. \\
    &\left.+ \mathds{1}(0 < X < 1) * 0.03 + \mathds{1}(1 < X < 2) * 0.02 \right.\\
    &\left.+ \mathds{1}(2 < X < 3) * 0.01 + \mathds{1}(X > 3) * 0.03\right]
\end{align*}
Given the function $\mathbb{E}[Y^{am_1m_2} \mid a, X]$, we generate the observed functions:

\begin{align}\label{eqn:observedfun}
    \mathbb{E}[Y \mid a, m_1, m_2, X] & = \mu_a(m_1, m_2, X) \\
    \nonumber&= \frac{\mathbb{E}[Y^{am_1m_2} \mid a, X]}{1 - \tau^\star(X)(1 - p(m_1, m_2 \mid a, X))}
    \mu_l(m_1, m_2, X)\mathds{1}(X \ge 1) \\
    \nonumber&+ \frac{\mathbb{E}[Y^{am_1m_2} \mid a, X] - \tau^\star(X)(1 - p(m_1, m_2 \mid a, X))}{1 - \tau^\star(X)(1 - p(m_1, m_2 \mid a, X))}\mathds{1}(X < 1)
\end{align}
for each value of $(m_1, m_2)$. The form of this function follows from assuming Y-A and M-A ignorability and a version of the model in \eqref{eqn:a2} where we replace the inequalities with equalities; i.e. we know the selection mechanism. Specifically, we set:

\begin{align}\label{eqn:selectionknown}
    \frac{\mathbb{E}[Y^{m_1m_2} \mid a, x, M_1 \ne m_1, M_2 \ne m_2]}{\mathbb{E}[Y^{m_1m_2} \mid a, x, m_1, m_2]} = (1 - \tau^\star(x)) \text{ when } x < 1 \\
    \nonumber\frac{\mathbb{E}[1 - Y^{m_1m_2} \mid a, x, M_1 \ne m_1, M_2 \ne m_2]}{\mathbb{E}[1- Y^{m_1m_2} \mid a, x, m_1, m_2]} = (1 - \tau^\star(x)) \text{ when } x \ge 1
\end{align}
Via \eqref{eqn:iie}, we can then plug \eqref{eqn:selectionknown} into (i) and then solve for $\mu_a(m_1, m_2, X)$ to arrive at \eqref{eqn:observedfun}. We then generate the functions: 

\begin{align*}
&\mathbb{E}_l[Y^{am_1m_2} \mid a, x, M_1 \ne m_1, M_2 \ne m_2] = \mu^\star_l(x, m_1, m_2) \\
&\mathbb{E}_u[Y^{am_1m_2} \mid a, x, M_1 \ne m_1, M_2 \ne m_2] = \mu^\star_u(x, m_1, m_2) 
\end{align*}
where for all $(x, m_1, m_2)$:

\begin{align*}
    [1 - \mu^\star_u(x, m_1, m_2; \sigma)] &= [1 - \mu_a(x, m_1, m_2)](1 - \tau^\star(x; \sigma)) \\
    \mu^\star_l(x, m_1, m_2; \sigma) &= \mu_a(x, m_1, m_2)(1 - \tau^\star(x; \sigma)) 
\end{align*}
and set 

\begin{align*}
    &\mathbb{E}[Y^{am_1m_2} \mid a, M_1 \ne m_1, M_2 \ne m_2, x] \\
    &= \mu^\star(x, m_1, m_2; \sigma) \\
    &= \mu_l^\star(x, m_1, m_2; \sigma)\mathds{1}(x < 1) + \mu_u^\star(x, m_1, m_2; \sigma)\mathds{1}(x \ge 1)
\end{align*}
for each value of $(x, m_1, m_2)$.

Figure~\ref{fig:sim-selectionbias} illustrates the true counterfactual outcome model $\mathbb{E}[Y^{am_1m_2} \mid a, X]$ in blue and the resulting biased target of inference $\mu_a(m_1, m_2, X)$ in red for each value of $(m_1, m_2)$ (setting $\sigma = 10/3$).

\begin{figure}[H]
\begin{center}
    \caption{Selection bias: observed data model versus targeted counterfactual outcome model}\label{fig:sim-selectionbias}
    \includegraphics[scale=0.45]{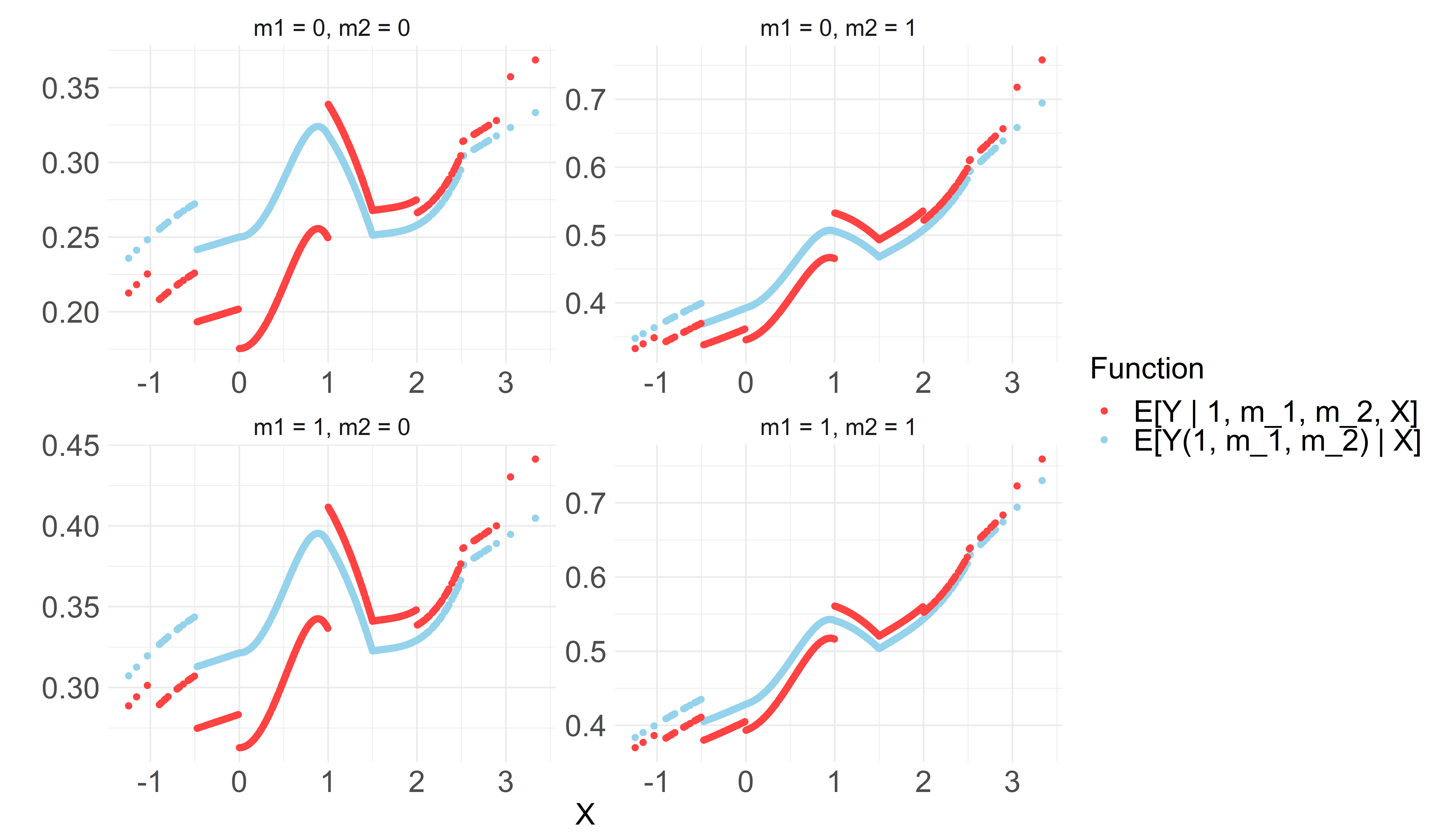}
\end{center}
\end{figure}

Figure~\ref{fig:sim-selectionbias2} instead illustrates $\mu_a(m_1, m_2, X)$ in red against the unobserved function $\mathbb{E}[Y^{m_1m_2} \mid A = a, M_1 \ne m_1, M_2 \ne m_2, X]$.

\begin{figure}[H]
\begin{center}
    \caption{Selection bias conditional on observed covariates}\label{fig:sim-selectionbias2}
    \includegraphics[scale=0.45]{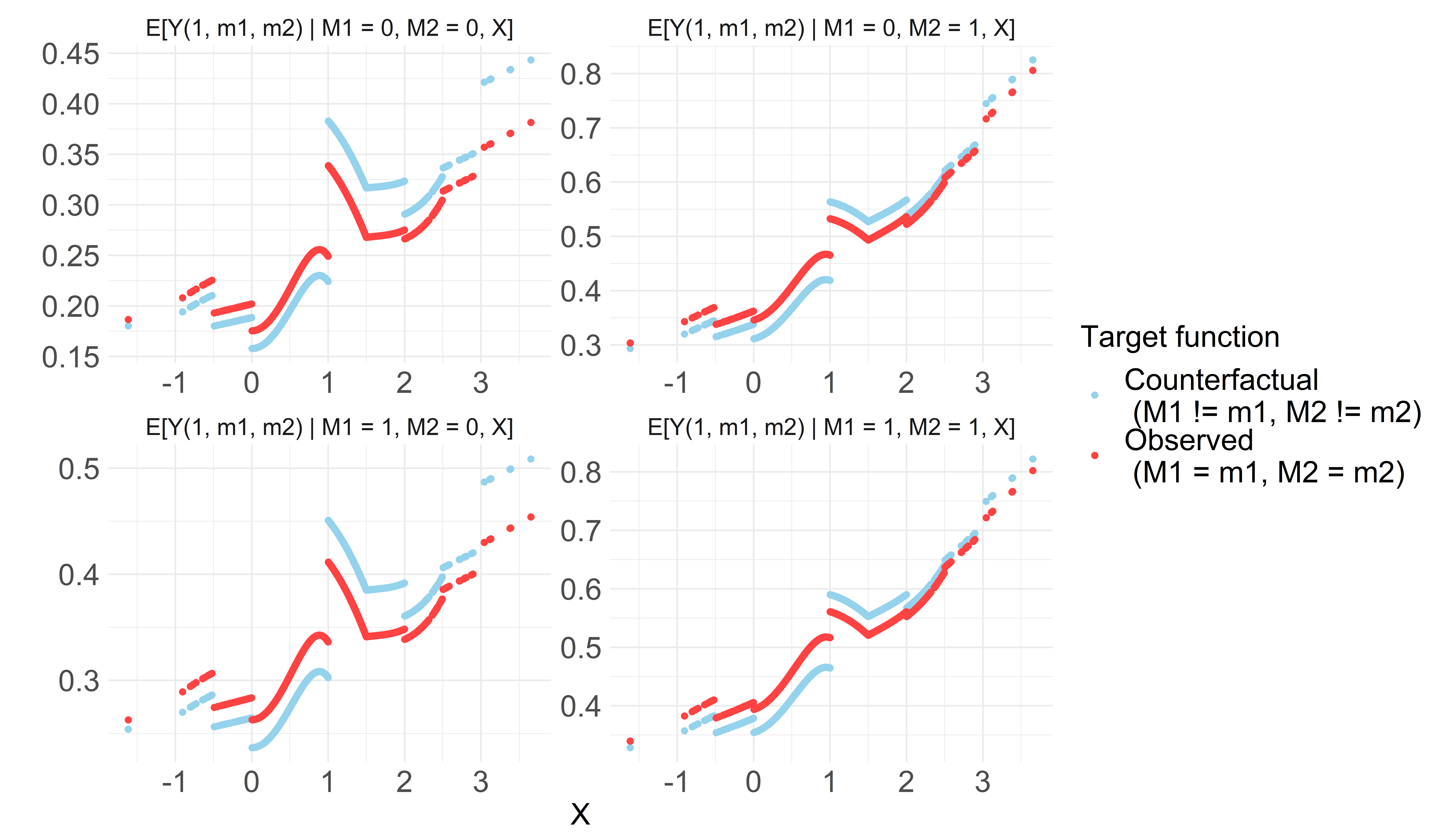}
\end{center}
\end{figure}

\newpage

\newpage

\section{Other application results}\label{app:application}

\subsection{Conditional effect estimates}

Figure~\ref{fig:application1} displays the results from the DR-Learner and Projection estimators applied to the average effect and all elements of its decomposition across the entire domain of the percent of Biden's vote share ($V$). We see that the total effect estimates are larger in Trump counties relative to Biden counties, and the direct effect estimates get closer to zero in Biden counties. The projection estimator suggests that the effects via social isolation are larger in Biden counties relative to Trump counties, though the DR-Learner suggests the presence of non-linearities that might not be correctly captured by this projection. The effects via worries about health are close to zero throughout the domain of $V$. Overall the largest heterogeneity appears to be with respect to the total effect and the direct effect, where both the total effect and proportion mediated via the direct effect is larger in absolute magnitude among Trump counties relative to Biden counties.

\begin{figure}[H]
\begin{center}
    \caption{Application results}\label{fig:application1}
    \includegraphics[scale=0.45]{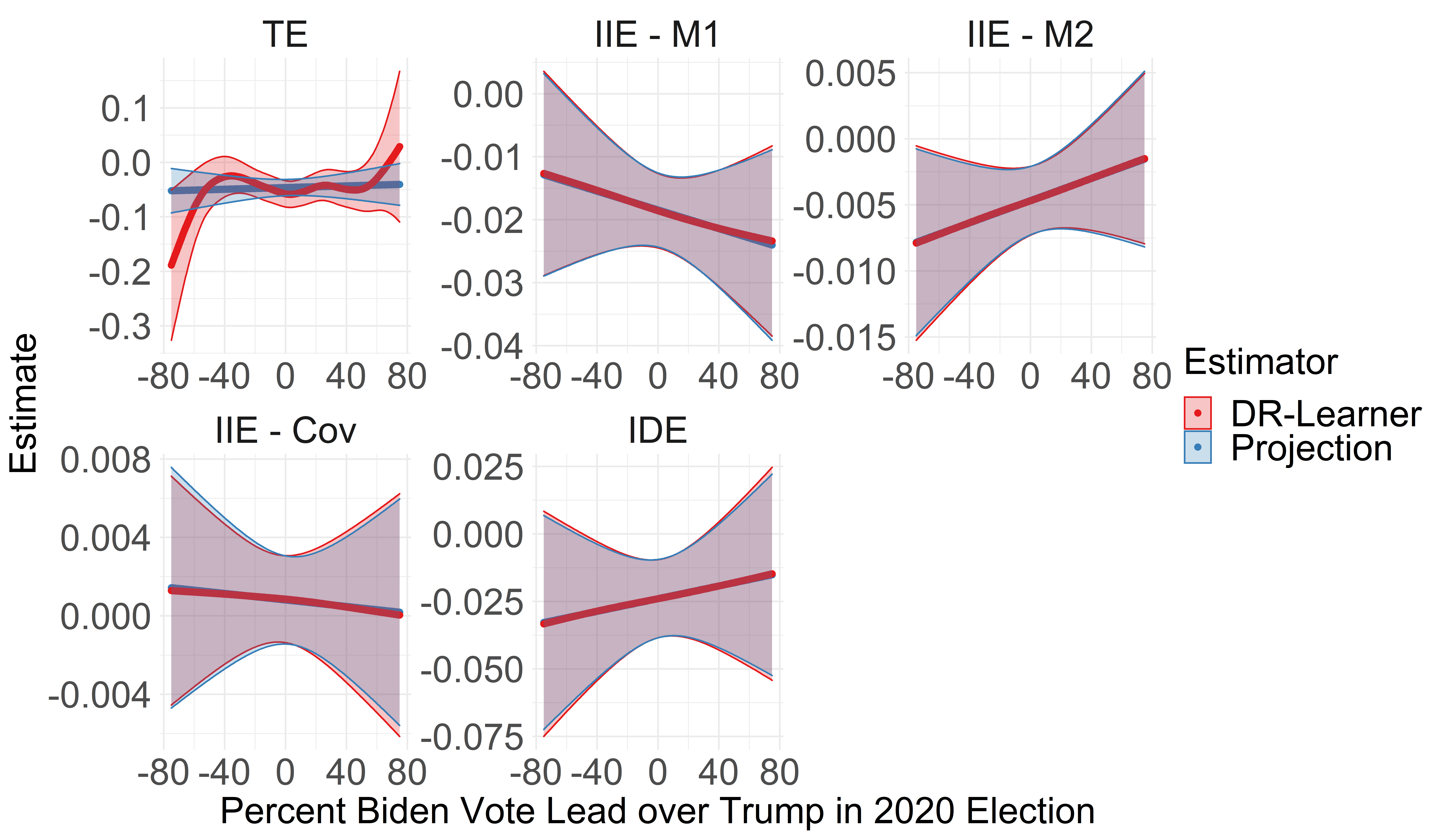}
\end{center}
\end{figure}

\subsection{Bounds on decomposition of total effect}

Figure~\ref{fig:application-bound} displays the results of the bounds on the average effects. The left-hand panel displays the results where we assume that total interventional indirect effect is biased upwards and the right-hand panel displays the results where we assume that this same effect is biased downwards. These results convey an interesting paradox: while the $\tau$ that can explain away $\psi_{M_1}$ is quite low, at the same time we would have to believe in the existence of a covariant effect that is of comparable magnitude to the average effect. Since we tend to think that the covariant effects are likely small in most applications, this implication makes such a $\tau$ seem unlikely. On the other hand, we also see that it is quite hard to entirely explain away the total indirect effect, and such a $\tau$ would also imply the existence of a positive covariant effect that is many times greater than the total effect estimate. 

\begin{figure}[H]
\begin{center}
    \caption{Application results}\label{fig:application-bound}
    \includegraphics[scale=0.45]{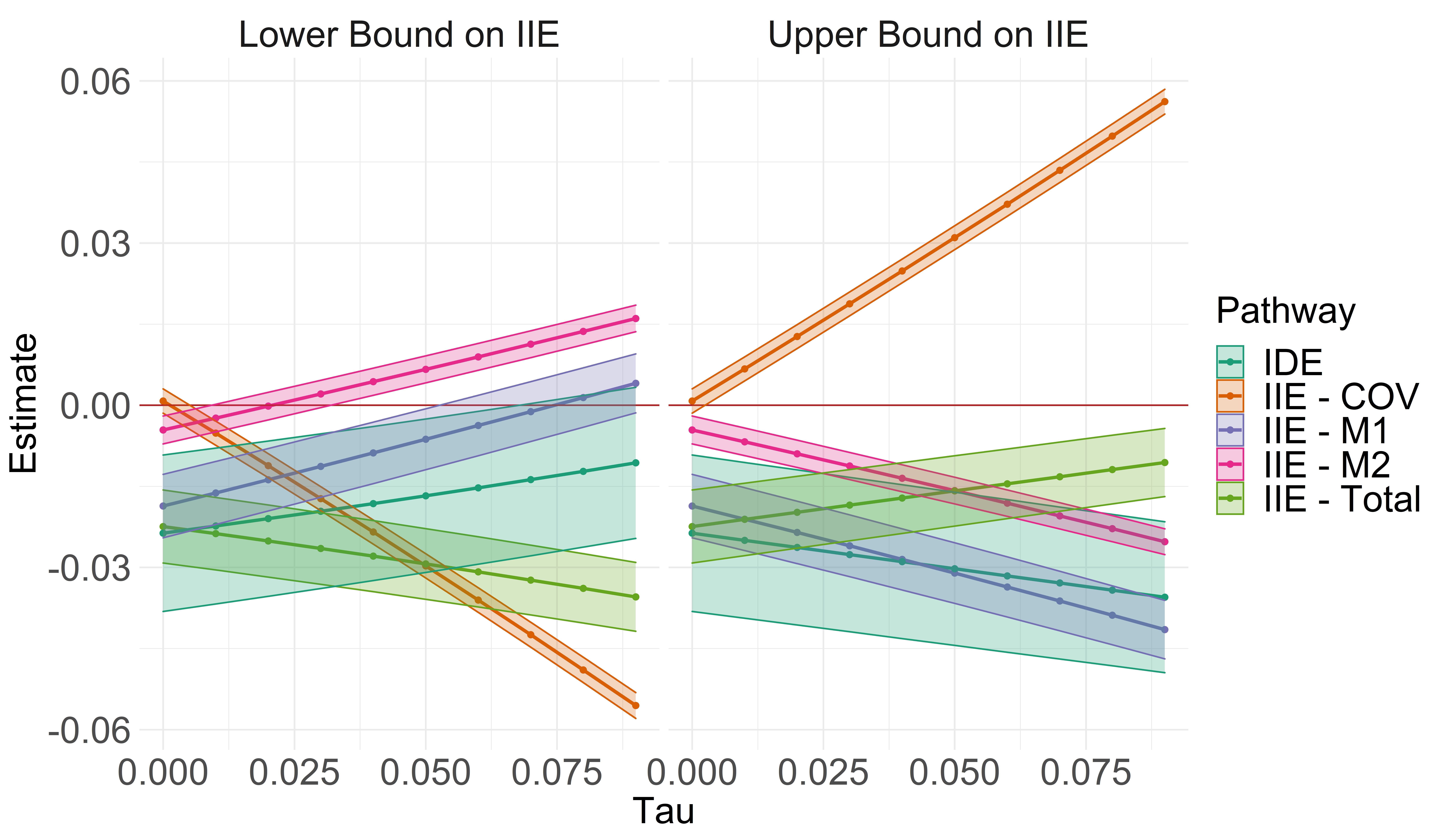}
\end{center}
\end{figure}

In short, these results suggest that if we assume away Y-A or M-A confounding, a very most amount of Y-M confounding would explain away our effect estimates. On the other hand, the implications of such a $\tau$ lead to improbable combinations of effect sizes. In summary, we interpret these results as suggesting that it is probably further worth examining how Y-M confounding might affect our results in the presence of Y-A or M-A confounding. However, such a sensitivity analysis is beyond the scope of this paper.

The results are quite similar for the bounds on the conditional effects, whether using the projection estimator or the DR-Learner, though the uncertainty estimates are quite wide. These results are available on request.

\newpage

\section{Bound extensions}\label{app:bounds}

We present results for the bounds on all of the parameters. We begin by defining the bounds that we consider. Letting $\psi_{IIE} = \psi_{M_1} + \psi_{M_2} + \psi_{Cov}$, we can decompose the total effect $\psi$:

\begin{align}
    \label{eqn:decombound1}\psi &= \psi_{IIE, ub} + \psi_{IDE, lb} \\
    \label{eqn:decombound2}&= \psi_{Cov, ub} + \psi_{M_1, lb} + \psi_{M_2, lb} + \psi_{IDE, lb}
\end{align}
Equation~\ref{eqn:decombound1} follows because since $\psi$ is identified in the data; even though $\psi_{IIE}$ and $\psi_{IDE}$ are not, since $\psi$ must equal the sum of the indirect effect plus the direct effect, we know that:

\begin{align}
    \label{eqn:decompboundident1a}\psi &= \psi_{IIE, lb} + \psi_{IDE, ub} \\
    \label{eqn:decompboundident1b}&= \psi_{IIE, ub} + \psi_{IDE, lb}
\end{align}
Furthermore because $\psi_{Cov} = \psi_{IIE} - \psi_{M_1} - \psi_{M_2}$, we obtain that:

\begin{align}
    \label{eqn:decompboundident1}\psi_{Cov, ub} &= \psi_{IIE, ub} - (\psi_{M_2, lb} + \psi_{M_1, lb}) \\
    \label{eqn:decompboundident2}\psi_{Cov, lb} &= \psi_{IIE, lb} - (\psi_{M_2, ub} + \psi_{M_1, ub}) \\ 
    \implies \label{eqn:decompboundident1}\psi_{IIE, ub} &= \psi_{Cov, ub} + \psi_{M_1, lb} + \psi_{M_2, lb} \\
    \implies \label{eqn:decompboundident2}\psi_{IIE, lb} &= \psi_{Cov, lb} + \psi_{M_1, ub} + \psi_{M_2, ub} 
\end{align}
We obtain the decomposition in (\ref{eqn:decombound3})-(\ref{eqn:decombound4}) similarly. 

\begin{align}
    \label{eqn:decombound3}\psi &= \psi_{IIE, lb} + \psi_{IDE, ub} \\
    \label{eqn:decombound4}&= \psi_{Cov, lb} + \psi_{M_1, ub} + \psi_{M_2, ub} + \psi_{IDE, ub}
\end{align}

This decomposition assumes that the relevant contrasts decompose the total effect $\psi$. However, this may not be the relevant comparison depending on the application: letting $G_m(a \mid x) \sim p(M_1, M_2 \mid a, X)$  $\mathbb{E}[Y(a,G_m(a \mid x))] \ne \mathbb{E}[Y(a)]$ in the presence of post-treatment confounding. If we instead wish to decompose the ``overall effect'' $\mathbb{E}[Y(1,G_m(1 \mid x))] - \mathbb{E}[Y(0, G_m(0 \mid x))]$, then our point estimates of the average effect are not unbiased for this quantity. While it would be straightforward to extend our sensitivity analysis for this setting, equalities (\ref{eqn:decombound1})-(\ref{eqn:decombound4}) would no longer hold. Since each parameter can be written as the difference in two components, valid upper bounds for each parameter could be derived via subtracting the lower bound from the second parameter from the upper bound of the first, and vice versa for the lower bounds.

\subsection{Expression of bounds}

This section provides bounds on $\psi_{M_2}$, $\psi_{IIE}$, $\psi_{Cov}$, and $\psi_{IDE}$ using the decompositions provided in \eqref{eqn:decombound2} and \eqref{eqn:decombound4}. First, let

\begin{align*}
    &\bar{\psi}_{M_2} = \bar{\psi}_{M_2, a} - \bar{\psi}_{M_2, a'} = \mathbb{E}\left\{\sum_{m_1, m_2}\mu_a(m_1, m_2, X)[p(m_2 \mid a, X) - p(m_2 \mid a', X)]p(m_1 \mid a, X)\right\} \\
    &\bar{\psi}_{IIE, a'} = \mathbb{E}\left\{\sum_{m_1, m_2}\mu_a(X, m_1, m_2)p(m_1, m_2 \mid a', X)\right\} \\
    &\psi_a = \mathbb{E}\left\{\mathbb{E}[Y \mid a, X]\right\} 
\end{align*}
We can then obtain that:

\begin{align*}
    \psi_{M_2, ub} &= \sum_x[\bar{\psi}_{M_2}(x) + \bar{\psi}_{M_2, a}(x)f_u(\tau)c_u - \bar{\psi}_{M_2, a'}(x)f_l(\tau)c_l + t_uf_u(\tau) - t_lf_l(\tau) \\
    &- f_u(\tau)\sum_{m_1,m_2}[c_u\mu_a(m_1,m_2,x) + t_u]p(m_1, m_2 \mid a, x)p(m_1 \mid a, x)p(m_2 \mid a, x) \\
    &+ f_l(\tau)\sum_{m_1,m_2}[c_l\mu_a(m_1,m_2,x) + t_l]p(m_1, m_2 \mid a, x)p(m_1 \mid a, x)p(m_2 \mid a', x)]p(x) \\
    \psi_{M_2, lb} &= \sum_x[\bar{\psi}_{M_2}(x) + \bar{\psi}_{M_2, a}(x)f_l(\tau)c_l - \bar{\psi}_{M_2, a'}(x)f_u(\tau)c_u + t_lf_l(\tau) - t_uf_u(\tau) \\
    &- f_l(\tau)\sum_{m_1,m_2}[c_l\mu_a(m_1,m_2,x) + t_l]p(m_1, m_2 \mid a, x)p(m_1 \mid a, x)p(m_2 \mid a, x) \\
    &+ f_u(\tau)\sum_{m_1,m_2}[c_u\mu_a(m_1,m_2,x) + t_u]p(m_1, m_2 \mid a, x)p(m_1 \mid a, x)p(m_2 \mid a', x)]p(x) \\
    \psi_{IIE, ub} &= \psi^a - f_u(\tau)c_u\bar{\psi}_{IIE, a'} - t_uf_u(\tau) \\
    &+ f_u(\tau)\sum_{x, m_1, m_2}[c_u\mu_a(m_1, m_2, x) + t_u]p(m_1, m_2 \mid a, x)p(m_1, m_2 \mid a', x)p(x)  \\
    \psi_{IIE, lb} &= \psi^a - f_l(\tau)c_l\bar{\psi}_{IIE, a'} - t_lf_l(\tau) \\
    &+ f_l(\tau)\sum_{x, m_1, m_2}[c_l\mu_a(m_1, m_2, x) + t_l]p(m_1, m_2 \mid a, x)p(m_1, m_2 \mid a', x)p(x)  
\end{align*}

Given these bounds and the bounds on $\psi_{M_1}$, the remaining bounds on $\psi_{IDE}$ and $\psi_{Cov}$ follow directly via equations (\ref{eqn:decompboundident1a})-(\ref{eqn:decompboundident2}).

\subsection{Influence function of bounds}

We provide expression for the (uncentered) influence functions of the $\Gamma$ terms above noting that the remaining influence functions have been previously defined and that $\text{IF}(\Gamma_a - \Gamma_b) = \text{IF}(\Gamma_a) - \text{IF}(\Gamma_b)$. We omit the proofs for brevity but note that the derivations are analogous to those provided in Section~\ref{app:proofs}.

First, we define the terms:

\begin{align*}
    \Gamma_{1a, M_2} &= \sum_{x,m_1,m_2}\mu_a(m_1,m_2,x)p(m_1,m_2\mid a,x)p(m_2\mid a,x)p(m_1 \mid a, x)p(x) \\
    \Gamma_{1a', M_2} &= \sum_{x,m_1,m_2}\mu_a(m_1,m_2,x)p(m_1,m_2\mid a,x)p(m_2\mid a',x)p(m_1 \mid a, x)p(x) \\
    \Gamma_{2a, M_2} &= \sum_{x,m_1,m_2}p(m_1,m_2\mid a,x)p(m_2\mid a,x)p(m_1 \mid a, x)p(x) \\
    \Gamma_{2a', M_2} &= \sum_{x,m_1,m_2}p(m_1,m_2\mid a,x)p(m_2\mid a',x)p(m_1 \mid a, x)p(x) \\
    \Gamma_{1, IIE} &= \sum_{x,m_1,m_2}\mu_a(m_1,m_2,x)p(m_1,m_2\mid a,x)p(m_1,m_2\mid a',x)p(x) \\
    \Gamma_{2, IIE} &= \sum_{x,m_1,m_2}p(m_1,m_2\mid a,x)p(m_1,m_2\mid a',x)p(x)
\end{align*}
Note that all of the bounds above can be expressed as a linear combination of these terms, $\bar{\psi}_{M_2, a}$, $\bar{\psi}_{M_2, a'}$, $\bar{\psi}_{IIE, a'}$, with various terms scaled by $f_l(\tau)$ and $f_u(\tau)$ and the constants $(c_l, c_u, t_l, t_u)$. We therefore only require expressions for the influence functions for these terms above.

\begin{align*}
    &\text{IF}[\Gamma_{1a, M2}]  = \frac{\mathds{1}(A = a)}{\pi_a(X)}\left\{Yp(M_1 \mid a, X) p(M_2  \mid a, X)p(M_1, M_2 \mid a, X) - \zeta_{1a, M_2}(X) \right\} \\
    &+\frac{\mathds{1}(A = a)}{\pi_a(X)}\left\{\sum_{m_2}\mu_a(M_1, m_2, X)p(m_2 \mid a, x)p(M_1, M_2 \mid a, X) - \zeta_{1a, M_2}(X)\right\} \\
    &+\frac{\mathds{1}(A = a)}{\pi_a(X)}\left\{\sum_{m_1}\mu_a(m_1, M_2, X)p(m_1 \mid a, x)p(m_1, M_2 \mid a, X) - \zeta_{1a, M_2}(X)\right\} \\ 
    &+ \zeta_{1a, M_2}(X) \\
    &\text{IF}[\Gamma_{1a', M2}]  = \frac{\mathds{1}(A = a)}{\pi_a(X)}\left\{Yp(M_1 \mid a, X) p(M_2  \mid a', X)p(M_1, M_2 \mid a, X) - \zeta_{1a', M_2}(X) \right\} \\
    &+\frac{\mathds{1}(A = a)}{\pi_a(X)}\left\{\sum_{m_2}\mu_a(M_1, m_2, X)p(m_2 \mid a', x)p(M_1, M_2 \mid a, X) - \zeta_{1a', M_2}(X)\right\} \\
    &+ \frac{\mathds{1}(A = a')}{\pi_{a'}(X)}\left\{\sum_{m_1}\mu_a(m_1, M_2, X)p(m_1 \mid a, x)p(m_1, M_2 \mid a, X) - \zeta_{1a',M_2}(X)\right\} \\
    &+ \zeta_{1a', M_2}(X) \\
    &\text{IF}[\Gamma_{2a, M2}]  = \frac{\mathds{1}(A = a)}{\pi_a(X)}\left\{p(M_1 \mid a, X)p(M_2 \mid a, X) - \zeta_{2a, M_2}(X)\right\} \\
    &+ \frac{\mathds{1}(A = a)}{\pi_a(X)}\left\{\sum_{m_1}p(m_1, M_2 \mid a, X)p(m_1 \mid a, X) - \zeta_{2a, M_2}(X)\right\} \\ 
    &+ \frac{\mathds{1}(A = a)}{\pi_a(X)}\left\{\sum_{m_2}[p(M_1, m_2 \mid a, X)p(m_2 \mid a, X) - \zeta_{2a, M_2}(x)\right\} + \zeta_{2a, M_2}(X) \\
    &\text{IF}[\Gamma_{2a', M2}]  = \frac{\mathds{1}(A = a)}{\pi_a(X)}\left\{p(M_1 \mid a, X)p(M_2 \mid a', X) - \zeta_{2a', M_2}(X)\right\} \\
    &+ \frac{\mathds{1}(A = a')}{\pi_{a'}(X)}\left\{\sum_{m_1}[p(m_1, M_2 \mid a, X)p(m_1 \mid a, X) - \zeta_{2a', M_2}(X)\right\} \\
    &+ \frac{\mathds{1}(A = a)}{\pi_a(X)}\left\{\sum_{m_2}p(M_1, m_2 \mid a, X)p(m_2 \mid a', X) - \zeta_{2a', M_2}(X)\right\} + \zeta_{2a', M_2}(X) \\
    &\text{IF}[\Gamma_{1, IIE}] = \frac{\mathds{1}(A = a)}{\pi_a(X)}\left\{Yp(M_1, M_2 \mid a', X) - \zeta_{1, IIE}(X)\right\} \\
    &+ \frac{\mathds{1}(A = a')}{\pi_{a'}(X)}\left\{\mu_a{M_1, M_2, X}p(M_1, M_2 \mid a, X) - \zeta_{1, IIE}(X)\right\} \\
    &+ \zeta_{1, IIE}(X) \\
    &\text{IF}[\Gamma_{2, IIE}] = \frac{\mathds{1}(A = a)}{\pi_a(X)}\left\{p(M_1, M_2 \mid a', X) - \zeta_{2, IIE}(X)\right\} \\
    &+\frac{\mathds{1}(A = a')}{\pi_{a'}(X)}\left\{p(M_1, M_2 \mid a, X) - \zeta_{3, IIE}(X)\right\} + \zeta_{2, IIE}(X)
\end{align*}
where

\begin{align*}
    &\zeta_{1a, M_2}(X) = \sum_{m_1, m_2}\mu_a(m_1, m_2, X)p(m_1,m_2 \mid a, X)p(m_1 \mid a, X)p(m_2 \mid a, X) \\
    &\zeta_{1a', M_2}(X) = \sum_{m_1, m_2}\mu_a(m_1, m_2, X)p(m_1,m_2 \mid a, X)p(m_1 \mid a, X)p(m_2 \mid a', X) \\
    &\zeta_{2a, M_2}(X) = \sum_{m_1, m_2}p(m_1 \mid a, X)p(m_2 \mid a, X)p(m_1, m_2 \mid a, X) \\
    &\zeta_{2a', M_2}(X) = \sum_{m_1, m_2}p(m_1 \mid a, X)p(m_2 \mid a', X)p(m_1, m_2 \mid a, X) \\
    &\zeta_{1, IIE}(X) = \sum_{m_1, m_2}\mu_a(m_1, m_2, X)p(m_1, m_2 \mid a, X)p(m_1, m_2 \mid a', X) \\
    &\zeta_{2, IIE}(X) = \sum_{m_1, m_2}p(m_1, m_2 \mid a, X)p(m_1, m_2 \mid a', X)
\end{align*}

\newpage

\section{Second-order term derivations}\label{app:soterms}

This section contains the algebra that shows the results in Appendix~\ref{app:proofs} for the second-order errors for the influence-function based estimators of the average effects and bounds on the average effects.

\subsection{Estimating $\psi_{M_1}$}

We show that $P[\hat{\varphi}(Z) - \varphi(Z)]$ can be decomposed as a product of the nuisance estimation. To simplify the derivation we only show this for the term:

\begin{align*}
    \psi_{M_1, a} = \mathbb{E}\left[\sum_{m_1, m_2}\mu_a(m_1, m_2, X)p(m_1 \mid a, X)p(m_2 \mid a', X)\right]
\end{align*}
noting that the derivation for the term $\psi_{M_1, a'}$ is virtually identical. $\psi_{M_1, a}$ has the (uncentered) influence curve $\varphi_a(Z; \eta)$:

\begin{align}\label{eqn:eifm1a}
    \varphi_a(Z; \eta) &= \frac{\mathds{1}(A = a)}{\pi_a(X)}\frac{p(M_1 \mid a, X)p(M_2 \mid a', X)}{p(M_1, M_2, \mid a, X)}(Y - \mu_a(M_1, M_2, X)) \\
    \nonumber &+ \frac{\mathds{1}(A = a)}{\pi_a(X)}\{\mu_{a, M_2^'}(M_1, X) - \mu_{a, M_1\times M_2^'}(X)\} \\ 
    \nonumber &+ \frac{\mathds{1}(A = a')}{\pi_{a'}(X)}\left(\mu_{a, M_1}(M_2, X) - \mu_{a, M_1\times M_2'}(X)\right) \\
    \nonumber &+ \mu_{a, M_1\times M_2^'}(X)
\end{align}

\begin{align}
    \nonumber&P[\varphi_a(Z; \hat{\eta}) - \varphi_{a}(Z; \eta)] \\
    \label{eqn:d1}&= \mathbb{E}\left[\frac{\pi_a(X)}{\hat{\pi}_a(X)}\sum_{m_1, m_2}\frac{\{\mu_a(m_1, m_2, X) - \hat{\mu}_a(m_1, m_2, X)\}p(m_1, m_2 \mid a, X)\hat{p}(m_1 \mid a, X)\hat{p}(m_2 \mid a, X)}{\hat{p}(m_1, m_2 \mid a, X)}\right] \\
    \label{eqn:d2}&+ \mathbb{E}\left[\frac{\pi_a(X)}{\hat{\pi}_a(X)}\left(\sum_{m_1, m_2}\hat{\mu}_a(m_2, m_1, X)\hat{p}(m_2 \mid a', X)(p(m_1 \mid a, X) - \hat{p}(m_1 \mid a, X))\right)\right] \\
    \label{eqn:d3}&+ \mathbb{E}\left[\frac{\pi_{a'}(X)}{\hat{\pi}_{a'}(X)}\left(\sum_{m_1, m_2}\hat{\mu}_a(m_1, m_2, X)\hat{p}(m_1 \mid a, x)(p(m_2 \mid a', x) - \hat{p}(m_2 \mid a', X)) \right)\right] \\
    \label{eqn:d4}&+ \mathbb{E}\left[\left(\underbrace{\hat{\mu}_{a, M_1\times M_2^'}(X)}_{(I)} - \mu_{a, M_1\times M_2^'}(X)\right)\right] 
\end{align}
where equations (\ref{eqn:d1})-(\ref{eqn:d4}) follow via iterating expectations, and where we leave the conditioning on the training data $D_1^n$ implicit. To ease notation, let $p_1 = p(m_1 \mid a, x)$, $p_2' = p(m_2 \mid a', x)$, $\mu_a = \mu_a(m_1, m_2, x)$, and $\pi_a = \pi_a(X)$. We also remove outer expectation for clarity, and note that the final expression takes the expectation of these remainder terms. First, consider \eqref{eqn:d1}:

\begin{align}
    \label{eqn:d1a}(\ref{eqn:d1}) &= \underbrace{\frac{\pi_a}{\hat{\pi}_a}\sum_{m_1, m_2}\frac{(\mu_a - \hat{\mu}_a)(p_{12} - \hat{p}_{12})\hat{p}_1\hat{p}_2'}{\hat{p}_{12}}}_{SO_1} + \frac{\pi_a}{\hat{\pi}_a}\sum_{m_1, m_2}(\mu_a - \hat{\mu}_a)\hat{p}_1\hat{p}_2' \\
    \label{eqn:d1b}&= SO_1 + \underbrace{\left(\frac{\pi_a - \hat{\pi}_a}{\hat{\pi}_a}\right)\sum_{m_1, m_2}\hat{p}_1\hat{p}_2'(\mu_a - \hat{\mu}_a)}_{SO_2} + \sum_{m_1, m_2}(\mu_a - \underbrace{\hat{\mu}_a)\hat{p}_1\hat{p}_2'}_{-(I)}
\end{align}

Equation (\ref{eqn:d1b}) expresses \eqref{eqn:d1} in terms of two second-order terms $SO_1$ and $SO_2$ and a remainder term. Moving forward we abbreviate all second-order terms as SO. We then consider the remaining terms in (\ref{eqn:d1})-(\ref{eqn:d4}), noting that term -(I) in \eqref{eqn:d1b} will cancel out with the corresponding term (I) in \eqref{eqn:d4}, to obtain:

\begin{align}
    \label{eqn:d5}(\ref{eqn:d1})-(\ref{eqn:d4}) &= SO + \underbrace{\left(\frac{\pi_a - \hat{\pi}_a}{\hat{\pi}_a}\right)\sum_{m_1, m_2}\hat{\mu}_a\hat{p}_2'(p_1 - \hat{p}_1)}_{SO_3} + \underbrace{\left(\frac{\pi_{a'} - \hat{\pi}_{a'}}{\hat{\pi}_{a'}}\right)\sum_{m_1, m_2}\hat{\mu}_a\hat{p}_1(p_2' - \hat{p}_2')}_{SO_4} \\
    \label{eqn:d6}&+ \sum_{m_1, m_2}\mu_a\hat{p}_1\hat{p}_2' - \sum_{m_1, m_2}\mu_ap_1p_2' + \sum_{m_1, m_2}\hat{\mu}_a\hat{p}_2'(p_1 - \hat{p}_1) + \sum_{m_1, m_2}\hat{\mu}_a\hat{p}_1(p_2' - \hat{p}_2') 
\end{align}
where \eqref{eqn:d6} includes the remainder from before \eqref{eqn:d1b}, the $-\psi_{M_1, a}(x)$ term from \eqref{eqn:d1b}, and additional remainder terms from \eqref{eqn:d5} obtained via adding and subtracting the same quantity to obtain second-order terms $SO_3$ and $SO_4$. We conclude by decomposing \eqref{eqn:d6}, the last part of the expression that has not yet been shown to be second-order:

\begin{align*}
    (\ref{eqn:d6}) &= \sum_{m_1, m_2}\mu_a(\hat{p}_1\hat{p}_2' - p_1p_2') - \underbrace{\sum_{m_1, m_2}(\mu_a - \hat{\mu}_a)\hat{p}_2'(p_1 - \hat{p}_1)}_{SO_5} + \sum_{m_1, m_2} \mu_a\hat{p}_2'(p_1 - \hat{p}_1) \\
    &+ \sum_{m_1, m_2} \hat{\mu}_a\hat{p}_1(p_2' - \hat{p}_2') \\
    &= SO - \sum_{m_1, m_2} \mu_a p_1 (p_2'  - \hat{p}_2') + \sum_{m_1, m_2} \hat{\mu}_a\hat{p}_1(p_2' - \hat{p}_2') \\
    &= SO - \sum_{m_1, m_2} \mu_a p_1 (p_2'  - \hat{p}_2') - \underbrace{\sum_{m_1, m_2} \hat{\mu}_a(p_1 - \hat{p}_1)(p_2' - \hat{p}_2')}_{SO_6} + \sum_{m_1, m_2} \hat{\mu}_a p_1(p_2' - \hat{p}_2') \\
    &= SO - \underbrace{\sum_{m_1, m_2} (\mu_a - \hat{\mu}_a)p_1(p_2' - \hat{p}_2')}_{SO_7}
\end{align*}

Thus we have shown that $P[\hat{\psi}_{M_1, a} - \psi_{M_1, a}]$ can be expressed by the sums of terms (i)-(vii), which are second-order in the nuisance estimation. The derivation for $P[\hat{\psi}_{M_1, a'} - \psi_{M_1, a'}]$ follows nearly identical steps due to the symmetry of the problem, yielding the result.

\subsection{Estimating the bounds}

We first consider $P[\hat{\phi}_{2,a} - \phi_{2,a}]$:

\begin{align}
    \label{eqn:b1} &= P\left[\frac{\mathds{1}(A = a)}{\hat{\pi}_a(X)}\hat{p}(M_1 \mid a, X)\hat{p}(M_2 \mid a', X) - \zeta_2(X)] \right. \\ 
    \nonumber&+\left. \frac{\mathds{1}(A = a)}{\hat{\pi}_a(X)}[\sum_{m_2}\hat{p}(M_1, m_2 \mid a, X)\hat{p}(m_2 \mid a', X) - \hat{\zeta}_2(X)] \right.\\
    \nonumber&+\left. \frac{\mathds{1}(A = a')}{\hat{\pi}_{a'}(X)}\sum_{m_1}[\hat{p}(m_1, M_2 \mid a, X)\hat{p}(m_1 \mid a, X) - \hat{\zeta}_2(X)] + \hat{\zeta}_2(X) - \zeta_2(X) \right] \\
    \label{eqn:b2} &= \mathbb{E}\left[\frac{\pi_a(X)}{\hat{\pi}_a(X)}\sum_{m_1, m_2}\hat{p}(m_1 \mid a, X)\hat{p}(m_2 \mid a', X)[p(m_1, m_2 \mid a, X) - \hat{p}(m_1, m_2 \mid a, X)] \right. \\ 
    \nonumber&+\left. \frac{\pi_a(X)}{\hat{\pi}_a(X)}[\sum_{m_2}\hat{p}(M_1, m_2 \mid a, X)\hat{p}(m_2 \mid a', X)[p(m_1 \mid a, X) - \hat{p}(m_1 \mid a, X)] \right.\\
    \nonumber&+\left. \frac{\pi_{a'}(X)}{\hat{\pi}_{a'}(X)}\sum_{m_1, m_2}(\hat{p}(m_1, M_2 \mid a, X)\hat{p}(m_1 \mid a, X)[p(m_2 \mid a', X) - \hat{p}(m_2 \mid a', X)]) + \hat{\zeta}_2(X) - \zeta_2(X) \right]
\end{align}
where \eqref{eqn:b2} follows via iterating expectations (and implicitly conditioning on $D_1^n$). For clarity, we again use the abbreviations noted above, as well as $p_{12} = p(m_1, m_2 \mid a, X)$ and $p_1' = p(m_1 \mid a', X)$, and again remove the outer expectation.
    
First, note that:

\begin{align}
     \label{eqn:b3a}\ref{eqn:b2} &=\underbrace{\left(\frac{\pi_a - \hat{\pi}_a}{\hat{\pi}_a}\right)\sum_{m_1, m_2} [p_{12} - \hat{p}_{12}]\hat{p}_1\hat{p}_2'}_{SO_1} +
    \underbrace{\left(\frac{\pi_a - \hat{\pi}_a}{\hat{\pi}_a}\right) \sum_{m_1, m_2}[p_1 - \hat{p}_1]\hat{p}_{12}\hat{p}_2'}_{SO_2} \\
    \label{eqn:b3b}&+\underbrace{\left(\frac{\pi_{a'} - \hat{\pi}_{a'}}{\hat{\pi}_{a'}}\right)\sum_{m_1, m_2}[p_2' - \hat{p}_2]\hat{p}_{12}\hat{p}_1}_{SO_3}  \\
    \label{eqn:b3c}&+ \underbrace{\sum_{m_1, m_2} [p_{12} - \hat{p}_{12}]\hat{p}_1\hat{p}_2' + \sum_{m_1, m_2}[p_1 - \hat{p}_1]\hat{p}_{12}\hat{p}_2'+ \sum_{m_1, m_2}[p_2' - \hat{p}_2']\hat{p}_{12}\hat{p}_1 + \hat{p}_1\hat{p}_2'\hat{p}_{12} - p_1p_2'p_{12}}_{R_1} 
\end{align}
    
Terms $SO_1$ through $SO_3$ are second-order. We consider $R_1$ and show that these term are second-order, proving the result.

\begin{align*}
    R_1 &= -\underbrace{\sum_{m_1, m_2} [p_{12} - \hat{p}_{12}][p_1p_2' - \hat{p}_1\hat{p}_2']}_{SO_4} - \underbrace{\sum_{m_1, m_2}\hat{p}_{12}[p_1 - \hat{p}_1][p_2' - \hat{p}_2']}_{SO_5} \\
    &+ \sum_{m_1, m_2} p_1p_2[p_{12} - \hat{p}_{12}] + \sum_{m_1, m_2}p_2'\hat{p}_{12}[p_1 - \hat{p}_1] \\
    &+ \sum_{m_1, m_2}[p_2' - \hat{p}_2']\hat{p}_{12}\hat{p}_1 + \sum_{m_1,m_2}[\hat{p}_1\hat{p}_2'\hat{p}_{12} - p_1p_2'p_{12}] \\
    &= SO_4 + SO_5
\end{align*}

Finally, notice that

\begin{align*}
    SO_4 &= -\sum_{m_1, m_2} [p_{12} - \hat{p}_{12}][p_1p_2' + \hat{p}_1p_2' - \hat{p}_1p_2' - \hat{p}_1\hat{p}_2'] \\
    &= -\sum_{m_1,m_2}[p_{12} - \hat{p}_{12}]p_2'[\hat{p}_1 - p_1] - \sum_{m_1,m_2}[p_{12} - \hat{p}_{12}]\hat{p}_1[\hat{p}_2' - p_2']
\end{align*}

This yields the final result. The proof for $P[\hat{\phi}_{2,a'} - \phi_{2,a'}]$ follows virtually identical steps so we omit it for brevity. 
 
We conclude by showing the result for $P[\hat{\phi}_{1, a}(Z) - \phi_{1, a}(Z)]$, noting that the derivation is analogous for $P[\hat{\phi}_{1,a'}(Z) - \phi_{1,a'}(Z)]$, noting that the derivation is analogous for the other terms. By the law of iterated expectations, we see that:

\begin{align*}
    &P[\hat{\phi}_{1, a}(Z) - \phi_{1, a}(Z)] \\
    &=\mathbb{E}[\underbrace{\frac{\pi_a}{\hat{\pi}_a}\sum_{m_1, m_2}(\mu_a - \hat{\mu}_a)p_{12}\hat{p}_1\hat{p}_2}_{(i)} + \underbrace{\frac{\pi_a}{\hat{\pi}_a}\sum_{m_1, m_2}\hat{\mu}_a(p_{12} - \hat{p}_{12})\hat{p}_1\hat{p}_2}_{(ii)} \\
    &+ \underbrace{\frac{\pi_a}{\hat{\pi}_a}\sum_{m_1, m_2}\hat{\mu}_{a}\hat{p}_{12}(p_1 - \hat{p}_1)\hat{p}_2' + \frac{\pi_{a'}}{\hat{\pi}_{a'}}\sum_{m_1, m_2}\hat{\mu}_a\hat{p}_{12}\hat{p}_1(p_2' - \hat{p}_2')}_{(iii)} \\
    &+ \underbrace{\left(\sum_{m_1, m_2}\hat{\mu}_a\hat{p}_{12}\hat{p}_1\hat{p}_2' - \mu_ap_{12}p_1p_2'\right)}_{(iv)_a + (iv)_b}]
\end{align*}

For simplicity we again leave the expectation as implied and show that the remaining expression is second-order. We can add and subtract the same terms to obtain the following:

\begin{align*}
    (i) + (iv)_{a} &= \underbrace{\frac{\pi_a}{\hat{\pi}_a}\sum_{m_1, m_2}(\mu_a - \hat{\mu}_a)(p_{12} - \hat{p}_{12})\hat{p}_1\hat{p}_2'}_{SO_1} + \frac{\pi_a}{\hat{\pi}_a}\sum_{m_1, m_2}(\mu_a - \hat{\mu}_a)\hat{p}_1\hat{p}_2'\hat{p}_{12} + \sum_{m_1, m_2}\hat{\mu}_a\hat{p}_1\hat{p}_2'\hat{p}_{12} \\
    &= SO + \underbrace{\left(\frac{\pi_a - \hat{\pi}_a}{\hat{\pi}_a}\right)\sum_{m_1, m_2}(\mu_a - \hat{\mu}_a)\hat{p}_1\hat{p}_2'\hat{p}_{12}}_{SO_2} + \sum_{m_1, m_2}[(\mu_a - \hat{\mu}_a)\hat{p}_1\hat{p}_2'\hat{p}_{12} + \hat{\mu}_a\hat{p}_1\hat{p}_2'\hat{p}_{12}] \\
    &= SO + \underbrace{\sum_{m_1, m_2}\mu_a\hat{p}_1\hat{p}_2'\hat{p}_{12}}_{(R_1)} \\
    (ii) &= \underbrace{-\frac{\pi_a}{\hat{\pi}_a}\sum_{m_1, m_2} \hat{\mu}_a(p_{12} - \hat{p}_{12})(p_1p_2' - \hat{p}_1\hat{p}_2')}_{SO_3} + \frac{\pi_a}{\hat{\pi}_a}\sum_{m_1,m_2}\hat{\mu}_a(p_{12}-\hat{p}_{12})\hat{p}_1\hat{p}_2' \\
    &= SO + \underbrace{\left(\frac{\pi_a - \hat{\pi}_a}{\hat{\pi}_a}\right)\sum_{m_1,m_2}\hat{\mu}_a(p_{12}-\hat{p}_{12})\hat{p}_1\hat{p}_2'}_{SO_4} + \sum_{m_1,m_2}\hat{\mu}_a(p_{12}-\hat{p}_{12})\hat{p}_1\hat{p}_2' \\
    &= SO + \underbrace{\sum_{m_1,m_2}\hat{\mu}_a(p_{12}-\hat{p}_{12})\hat{p}_1\hat{p}_2'}_{(R_2)} \\
    (iii) + (iv)_b &= \underbrace{\left(\frac{\pi_a - \hat{\pi}_a}{\hat{\pi}_a}\right)\left(\sum_{m_1, m_2}\hat{\mu}_{a}\hat{p}_{12}(p_1 - \hat{p}_1)\hat{p}_2'\right)}_{SO_5} + \underbrace{\left(\frac{\pi_{a'} - \hat{\pi}_{a'}}{\hat{\pi}_{a'}}\right)\sum_{m_1, m_2}\hat{\mu}_a\hat{p}_{12}\hat{p}_1(p_2' - \hat{p}_2')}_{SO_6} \\
    &+ \sum_{m_1,m_2}\hat{\mu}_{a}\hat{p}_{12}(p_1 - \hat{p}_1)\hat{p}_2' + \sum_{m_1, m_2}\hat{\mu}_a\hat{p}_{12}\hat{p}_1(p_2' - \hat{p}_2') - \sum_{m_1,m_2}\mu_ap_{12}p_1p_2'\\
    &= SO + \underbrace{\sum_{m_1,m_2}\hat{\mu}_{a}\hat{p}_{12}(p_1 - \hat{p}_1)\hat{p}_2' + \sum_{m_1, m_2}\hat{\mu}_a\hat{p}_{12}\hat{p}_1(p_2' - \hat{p}_2') - \sum_{m_1,m_2}\mu_ap_{12}p_1p_2'}_{(R_3)}
\end{align*}

It remains to show that the sum of ($R_1$)-($R_3$) are second-order:

\begin{align*}
    R_1 + R_2 + R_3 &= \underbrace{\sum_{m_1, m_2}\hat{\mu}_a(p_{12} - \hat{p}_{12})\hat{p}_1\hat{p}_2'  + \sum_{m_1,m_2}\hat{\mu}_{a}\hat{p}_{12}(p_1 - \hat{p}_1)\hat{p}_2'}_{(a)} + \underbrace{\sum_{m_1,m_2}\hat{\mu}_{a}\hat{p}_{12}(p_2' - \hat{p}_2')\hat{p}_1}_{(b)} \\
    &\underbrace{- \sum_{m_1,m_2}\mu_ap_{12}p_1p_2'+\sum_{m_1,m_2}\mu_a\hat{p}_{12}\hat{p}_1\hat{p}_2'}_{(c)} \\
\end{align*}

Consider $(a)$:

\begin{align*}
    (a) &=  \underbrace{-\sum_{m_1,m_2}\hat{\mu}_a(p_{12} - \hat{p}_{12})(p_1 - \hat{p}_1)\hat{p}_2'}_{SO_7} + \sum_{m_1, m_2}\hat{\mu}_ap_1\hat{p}_2'(p_{12} - \hat{p}_{12}) + \sum_{m_1,m_2}\hat{\mu}_{a}\hat{p}_{12}(p_1 - \hat{p}_1)\hat{p}_2' \\
    &= SO + \underbrace{\sum_{m_1,m_2}\hat{\mu}_a\hat{p}_2'(p_1p_{12} - \hat{p}_1\hat{p}_{12})}_{R_{1a}} \\
    &= SO + R_{1a}
\end{align*}

Next consider

\begin{align*}
    R_{1a} + (b) &= \sum_{m_1,m_2}\hat{\mu}_a\hat{p}_2'(p_1p_{12} - \hat{p}_1\hat{p}_{12}) + \sum_{m_1,m_2}\hat{\mu}_{a}\hat{p}_{12}(p_2' - \hat{p}_2')\hat{p}_1 \\
    &= \underbrace{- \sum_{m_1,m_2}\hat{\mu}_a(p_2' - \hat{p}_2')(p_1p_{12} - \hat{p}_1\hat{p}_{12})}_{SO_8} + \sum_{m_1,m_2}\hat{\mu}_ap_1p_2'p_{12} - \sum_{m_1,m_2}\hat{\mu}_a\hat{p}_1p_2'\hat{p}_{12} \\
    &+ \sum_{m_1,m_2}\hat{\mu}_{a}\hat{p}_1p_2'\hat{p}_{12} - \sum_{m_1,m_2}\hat{\mu}_{a}\hat{p}_{12}\hat{p}_2'\hat{p}_1 \\
    &= SO + \underbrace{\sum_{m_1,m_2}\hat{\mu}_a(p_1p_2'p_{12} - \hat{p}_1\hat{p}_2'\hat{p}_{12})}_{R_{2a}}
\end{align*}

Adding the final remaining terms gives us:

\begin{align*}
    R_{2a} + (c) &= \underbrace{-\sum_{m_1, m_2}(\mu - \hat{\mu})(p_1p_2'p_{12}-\hat{p}_1\hat{p}_2'\hat{p}_{12})}_{SO_9}
\end{align*}

Finally, notice that terms that are second-order in some function times the product of densities (e.g. $S0_8$) are also second-order in the sum of the function times each density individually. For example, consider the second-order term:

\begin{align*}
    &\sum_{m_1,m_2}\hat{\mu}_a(p_2' - \hat{p}_2')(p_1p_{12} - \hat{p}_1\hat{p}_{12}) \\
    &\sum_{m_1,m_2}\hat{\mu}_a(p_2' - \hat{p}_2')(p_1[p_{12} - \hat{p}_{12}] + \hat{p}_{12}[p_1 - \hat{p}_1]) \\
\end{align*}
Similarly, we obtain that

\begin{align*}
    SO_9 &= -\sum_{m_1,m_2}(\mu_a - \hat{\mu}_a)[p_2p_{12}(p_1 - \hat{p}_1) + \hat{p}_1p_{12}(p_2' - \hat{p}_2') + \hat{p}_1\hat{p}_2'(p_{12} - \hat{p}_{12})]
\end{align*}

Collecting all the SO terms and taking expectations gives the result.

\end{document}